\keywords{Cryptography, composable security, category theory}
\newcommand{\cat}[1]{\mathbf{#1}}
\newcommand{\id}[1][]{\ensuremath{\mathrm{id}_{#1}}}
\newcommand{\tuple}[1]{\mathopen{\langle}#1\mathclose{\rangle}}
\newcommand{\ie}{\text{i.e.,}\xspace}
\newcommand{\eg}{\text{e.g.}\xspace}
\newcommand{\N}{\mathbb{N}}
\newcommand{\R}{\mathbb{R}}
\newcommand{\Z}{\mathbb{Z}}
\DeclareMathOperator{\cod}{cod}
\DeclareMathOperator{\dom}{dom}
\newcommand{\Met}{\cat{Met}}
\newcommand{\Set}{\cat{Set}}
\newcommand{\Equ}{\cat{Equ}}
\newcommand{\Eff}{\cat{Eff}}
\newcommand{\A}{\mathcal{A}}
\newcommand{\CC}{\cat{C}}
\newcommand{\DD}{\cat{D}}
\newcommand{\res}{\int}
\newcommand{\resmet}{\res^\Met}
\newcommand{\reseq}{\res^\Equ}
\newcommand{\CF}{\CC_F}
\newcommand{\vect}[1]{\bar{#1}}
\newcommand{\f}{\vect{f}}
\newcommand{\rand}{\$} 
\newcommand{\sem}[1]{\llbracket {#1}\rrbracket}
\newcommand{\inprod}[2]{\langle #1 \,,\, #2 \rangle}
\DeclareMathOperator{\ncomb}{n-comb}
\DeclareMathOperator{\ncombcart}{n-comb!}
\tikzset{>={To[length=2.5pt,width=4pt]}}
\newenvironment{pic}[1][]
{\begin{aligned}\begin{tikzpicture}[font=\tiny,#1]}
{\end{tikzpicture}\end{aligned}}
\def\thickness{0.7pt}
    \gdef\node@@on@layer{%
      \setbox\tikz@tempbox=\hbox\bgroup\pgfonlayer{#1}\unhbox\tikz@tempbox\endpgfonlayer\pgfsetlinewidth{\thickness}\egroup}
\def\node@on@layer{\aftergroup\node@@on@layer}
\tikzstyle{braid}=[double=black, line width=3*\thickness, double distance=\thickness, draw=white, text=black]
\tikzset{every picture/.style={line width=\thickness, draw=black}}
\tikzstyle{pure}=[line width=.7pt]
\tikzstyle{string}=[line width=\thickness]
\tikzstyle{scalar}=[circle, inner sep=0pt, minimum width=15pt, draw, line width=\thickness, fill=white]
\tikzstyle{dot}=[circle, draw=black, fill=black!25, inner sep=.4ex, line width=\thickness, node on layer=foreground]
\tikzstyle{blackdot}=[circle, draw=black, fill=black!75, inner sep=.4ex, line width=\thickness, node on layer=foreground, text=white]
\tikzstyle{whitedot}=[circle, draw=black, fill=white, inner sep=.4ex, line width=\thickness, node on layer=foreground]
\tikzstyle{mixedmorphism}=[morphism, minimum width=30pt, minimum height=16pt, draw, font=\small, inner sep=0pt, fill=white, line width=\thickness,rounded corners=1ex]
\tikzstyle{triangle} = [regular polygon, regular polygon sides=3, draw=black, fill=black!20,scale=0.4, node on layer=foreground]
\tikzstyle{whitetriangle}=[triangle, fill=white]
\tikzstyle{greytriangle}=[triangle, fill=black!20]
\tikzstyle{darkgreytriangle}=[triangle, fill=black!50]
\tikzstyle{blacktriangle}=[triangle, fill=black]
\tikzstyle{invertedtriangle} = [triangle,scale=-1]
\tikzstyle{whiteinvertedtriangle}=[invertedtriangle, fill=white]
\tikzstyle{greyinvertedtriangle}=[invertedtriangle, fill=black!20]
\tikzstyle{darkgreyinvertedtriangle}=[invertedtriangle, fill=black!50]
\tikzstyle{blackinvertedtriangle}=[invertedtriangle, fill=black]
\tikzset{functor1/.style={gray!50}}
\tikzset{functor2/.style={gray!50}}
\tikzstyle{thick}=[line width=\thickness]
\tikzstyle{tiny}=[font=\tiny]
\tikzset{circlelabel/.style={draw, thick, circle, inner sep=-5pt,
 fill=white, minimum width=14pt, fill opacity=1, node on layer=foreground, font=\scriptsize}}
\tikzset{shade 1 transparent/.style={fill=black!20, fill opacity=0.8}}
\tikzset{shade 2 transparent/.style={fill=black!35, fill opacity=0.8}}
\tikzset{shade 3 transparent/.style={fill=black!50, fill opacity=0.8}}
\tikzset{shade 4 transparent/.style={fill=black!65, fill opacity=0.8}}
\tikzset{shade 1/.style={fill=black!16, fill opacity=1}}
\tikzset{shade 2/.style={fill=black!28, fill opacity=1}}
\tikzset{shade 3/.style={fill=black!40, fill opacity=1}}
\tikzset{shade 4/.style={fill=black!52, fill opacity=1}}
\def\strarr{line width=0.7pt, length=4pt, width=5pt, color=black}
\tikzset{arrow/.style={decoration={
    markings,
    mark=at position #1 with \arrow{>[\strarr]}},
    postaction=decorate},
    reverse arrow/.style={decoration={
    markings,
    mark=at position #1 with {{\arrow{<[\strarr]}}}},
    postaction=decorate}
}
\tikzset{overbrace/.style={
     decoration={brace},
     decorate}
}
\tikzset{underbrace/.style={
     decoration={brace, mirror},
     decorate}
}
\newif\ifblack\pgfkeys{/tikz/black/.is if=black}
\newif\ifwedge\pgfkeys{/tikz/wedge/.is if=wedge}
\newif\ifvflip\pgfkeys{/tikz/vflip/.is if=vflip}
\newif\ifhflip\pgfkeys{/tikz/hflip/.is if=hflip}
\newif\ifhvflip\pgfkeys{/tikz/hvflip/.is if=hvflip}
\newif\ifconnectsw\pgfkeys{/tikz/connect sw/.is if=connectsw}
\newif\ifconnectse\pgfkeys{/tikz/connect se/.is if=connectse}
\newif\ifconnectn\pgfkeys{/tikz/connect n/.is if=connectn}
\newif\ifconnects\pgfkeys{/tikz/connect s/.is if=connects}
\newif\ifconnectnw\pgfkeys{/tikz/connect nw/.is if=connectnw}
\newif\ifconnectne\pgfkeys{/tikz/connect ne/.is if=connectne}
\newif\ifconnectnwf\pgfkeys{/tikz/connect nw >/.is if=connectnwf}
\newif\ifconnectnef\pgfkeys{/tikz/connect ne >/.is if=connectnef}
\newif\ifconnectswf\pgfkeys{/tikz/connect sw >/.is if=connectswf}
\newif\ifconnectsef\pgfkeys{/tikz/connect se >/.is if=connectsef}
\newif\ifconnectnf\pgfkeys{/tikz/connect n >/.is if=connectnf}
\newif\ifconnectsf\pgfkeys{/tikz/connect s >/.is if=connectsf}
\newif\ifconnectnwr\pgfkeys{/tikz/connect nw </.is if=connectnwr}
\newif\ifconnectner\pgfkeys{/tikz/connect ne </.is if=connectner}
\newif\ifconnectswr\pgfkeys{/tikz/connect sw </.is if=connectswr}
\newif\ifconnectser\pgfkeys{/tikz/connect se </.is if=connectser}
\newif\ifconnectnr\pgfkeys{/tikz/connect n </.is if=connectnr}
\newif\ifconnectsr\pgfkeys{/tikz/connect s </.is if=connectsr}
\tikzset{keylengthnw/.initial=\connectheight}
\tikzset{keylengthn/.initial =\connectheight}
\tikzset{keylengthne/.initial=\connectheight}
\tikzset{keylengthsw/.initial=\connectheight}
\tikzset{keylengths/.initial =\connectheight}
\tikzset{keylengthse/.initial=\connectheight}
\tikzset{connect nw length/.style={connect nw=true, keylengthnw={#1}}}
\tikzset{connect n length/.style ={connect n =true, keylengthn ={#1}}}
\tikzset{connect ne length/.style={connect ne=true, keylengthne={#1}}}
\tikzset{connect sw length/.style={connect sw=true, keylengthsw={#1}}}
\tikzset{connect s length/.style ={connect s =true, keylengths ={#1}}}
\tikzset{connect se length/.style={connect se=true, keylengthse={#1}}}
\tikzset{connect nw < length/.style={connect nw <=true, keylengthnw={#1}}}
\tikzset{connect n < length/.style ={connect n <=true,  keylengthn ={#1}}}
\tikzset{connect ne < length/.style={connect ne <=true, keylengthne={#1}}}
\tikzset{connect sw < length/.style={connect sw <=true, keylengthnw={#1}}}
\tikzset{connect s < length/.style ={connect s <=true,  keylengths ={#1}}}
\tikzset{connect se < length/.style={connect se <=true, keylengthse={#1}}}
\tikzset{connect nw > length/.style={connect nw >=true, keylengthnw={#1}}}
\tikzset{connect n > length/.style ={connect n >=true,  keylengthn ={#1}}}
\tikzset{connect ne > length/.style={connect ne >=true, keylengthne={#1}}}
\tikzset{connect sw > length/.style={connect sw >=true, keylengthsw={#1}}}
\tikzset{connect s > length/.style ={connect s >=true,  keylengths ={#1}}}
\tikzset{connect se > length/.style={connect se >=true, keylengthse={#1}}}
\newlength\morphismheight
\newlength\wedgewidth
\newlength\minimummorphismwidth
\newlength\stateheight
\newlength\minimumstatewidth
\newlength\connectheight
\tikzset{width/.initial=\minimummorphismwidth}
\tikzset{colour/.initial=white}
  \let\thickness=\pgfmathresult
\tikzset{diredge/.style={decoration={
  markings,
  mark=at position 0.525 with {\arrow{#1}}},postaction={decorate}}}
\tikzset{
    diredge/.default=>
}
\tikzset{diredgestart/.style={decoration={
  markings,
  mark=at position 4pt with {\arrow{#1}}},postaction={decorate}}}
\tikzset{
    diredgestart/.default=<
}
\tikzset{diredgeend/.style={decoration={
  markings,
  mark=at position 1 with {\arrow{#1}}},postaction={decorate}}}
\tikzset{
    diredgeend/.default=>
}
\tikzset{forward arrow style/.style={every to/.style, decoration={
    markings,
    mark=at position 0.5*\pgfdecoratedpathlength+2pt with \arrow{>[\strarr]}},
    postaction=decorate}}
\tikzset{reverse arrow style/.style={every to/.style, decoration={
    markings,
    mark=at position 0.5*\pgfdecoratedpathlength+2pt with \arrow{<[\strarr]}},
    postaction=decorate}}
\tikzset{morphism/.style={morphismshape, node on layer=foreground}}
\tikzset{dashedmorphism/.style={dashedmorphismshape, node on layer=foreground}}
\newcommand{\tinymult}[1][dot]{
\smash{\raisebox{-1pt}{\hspace{-2pt}\ensuremath{\begin{pic}[scale=0.33,yscale=-1]
    \node (0) at (0,0) {};
    \node[#1, inner sep=1.5pt] (1) at (0,0.55) {};
    \node (2) at (-0.5,1) {};
    \node (3) at (0.5,1) {};
    \draw (0.center) to (1.center);
    \draw (1.center) to [out=left, in=down, out looseness=1.5] (2.center);
    \draw (1.center) to [out=right, in=down, out looseness=1.5] (3.center);
\end{pic}
}\hspace{-3pt}}}}
\newcommand{\tinyunit}[1][dot]{
\smash{\raisebox{-1pt}{\ensuremath{\hspace{0pt}\begin{pic}[scale=0.33,yscale=-1]
        \node (0) at (0,0) {};
        \node (1) at (0,1) {};
        \node[#1, inner sep=1.5pt] (d) at (0,0.55) {};
        \draw (0.center) to (d.center);
    \end{pic}
    \hspace{-1pt}}}}}
\newcommand{\tinyhandle}[1][dot]{\ensuremath{\smash{\raisebox{-1pt}{\ensuremath{\hspace{-2pt}\begin{pic}[scale=0.33]
        \node (0) at (0,0) {};
        \node[dot, inner sep=1.0pt] (1) at (0,0.3) {};
        \node[dot, inner sep=1.0pt] (2) at (0,0.7) {};
        \node (3) at (0,1) {};
        \draw (0.center) to (1.center);
        \draw (2.center) to (3.center);
        \draw[in=180, out=180, looseness=2] (1.center) to (2.center);
        \draw[in=0, out=0, looseness=2] (1.center) to (2.center);
\end{pic}\hspace{-1pt}}}}}}
\newcommand{\tinycup}{\begin{pic}[scale=0.17]
    \draw (0,0) to[out=-90,in=-90,looseness=2] (1.5,0);
\end{pic}}
\newcommand{\tinyground}[1][ground]{\begin{pic}[scale=0.4]
    \node[#1, scale=0.6] (1) at (0,0.4) {};
    \draw [pure] (1.south) to +(0,-.3);
\end{pic}
}
\tikzset{arrow/.pic={
    \path[pic actions, decoration={ markings,
      mark=at position 1 with {\arrow{>[\strarr]}}
    },
    postaction={decorate}] (0,1pt) -- (0,2pt);
},double arrow/.pic={
    \path[pic actions, decoration={ markings,
      mark=at position \pgfdecoratedpathlength-3pt with {\arrow{>[\strarr]}},
      mark=at position \pgfdecoratedpathlength with {\arrow{>[\strarr]}}
    },
    postaction={decorate}] (0,-6pt) -- (0,4pt);
},
  reverse arrow/.pic={
    \path[pic actions, decoration={ markings,
      mark=at position 1 with {\arrow{<[\strarr]}}
    },
    postaction={decorate}] (0,1pt) -- (0,2pt);
}
}
\tikzset{surface picture/.style={xscale={0.6}, yscale=0.8, line width=\thickness}}
\tikzset{three dimensional picture/.style={}}
\tikzset{morphismtwocell/.style={morphism, width=0.5cm, connect ne length=0cm, connect se length=0cm, connect nw length=0cm, connect sw length=0cm}}
\tikzset{nmorphismtwocell/.style={draw, minimum width=0.8cm, connect ne length=0cm, connect se length=0cm, connect nw length=0cm, connect sw length=0cm}}
\tikzset{dashback/.style={black!40}}
\tikzset{shade 1 local/.style={shade 1}}
\tikzset{shade 2 local/.style={shade 2}}
\newcommand{\multcolor}{black!25}
\newcommand{\copycolor}{white}
\begin{document}

\title[Categorical composable cryptography: Extended Version]{Categorical composable cryptography:\texorpdfstring{\\}{ }extended version\rsuper*}
\titlecomment{{\lsuper*}This is an extended version of a FoSSaCS 2022 conference
paper~\cite{BK22}. This work was supported by the Air Force Office of Scientific
Research under award number FA9550-20-1-0375, Canada’s NFRF and NSERC, an
Ontario ERA, and the University of Ottawa’s Research Chairs program.}

\thanks{We wish to thank Ziad Chaoui, Diana Kessler, Riley Shahar, Fabian Wiesner and anonymous reviewers for useful comments.} 

\author[A.~Broadbent]{Anne Broadbent\lmcsorcid{0000-0003-1911-0093}}
\author[M.~Karvonen]{Martti Karvonen\lmcsorcid{0000-0002-8919-343X}}

\address{Department of Mathematics and Statistics, University of Ottawa, Canada}
\email{abroadbe@uottawa.ca, martti.karvonen@uottawa.ca}  

\begin{abstract}
\noindent We formalize the simulation paradigm of cryptography in terms of category theory and show that protocols secure against abstract attacks form a symmetric monoidal category, thus giving an abstract model of composable security definitions in cryptography. Our model is able to incorporate computational security, set-up assumptions and various attack models such as colluding or independently acting subsets of adversaries in a modular, flexible fashion. We conclude by using string diagrams to rederive the security of the one-time pad, correctness of Diffie-Hellman key exchange and no-go results concerning the limits of bipartite and tripartite cryptography, ruling out e.g., composable commitments and broadcasting. On the way, we exhibit two categorical constructions of resource theories that might be of independent interest: one capturing resources shared among multiple parties and one capturing resource conversions that succeed asymptotically.

This is a corrected version of the paper \url{https://arxiv.org/abs/2208.13232} published originally on December 18, 2023. 
\end{abstract}

\maketitle

\section{Introduction}

Modern cryptographic protocols are complicated algorithmic entities, and their security analyses are often no simpler than the protocols themselves. Given this complexity, it would be highly desirable to be able to design protocols and reason about them compositionally, \ie by breaking them down into smaller constituent parts. In particular, one would hope that combining protocols proven secure results in a secure protocol without need for further security proofs. However, this is not the case for stand-alone security notions that are common in cryptography. To illustrate such failures of composability, let us consider the history of quantum key distribution (QKD), as recounted in~\cite{PR14arxiv}: QKD was originally proposed in the 80s~\cite{BB84}. The first security proofs against unbounded adversaries followed a decade later~\cite{May96,BBB+00,SP00,May01}. However, since composability was originally not a concern, it was later realized that the original security definitions did not provide a good enough level of security~\cite{KRBM07}---they didn't guarantee security if the keys were to be actually used, since even a partial leak of the key would compromise the rest. The story ends on a positive note, as eventually a new security criterion was proposed, together with stronger proofs~\cite{Ren05,BHL+05}.

In this work we initiate a categorical study of composable security definitions in cryptography. In the viewpoint developed here one thinks of cryptography as a resource theory: cryptographic functionalities (e.g.~secure communication channels) are viewed as resources and cryptographic protocols let one transform some starting resources to others. For instance, one can view the one-time-pad as a protocol that transforms an authenticated channel and a shared secret key into a secure channel. For a given protocol, one can then study whether it is secure against some (set of) attack model(s), and protocols secure against a fixed set of models can always be composed sequentially and in parallel.

This is in fact the viewpoint taken in constructive cryptography~\cite{Mau11}, which also develops the one-time-pad example above in more detail. However~\cite{Mau11} does not make a formal connection to resource theories as usually understood, whether as in quantum physics~\cite{horodecki:resource,chitambar:resource}, or more generally as defined in order theoretic~\cite{Fritz2015} or categorical~\cite{CFS16} terms. Instead, constructive cryptography is usually combined with abstract cryptography~\cite{MR11} which is formalized in terms of a novel algebraic theory of systems~\cite{MMP+18}.

Our work can be seen as a particular formalization of the ideas behind constructive cryptography, or alternatively as giving a categorical account of the real-world-ideal-world paradigm (also known as the simulation paradigm~\cite{L17}), which underlies more concrete frameworks for composable security, such as universally composable cryptography~\cite{Can01} and others~\cite{PW00,BPW04,BPW07,MT13,HS15,LHM19,KTR20}. We will discuss these approaches and abstract and constructive cryptography in more detail in Section~\ref{sec:relatedwork}

Our long-term goal is to enable cryptographers to reason about composable security at the same level of formality as stand-alone security, \emph{without having to fix all the details of a machine model nor having to master category theory}. Indeed, our current results already let one define multipartite protocols and security against arbitrary subsets of malicious adversaries \emph{in any symmetric monoidal category $\CC$}. Thus, as long as one's model of interactive computation results in a symmetric monoidal category, or more informally, one is willing to use pictures such as Figure~\ref{fig:attackonprod} to depict connections between computational processes without further specifying the order in which the picture was drawn, one can use the simulation paradigm to reason about multipartite security against malicious participants composably---and specifying finer details of the computational model is only needed to the extent that it affects the validity of one's argument. Moreover, as our attack models and composition theorems are fairly general, we hope that more refined models of adversaries can be incorporated.

We now highlight our contributions to cryptography:
\begin{itemize}
  \item We show how to adapt resource theories as categorically formulated~\cite{CFS16} in order to reason abstractly about \emph{secure} transformations between resources. This is done in Section~\ref{sec:crypto} by formalizing the simulation paradigm in terms of an abstract attack model (Definition~\ref{def:attack}), designed to be general enough to capture standard attack models of interest (and more) while still structured enough to guarantee composability. This section culminates in Corollary~\ref{cor:simultaneoussafety}, which shows that for any fixed set of attack models, the class of protocols secure against each of them results in a symmetric monoidal category. In Theorem~\ref{thm:perfectlifting} we observe that under suitable conditions, images of secure protocols under monoidal functors remain secure, which gives an abstract variant of the lifting theorem~\cite[Theorem 15]{Unr10} that states that perfectly UC-secure protocols are quantum UC-secure.
  \item We adapt this framework to model \emph{computational security} in Section~\ref{sec:extensions} in two ways: either by replacing equations with an equivalence relation, abstracting the idea of computational indistinguishability, or by working with a notion of distance. In the case of a distance, one can then either explicitly bound the distance between desired and actually achieved behavior, or work with sequences of protocols that converge to the target in the limit: the former models working in the finite-key regimen~\cite{TLGR12} and the latter models the kinds of asymptotic security and complexity statements that are common in cryptography. In the former case we show that errors compose additively in Lemma~\ref{lem:epsilonsafe}, and in Theorem~\ref{thm:metriccomposition} and in Corollary~\ref{cor:metricsimultaneoussafety} we show that protocols that are correct in the limit can be composed at will.
  \item We then apply the framework developed to study bipartite and tripartite cryptography. We begin by giving a \emph{purely pictorial} proof of the security of the one-time pad in Section~\ref{sec:otp}, valid for any Hopf algebra in any symmetric monoidal category. We then discuss the Diffie-Hellman key exchange in Section~\ref{sec:dhke}. In Section~\ref{sec:no-go}, we reprove the no-go-theorems of~\cite{PR08,MR11,MMP+18} concerning two-party commitments (resp. three-party broadcasting) in our setting, and reinterpret them as limits on what can be achieved securely in any compact closed category (resp. symmetric monoidal category). The key steps of the proof are done graphically, thus opening the door for cryptographers to use such pictorial representations as rigorous tools rather than merely as illustrations.
  \item We conclude by discussing choice of a model in Section~\ref{sec:models} and further questions in Section~\ref{sec:outlook}.
\end{itemize}
Moreover, we discuss some categorical constructions capturing aspects of resource theories appearing in the physics literature. These contributions may be relevant for further categorical studies on resource theories, independently of their usage here.
  \begin{itemize}
    \item In~\cite{CFS16} it is observed that many resource theories arise from an inclusion $\cat{C}_F\hookrightarrow\cat{C}$ of free transformations into a larger monoidal category, by taking the resource theory of states. We observe that this amounts to applying the monoidal Grothendieck construction~\cite{moeller:monoidalgrothendieck} to the functor $\CF\to\CC\xrightarrow{\hom(I,-)}\cat{Set}$. This suggests applying this construction more generally to the composite of monoidal functors $F\colon\cat{D}\to\cat{C}$ and  $R\colon \cat{C}\to\Set$.
    \item In Example~\ref{ex:n-partite} we note that choosing $F$ to be the $n$-fold monoidal product $\CC^n\to\CC$ captures resources shared by $n$ parties and $n$-partite transformations between them.
    \item In Section~\ref{sec:metric} we model categorically situations where there is a notion of distance between resources, and instead of exact resource conversions one either studies approximate transformations or sequences of transformations that succeed in the limit.
    \item In Section~\ref{sec:set-up} we discuss a variant of a construction on monoidal categories, used in special cases in~\cite{fongetal:backprop} and discussed in more detail in~\cite{cruttwell2022categorical}, that allows one to declare some resources to be free and thus enlarge the set of possible resource conversions.
  \end{itemize}

\subsection{Related work}\label{sec:relatedwork}

We have already mentioned that cryptographers have developed a plethora of frameworks for composable security, such as universally composable cryptography~\cite{Can01}, reactive simulatability~\cite{PW00,BPW04,BPW07} and others~\cite{MT13,HS15,LHM19,KTR20}. Moreover, some of these frameworks have been adapted to the quantum setting~\cite{BM04arxiv,Unr10,MR09}. One might hence be tempted to think that the problem of composability in cryptography has been solved. However, it is fair to say that most mainstream cryptography is not formulated composably and that composable cryptography has yet to realize its full potential. Moreover, this proliferation of frameworks should be taken as evidence of the continued importance of the issue, and is in fact reflected by the existence of a recent Dagstuhl seminar on this matter~\cite{CKL+19}. Indeed, the aforementioned frameworks mostly consist of setting up fairly detailed models of interacting machines, which as an approach suffers from two drawbacks:
\begin{itemize}
  \item In order to be more realistic, the detailed models are often complicated, both to reason in terms of and to define, thus making practicing cryptographers less willing to use them. Perhaps more importantly, it is not always clear whether the results proven in a particular model apply more generally for other kinds of machines, whether those of a competing framework or those in the real world. It is true that the choice of a concrete machine model does affect what can be securely achieved---for instance, quantum cryptography differs from classical cryptography and similarly classical cryptography behaves differently in synchronous and asynchronous settings~\cite{BCG93,KMTZ13}. Nevertheless, one might hope that composable cryptography could be studied at a similar level of formality as complexity theory, where one rarely worries about the number of tapes in a Turing machine or of other low-level details of machine models.
  \item Changing the model slightly (to \eg model different kinds of adversaries or to incorporate a different notion of efficiency) often requires reproving ``composition theorems'' of the framework or at least checking that the existing proof is not broken by the modification.
\end{itemize}
In contrast to frameworks based on detailed machine models, there are two closely related top-down approaches to cryptography: constructive cryptography~\cite{Mau11} and its cousin abstract cryptography~\cite{MR11}. We are indebted to both of these approaches, and indeed our framework could be seen as formalizing the key idea of constructive cryptography---namely, cryptography as a resource theory---and thus occupying a similar space as abstract cryptography. A key difference is that constructive cryptography is usually instantiated in terms of abstract cryptography~\cite{MR11}, which in turn is based on a novel algebraic theory of systems~\cite{MMP+18}. However, our work is not merely a translation from this theory to categorical language, as there are important differences and benefits that stem from formalizing cryptography in terms of a well-established and well-studied algebraic theory of systems---that of (symmetric) monoidal categories:
  \begin{itemize}
    \item  The fact that cryptographers wish to compose their protocols \emph{sequentially and in parallel} strongly suggests using \emph{monoidal categories}, that have these composition operations as primitives. In our framework, protocols secure against a fixed set of attack models results in a symmetric monoidal category. In contrast, the algebraic theory of systems~\cite{MMP+18} on which abstract cryptography is based on takes parallel composition and internal wiring as its primitives. This design choice results in some technical kinks and tangles that are natural with any novel theory but have already been smoothed out in the case of category theory. For instance, in the algebraic theory of systems of~\cite{MMP+18} the parallel composition is a partial operation and in particular the parallel composite of a system with itself is never defined\footnote{While the suggested fix is to assume that one has ``copies'' of the same system with disjoint wire labels, it is unclear how one recognizes or even defines \emph{in terms of the system algebra} that two distinct systems are copies of each other.} and the set of wires coming out of a system is fixed once and for all\footnote{Indeed, while~\cite{portmann:causal} manages to bundle and unbundle ports along isomorphism when convenient, it seems like the chosen technical foundation makes this more of a struggle than it should be.}. In contrast, in a monoidal category parallel composition is a total operation and whether one draws a box with $n$ output wires of types $A_1,\dots A_n$ or single output wire of type $\bigotimes_{i=1}^n A_i$ is a matter of convenience. Technical differences such as these make a direct formal comparison or translation between the frameworks difficult, even if informally and superficially there are similarities.
    \item We do not abstract away from an attacker model, but rather make it an explicit part of the formalism that can be modified without worrying about composability. This makes it possible to consider and combine very easily different security properties, and in particular paves the way to model attackers with limited powers such as honest-but-curious adversaries. In our framework, one can first fix a protocol transforming some resource to another one, and then discuss whether this transformation is secure against different attack models. In contrast, in abstract cryptography a cryptographic resource is a tuple of functionalities, one for each set of dishonest parties, and thus has no prior existence before fixing the attack model. This makes the question ``what attack models is this protocol secure against?'' difficult to formalize.
    \item As category theory is de facto the lingua franca between several subfields of mathematics and computer science, elucidating the categorical structures present in cryptography opens up the door to further connections between cryptography and other fields. For instance, game semantics readily gives models of interactive, asynchronous and probabilistic (or quantum) computation~\cite{winskel:game,clairambault:gamesforquantum,clairambaultetal:gamesforquantum2} in which our theory can be instantiated, and thus further paves the way for programming language theory to inform cryptographic models of concurrency.
    \item Category theory comes with existing theory, results and tools that can readily be applied to questions of cryptographic interest. In particular the graphical calculi of symmetric monoidal and compact closed categories~\cite{Sel10} enables one to rederive impossibility results shown in~\cite{PR08,MR11,MMP+18} purely pictorially. In fact, such pictures were already used as heuristic devices that illuminate the official proofs in~\cite{PR08,MMP+18}, and viewing these pictures categorically lets us promote them from mere illustrations to rigorous yet intuitive proofs. Indeed, in~\cite[Footnote 27]{MR11} the authors suggest moving from a 1-dimensional symbolic presentation to a 2-dimensional one, and this is exactly what the graphical calculus already achieves.
  \end{itemize}

The approaches above result in a framework where security is defined so as to guarantee composability. In contrast, approaches based on various protocol logics~\cite{DMP01,DMP03,DDMP03b,DDMP03a,DDMP05,DDMR07} aim to characterize situations where composition can be done securely, even if one does not use composable security definitions throughout. As these approaches are based on process calculi, they are categorical under the hood~\cite{pavlovic1997,Mifsudetal:controlstructures} even if not overtly so. There is also earlier work explicitly discussing category theory in the context of cryptography~\cite{breiner:graphicaldicrypto,coecke:graphicalqkd,sunwang:graphicalbc,breiner:selftesting,heunen:qkd,hillebrand:superdense,coecke:environment,kissinger2017picture,stay:crypto,dusko:crypto,hines:diagrammaticrypto,pavlovic2012tracing}, but they concern stand-alone security of particular (kinds of) cryptographic protocols, rather than categorical aspects of composable security definitions.

\section{Background on monoidal categories and string diagrams}\label{sec:background}

We assume that the reader is familiar with category theory in general and with monoidal and compact closed categories in particular, so we will recall the main concepts very briefly, mostly to explain the notation and string diagrams used. General references for category theory include~\cite{maclane:categories,awodey:categorytheory,borceux:vol1,borceux:vol2,riehl:categorytheory,leinster:basicCT} and string diagrams are surveyed in~\cite{Sel10}. However, a working cryptographer might find it easier to consult texts which are written with some applications in mind and introduce string diagrams concurrently with categories, such as~\cite{coecke2010categories,FS19,heunenvicary:categories}.

Let $\CC$ be a symmetric monoidal category (SMC). Roughly speaking, this means that we have a class of objects $A,B,C,\dots$, and a class of morphisms $f,g,h,\dots$. We also have functions $\dom$ and $\cod$ that give us the domain and codomain of morphisms, and we write $f\colon A\to B$ to express that $A=\dom(f)$ and $B=\cod(f)$. Morphisms can be composed sequentially, \ie whenever $f\colon A\to B$ and $g\colon B\to C$ there is a morphism $g\circ f=gf\colon A\to C$. In addition, there is a monoidal product $\otimes$ on objects and morphisms, that sends $f\colon A\to B$ and $g\colon C\to D$ to $f\otimes g\colon A\otimes C\to B\otimes D$. For each object there should be an identity morphism $\id[A]\colon A\to A$, and there should be a special object $I$, called the tensor unit. This data is subject to some constraints: composition should be (strictly) associative and unital, and $\circ$ and $\otimes$ should cooperate in that the equations $(g\circ f)\otimes (j\circ h)=(g\otimes j)\circ (f\otimes h)$ and $\id[A\otimes B]=\id[A]\otimes \id[B]$ hold. Moreover, the monoidal product should be associative, commutative and unital \emph{up to coherent isomorphisms}, see~\cite[Section 6.1]{borceux:vol2} for the precise details. We will often assume that the variables $\CC$ and~$\DD$ denote strict SMCs, meaning that associativity and unitality of $\otimes$ holds up to equality. This is mainly for notational convenience---first, any SMC is equivalent to a strict one and second, the theory we put forward could be developed without assuming strictness at the cost of some notational overhead. As an example of a (non-strict) SMC the reader could think \eg of the category $\Set$ of sets and functions between them, with the monoidal structure given by cartesian product, or the category $\cat{Vect}_\mathbb{R}$ of real vector spaces and linear maps between them, with the monoidal structure given by tensor product.

The tersely sketched structure of an SMC is naturally internalized in the \emph{graphical calculus} we use, which provides a sound and complete method for reasoning about them. Thus the reader less familiar with SMCs is invited to trust their visual intuition as it is unlikely to lead them astray. In this graphical calculus, we will denote a morphism $f\colon A\to B$ as \[\begin{pic}
  \node[morphism,font=\tiny] (f) at (0,0) {$f$};
  \draw (f.south) to ++(0,-.5)  node[right] {$A$};
  \draw (f.north) to ++(0,.5)  node[right] {$B$};
\end{pic}\] and composition and monoidal product as
\[
  \begin{pic}
    \node[morphism] (f) {$g \circ f$};
    \draw (f.south) to ++(0,-.65) node[right] {$A$};
    \draw (f.north) to ++(0,.65) node[right] {$C$};
  \end{pic}
  \ =\
  \begin{pic}
    \node[morphism] (g) at (0,.75) {$g\vphantom{f}$};
    \node[morphism] (f) at (0,0) {$f$};
    \draw (f.south) to ++(0,-.3) node[right] {$A$};
    \draw (g.south) to  (f.north);
    \draw (g.north) to ++(0,.3) node[right] {$C$};
  \end{pic}
  \qquad\qquad
  \begin{pic}
    \node[morphism] (f) {$f \otimes g$};
    \draw (f.south) to ++(0,-.65) node[right] {$A \otimes C$};
    \draw (f.north) to ++(0,.65) node[right] {$B \otimes D$};
  \end{pic}
  =\quad
  \begin{pic}
    \node[morphism] (f) at (-.4,0) {$f$};
    \node[morphism] (g) at (.4,0) {$g\vphantom{f}$};
    \draw (f.south) to ++(0,-.65) node[right] {$A$};
    \draw (f.north) to ++(0,.65) node[right] {$B$};
    \draw (g.south) to ++(0,-.65) node[right] {$C$};
    \draw (g.north) to ++(0,.65) node[right] {$D$};
  \end{pic}
\]
Special morphisms get special pictures: identities and symmetries are depicted as
\[\begin{pic}
    \draw (0,0) node[right] {$A$} to (0,1) node[right] {$A$};
  \end{pic}
  \qquad\qquad\qquad
  \begin{pic}
    \draw (0,0) node[left] {$A$} to[out=80,in=-100] (1,1) node[right] {$A$};
    \draw (1,0) node[right] {$B$} to[out=100,in=-80] (0,1) node[left] {$B$};
  \end{pic}
\]
whereas the identity on the tensor unit is denoted by the empty picture. In general, a morphism might have multiple input/output wires, so that $f\colon A_1\otimes \dots\otimes A_n\to B_1\otimes\dots\otimes B_m$ is drawn as
\[
  \begin{pic}
    \setlength\minimummorphismwidth{6mm}
    \node[morphism] (f) at (0,0) {$f$};
    \node (a) at (0,.5) {$\ldots$};
    \node (b) at (0,-.5) {$\ldots$};
    \draw ([xshift=-2.5pt]f.south west) to ++(0,-.5) node[left] {$A_1$};
    \draw ([xshift=-2.5pt]f.north west) to ++(0,.5) node[left] {$B_1$};
    \draw ([xshift=2.5pt]f.south east) to ++(0,-.5) node[right] {$A_n$};
    \draw ([xshift=2.5pt]f.north east) to ++(0,.5) node[right] {$B_m$};
  \end{pic}\]
In particular a morphism $I\to A_1\otimes\dots\otimes A_n$ will have no incoming wires. We will call such morphisms \emph{states} on $A_1\otimes\dots\otimes A_n$ and depict them as triangles instead of boxes:
\[
    \begin{pic}
    \node[state] (f) at (0,0) {$f$};
    \node (a) at (0,.25) {$\ldots$};
    \draw ([xshift=-1.5pt]f.A) to ++(0,.5) node[left] {$A_1$};
    \draw ([xshift=1.5pt]f.B) to ++(0,.5) node[right] {$A_n$};
  \end{pic}
  \]
Note that the property $\id[A\otimes B]=\id[A]\otimes \id[B]$ becomes
\[\begin{pic}
    \draw (0,0) node[right] {$A\otimes B$} to (0,1) node[right] {$A\otimes B$};
  \end{pic}
  \qquad=\qquad
  \begin{pic}
    \draw (0,0) node[left] {$A$} to (0,1) node[left] {$A$};
    \draw (1,0) node[right] {$B$} to (1,1) node[right] {$B$};
  \end{pic}
\]
so that whether multiple wires are packaged into one or not is largely a matter of convenience. We will often omit labeling wires with the name of the object unless necessary, and at times the label will only give partial information.

For Theorem~\ref{thm:bipartite} we will assume that our ambient category~$\CC$ is in fact a \emph{compact closed category}. This means that $\CC$ is an SMC, and we are also given for every object $A$ an object $A^*$ and morphisms
  \[
  \begin{pic}[yscale=-1]
   \draw (0,1) node[left] {$A^*$} to[out=90,in=180] (.5,1.5);
   \draw (.5,1.5) to[out=0,in=90] (1,1)  node[right] {$A$};
  \end{pic}
   \quad \text{and}\quad
   \begin{pic}
   \draw (0,1) node[left] {$A$} to[out=90,in=180] (.5,1.5);
   \draw (.5,1.5)  to[out=0,in=90] (1,1) node[right] {$A^*$};
  \end{pic},
 \]
called cups and caps respectively, satisfying
\[
  \begin{pic}
   \draw (0,0) to (0,1) to[out=90,in=180] (.5,1.5);
   \draw (.5,1.5) to[out=0,in=90] (1,1) to[out=-90,in=180] (1.5,.5);
   \draw (1.5,.5) to[out=0,in=-90] (2,1) to (2,2);
  \end{pic}
  \quad = \quad \begin{pic}
    \draw (0,0) node[right] {$A$} to (0,2) node[right] {$A$};
  \end{pic}
  \text{and}
  \quad
  \begin{pic}[xscale=-1]
   \draw (0,0) to (0,1) to[out=90,in=180] (.5,1.5);
   \draw (.5,1.5) to[out=0,in=90] (1,1) to[out=-90,in=180] (1.5,.5);
   \draw (1.5,.5) to[out=0,in=-90] (2,1) to (2,2);
  \end{pic} \quad=\quad
  \begin{pic}
    \draw (0,0) node[right] {$A^*$} to (0,2) node[right] {$A^*$};
  \end{pic}
 \]
Informally, this somewhat blurs the distinction between input and output wires, as one expects to happen if the boxes represent interactive and open computational processes. In particular, morphisms $A\to B$ correspond bijectively to states on $A^*\otimes B$, where the bijection is given by bending and unbending wires, and this correspondence should be seen as the categorical counterpart to the Choi–Jamio\l{}kowski isomorphism from quantum information (see \eg~\cite[Section 3.1.2.]{coeckekissinger:picturing} or~\cite[Section 3.1]{heunenvicary:categories}).

We will briefly conclude this section by discussing functors between SMCs. A lax monoidal functor $\cat{C}\to\cat{D}$ between monoidal categories is a functor $F\colon\cat{C}\to\cat{D}$ equipped with natural maps $F(A)\otimes F(B)\to F(A\otimes B)$ and a morphism $I_\cat{D}\to F(I_\CC)$ subject to certain coherence equations that roughly say that it cooperates with the monoidal structures on $\cat{C},\cat{D}$ in a well-behaved manner. A strong monoidal functor is a lax monoidal one for which the structure maps $F(A)\otimes F(B)\to F(A\otimes B)$ and $I_\cat{D}\to F(I_\CC)$ are isomorphisms. A monoidal functor (in either sense) is symmetric if it additionally cooperates with the symmetries. We will use graphical calculus of strong monoidal functors in the proof of Theorem~\ref{thm:perfectlifting}, but otherwise do not refer to the detailed definitions nor use this graphical language, and hence we do not go into more detail here. Full definitions can be found \eg at~\cite[Section I.1.2]{leinster2004} or at~\cite[Section 6.4]{borceux:vol2}, and a graphical calculus for them is discussed in~\cite{mellies2006functorial}. For us, all functors will be symmetric and either strong or lax monoidal, and we will specify which we mean whenever it makes a difference.

\section{Resource theories}\label{sec:resourcetheories}

We briefly review the categorical viewpoint on resource theories of~\cite{CFS16}. Roughly speaking, a resource theory can be seen as an SMC but the change in terminology corresponds to a change in viewpoint: usually in category theory one studies global properties of a category, such as the existence of (co)limits, relationships to other categories, etc. In contrast, when one views a particular SMC $\CC$ as resource theory, one is interested in local questions. One thinks of objects of $\CC$ as resources, and morphisms as processes that transform a resource to another. From this point of view, one mostly wishes to understand whether $\hom_\CC(X,Y)$ is empty or not for resources $X$ and~$Y$ of interest. Thus from the resource-theoretic point of view, most of the interesting information in $\CC$ is already present in its preorder collapse. As concrete examples of resource-theoretic questions, one might wonder if
 \begin{itemize}
  \item some noisy channels can simulate a (almost) noiseless channel~\cite[Example 3.13.]{CFS16},
  \item there is a protocol that uses only local quantum operations and classical communication and transforms a particular quantum state to another one~\cite{Chitambaretal:LOCC},
  \item some non-classical statistical behavior can simulate some other such behavior~\cite{abramskyetal:comonadicview}.
 \end{itemize}
In~\cite{CFS16} the authors show how many familiar resource theories arise in a uniform fashion: starting from an SMC $\CC$ of processes equipped with a wide sub-SMC $\CF$, the morphisms of which correspond to ``free'' processes, they build several resource theories (=SMCs). Here, the ``free'' processes should be understood as giving the morphisms that do not increase the resource at hand: for instance, deterministic processes form a natural choice of free operations in order to form a resource theory of randomness. Formally, the resource theory is specified by giving any $\CF\hookrightarrow \CC$ and then applying one of the constructions building a resource theory out of it.
Perhaps the most important of these constructions is the resource theory of states: given $\CF\hookrightarrow\CC$, the corresponding resource theory of states can be explicitly constructed by taking the objects of this resource theory to be states of $\CC$, \ie maps $r\colon I\to A$ for some $A$, and maps $r\to s$ are maps $f\colon A\to B$ in $\CF$ that transform $r$ to $s$ as in Figure~\ref{fig:state_transform}.

We now turn our attention towards cryptography. As contemporary cryptography is both broad and complex in scope, any faithful model of it is likely to be complicated as well. A benefit of the categorical idiom is that we can build up to more complicated models in stages, which is what we will do in the sequel. We phrase our constructions in terms of an arbitrary SMC $\CC$, but in order to model actual cryptographic protocols, the morphisms of $\CC$ should represent interactive computational machines with open ``ports'', with composition then amounting to connecting such machines together. Different choices of $\CC$ set the background for different kinds of cryptography, so that quantum cryptographers want $\CC$ to include quantum systems whereas in classical cryptography it is sufficient that these computational machines are probabilistic. Constructing such categories $\CC$ in detail is not trivial but is outside our scope---we will discuss this in more detail in Section~\ref{sec:outlook}.

Our first observation is that there is no reason to restrict to inclusions $\CF\hookrightarrow\CC$ in order to construct a resource theory of states. Indeed, while it is straightforward to verify explicitly that the resource theory of states is a symmetric monoidal category, it is instructive to understand more abstractly why this is so: in effect, the constructed category is something known as the category of elements of the composite functor $\CF\to\CC\xrightarrow{\hom(I,-)}\cat{Set}$. Moreover, this composite is lax symmetric monoidal, and it has been proven abstractly that the category of elements of any lax symmetric monoidal functor is symmetric monoidal~\cite{moeller:monoidalgrothendieck}. Thus this construction goes through for any symmetric (lax) monoidal functor into $\Set$. In this work, such functors will arise as composites $\cat{D}\xrightarrow{F}\cat{C}\xrightarrow{R}\Set$ with $\CC$ equipped with further structure in order to discuss security. Here it may be helpful to think of $F$ as interpreting free processes into an ambient category of all processes, and $R\colon\CC\to\cat{Set}$ as an operation that gives for each object $A$ of $\CC$ the set $R(A)$ of resources of type~$A$.

Explicitly, given symmetric monoidal functors $\cat{D}\xrightarrow{F}\cat{C}\xrightarrow{R}\Set$, the category of elements $\res RF$ has as its objects pairs $(A,r)$ where $A$ is an object of $\cat{D}$ and $r\in RF(A)$, the intuition being that $r$ is a resource of type $F(A)$. A morphism $(A,r)\to (B,s)$ is given by a morphism $f\colon A\to B$ in $\cat{D}$ that takes $r$ to $s$, \ie satisfies $RF(f)(r)=s$. The symmetric monoidal structure comes from the symmetric monoidal structures of $\cat{D}, \Set$ and $RF$. Somewhat more explicitly, $(A,r)\otimes (B,s)$ is defined by $(A\otimes B,r\otimes s)$ where $r\otimes s$ is the image of $(r,s)$ under the function $RF(A)\times RF(B)\to RF(A\otimes B)$ that is part of the monoidal structure on $RF$, and on morphisms of $\res RF$ the monoidal product is defined from that of $\cat{D}$.

From now on we will assume that $F$ is strong monoidal, and while $R=\hom(I,-)$ captures our main examples of interest, we will phrase our results for an arbitrary lax monoidal $R$. Allowing $F$ to be an arbitrary strong monoidal functor instead of an inclusion lets us capture the $n$-partite structure often used when studying cryptography, as shown next.
\begin{exa}\label{ex:n-partite}
Consider the resource theory induced by $\CC^n\xrightarrow{\otimes}\CC\xrightarrow{\hom(I,-)}\Set$, where we write $\bigotimes$ for the $n$-fold monoidal product\footnote{As $\CC$ is symmetric, the functor $\bigotimes$ is strong monoidal.}. The resulting resource theory has a natural interpretation in terms of $n$ agents trying to transform resources to others: an object of this resource theory corresponds to a pair $((A_i)_{i=1}^n,r\colon I\to \bigotimes A_i)$, and can be thought of as an $n$-partite state, depicted in Figure~\ref{fig:n-partite_state}, where the $i$th agent has access to a port of type $A_i$. A morphism $\f=(f_1,\dots f_n)\colon ((A_i)_{i=1}^n,r)\to ((B_i)^n_{i=1},s)$ between such resources then amounts to a protocol that prescribes, for each agent $i$ a process $f_i$ that they should perform so that~$r$ gets transformed to $s$ as in Figure~\ref{fig:n-partite_state_transformation}.
\begin{figure}
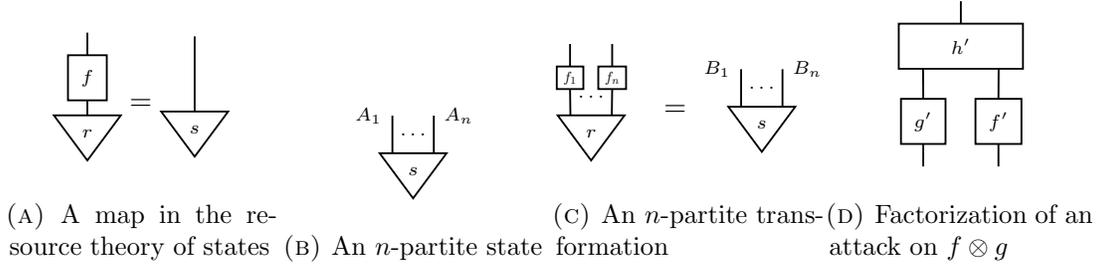

    \centering
\begin{subfigure}[b]{0.23\textwidth}
\[\begin{pic}
    \node[morphism] (f) at (0,.5) {$f$};
    \node[state] (x) at (0,0) {$r$};
    \draw (f.south) to (x.north);
    \draw (f.north) to ++(0,.3);
  \end{pic}=\begin{pic}
  \node[state] (y) at (0,0) {$s$};
  \draw (y.north) to ++(0,1);
\end{pic}\]
  \caption{A map in the resource theory of states}
  \label{fig:state_transform}
  \end{subfigure}
  ~
    \begin{subfigure}[b]{0.23\textwidth}
\[
    \begin{pic}
    \node[state] (f) at (0,0) {$r$};
    \node (a) at (0,.25) {$\ldots$};
    \draw ([xshift=-1.5pt]f.A) to ++(0,.5) node[left] {$A_1$};
    \draw ([xshift=1.5pt]f.B) to ++(0,.5) node[right] {$A_n$};
  \end{pic}
  \]
        \caption{An $n$-partite state}
        \label{fig:n-partite_state}
    \end{subfigure}
    ~
    \begin{subfigure}[b]{0.23\textwidth}
\[
  \begin{pic}
    \node[state] (r) at (0,0) {$r$};
    \node (a) at (0,.25) {$\ldots$};
    \node[morphism,scale=.5,font=\normalsize] (f) at (-.28,.5) {$f_1$};
    \node[morphism,scale=.5,font=\normalsize] (g) at (.28,.5) {$f_n$};
    \draw ([xshift=-1.5pt]r.A) to (f.south);
    \draw ([xshift=1.5pt]r.B) to (g.south);
    \draw (f.north) to ++(0,.3);
    \draw (g.north) to ++(0,.3);
  \end{pic}\quad=
    \begin{pic}
    \node[state] (f) at (0,0) {$s$};
    \node (a) at (0,.25) {$\ldots$};
    \draw ([xshift=-1.5pt]f.A) to ++(0,.5) node[left] {$B_1$};
    \draw ([xshift=1.5pt]f.B) to ++(0,.5) node[right] {$B_n$};
  \end{pic}
  \]
        \caption{An $n$-partite transformation}
        \label{fig:n-partite_state_transformation}
    \end{subfigure}
   ~\begin{subfigure}[b]{0.23\textwidth}
   \[\begin{pic}
  \node[morphism] (g) at (0,0) {$g'$};
  \node[morphism] (f) at (1,0) {$f'$};
  \setlength\minimummorphismwidth{13mm}
  \node[morphism] (h) at (0.5,1) {$h'$};
  \draw (g.north) to ([xshift=0.5pt]h.south west);
  \draw (f.north) to ([xshift=-0.5pt]h.south east);
  \draw (h.north) to ++(0,.3);
  \draw (g.south) to ++(0,.-.3);
  \draw (f.south) to ++(0,.-.3);
\end{pic}
\]
        \caption{Factorization of an attack on $f\otimes g$}
        \label{fig:attackonprod}
   \end{subfigure}
   \caption{Some resource transformations}
\end{figure}
\end{exa}
In this resource theory, all of the agents are equally powerful and can perform all processes allowed by $\CC$, and this might be unrealistic: first of all, $\CC$ might include computational processes that are too powerful/expensive for us to use in our cryptographic protocols. Moreover, having agents with different computational powers is important to model \eg blind quantum computing~\cite{BFK09} where a client with access only to limited, if any, quantum computation tries to securely delegate computations to a server with a powerful quantum computer. This limitation is easily remedied: we could take the $i$th agent to be able to implement computations in some sub-SMC $\CC_i$ of $\CC$, and then consider $\prod_{i=1}^n \CC_i\to\CC$.

A more serious limitation is that such transformations have no security guarantees---they only work if each agent performs~$f_i$ as prescribed by the protocol. We will fix this in the next section.

First we address another limitation: for a given $\CC$, one might want to view~\emph{morphisms of $\CC$} as resources, instead of just working with the resource theory of states. This can be achieved by ``the resource theory of universally-combinable processes'' of~\cite[Section 3.4]{CFS16}, which we will now show to arise as the resource theory of states of another category.

 \begin{defi}\label{def:ncomb} Given an SMC $\CC$, the category $\ncomb(\CC)$ is defined as follows: objects of $\ncomb(\CC)$ are finite lists $(A_i,B_i)_{i=1}^m$ of pairs of objects of $\CC$. Morphisms are defined in two stages: A morphism $(A_i,B_i)_{i=1}^m\to (C,D)$ is given by permutation $\sigma\colon \{1,\dots , m\}\to\{1,\dots , m\}$ and an $m$-comb
 \[\begin{pic}
   \node[dashedmorphism] (f) at (-0.4,0) {};
   \node[dashedmorphism] (e) at (-0.4,3.75) {};
   \setlength\minimummorphismwidth{10mm}
   \node (a) at (-0.4,2) {$\vdots$};
   \node (b) at (.4,2) {$\vdots$};
   \node[morphism] (g) at (0,1) {$g_1$};
   \node[morphism] (h) at (0,-1) {$g_0$};
   \node[morphism] (i) at (0,2.75) {$g_{m-1}$};
   \node[morphism] (j) at (0,4.75) {$g_{m}$};
   \draw (g.south west) to node[left] {$B_{\sigma(1)}$} (f.north);
   \draw (h.north west) to node[left] {$A_{\sigma(1)}$} (f.south);
   \draw (g.south east) to (h.north east);
   \draw (g.north west) to ++(0,.3) node[left]{$A_{\sigma(2)}$};
   \draw (g.north east) to ++(0,.3);
   \draw (h.south) to ++(0,-.35) node[right] {$C$};
   \draw (i.south west) to node[left]{$B_{\sigma(m-1)}$} +(0,.-.3);
   \draw (i.south east) to ++(0,.-.3);
   \draw (i.north west) to node[left]{$A_{\sigma(m)}$}  (e.south);
   \draw (i.north east) to (j.south east);
   \draw (e.north) to node[left] {$B_{\sigma(m)}$} (j.south west);
   \draw (j.north) to ++(0,.35) node[right] {$D$};
 \end{pic}\]
 in $\CC$. Formally, an $m$-comb is an equivalence class of tuples $(g_0,\dots g_m)$ of maps in $\CC$,  where $g_0\colon C\to A_{\sigma(1)}\otimes Y_1$, $g_i\colon B_{\sigma(i)}\otimes Y_i\to A_{\sigma(i+1)}\otimes Y_{i+1}$ for $i=1\dots m-1$ and $g_m\colon B_{\sigma(m)}\otimes Y_m\to D$ for some objects $Y_i$. Two such tuples are identified if, whenever one ``plugs the holes'' with maps of the form $Z_i\otimes A_{\sigma(i)}\to Z_i\otimes B_{\sigma(i)}$, the resulting maps in $\CC$ are equal.

 A morphism $(A_i,B_i)_{i=1}^m\to (C_j,D_j)_{j=1}^k$ is given by a function $f\colon \{1,\dots , m\}\to\{1,\dots , k\}$ and a morphism $(A_i,B_i)_{i\in f^{-1}(j)}\to (C_j,D_j)$ for each $j$. Composition is defined by nesting circuits into circuits, and the monoidal product is given by concatenation of lists.
 \end{defi}

 We will elaborate on the sequential composition in $\ncomb(\CC)$ in the following paradigmatic situation:  consider a morphism $(\sigma,g)\colon (A_i,B_i)_{i=1}^m\to (C,D)$ as depicted above, and morphisms $(\tau_i, h^i)\colon (X_i,Y_j)_{j=1}^{n_i}\to (A_{\sigma(i)},B_{\sigma(i)})$ for $i=1,\dots ,m$. The morphisms $(\tau_i, h^i)$ can be composed in parallel to result in a map to $(A_i,B_i)_{i=1}^m$. The circuit representing the composite of this map with our starting map $(A_i,B_i)_{i=1}^m\to (C,D)$ is depicted in Figure~\ref{fig:nested_combs}.

 \begin{figure}
 \[\begin{pic}
  \node[dashedmorphism,scale=.75] (f1) at (-0.2,1.55) {};
  \node[dashedmorphism,scale=.75] (e1) at (-0.2,3.7) {};
    \node[dashedmorphism,scale=.75] (f2) at (-0.2,-4.45) {};
  \node[dashedmorphism,scale=.75] (e2) at (-0.2,-2.3) {};
  \setlength\minimummorphismwidth{10mm}
  \node (a1) at (-0.2,2.7) {$\vdots$};
  \node (b1) at (.2,2.75) {$\vdots$};
  \node[morphism,scale=.5] (g1) at (0,2.15) {\normalsize$h^m_1$};
  \node[morphism,scale=.5] (h1) at (0,1) {\normalsize$h^m_0$};
  \node[morphism,scale=.5] (i1) at (0,3.15) {\normalsize$h^m_{n_m-1}$};
  \node[morphism,scale=.5] (j1) at (0,4.25) {\normalsize$h^m_{n_m}$};
  \draw (g1.south west) to  (f1.north);
  \draw (h1.north west) to  (f1.south);
  \draw (g1.south east) to (h1.north east);
  \draw (g1.north west) to ++(0,.15);
  \draw (g1.north east) to ++(0,.15);
  \draw (i1.south west) to  ++(0,.-.15);
  \draw (i1.south east) to ++(0,.-.15);
  \draw (i1.north west) to  (e1.south);
  \draw (i1.north east) to (j1.south east);
  \draw (e1.north) to  (j1.south west);
  \node (a2) at (-0.2,-3.3) {$\vdots$};
  \node (b2) at (.2,-3.25) {$\vdots$};
  \node[morphism,scale=.5] (g2) at (0,-3.85) {\normalsize$h^1_1$};
  \node[morphism,scale=.5] (h2) at (0,-5) {\normalsize$h^1_0$};
  \node[morphism,scale=.5] (i2) at (0,-2.85) {\normalsize$h^1_{n_1-1}$};
  \node[morphism,scale=.5] (j2) at (0,-1.75) {\normalsize$h^1_{n_1}$};
  \draw (g2.south west) to  (f2.north);
  \draw (h2.north west) to  (f2.south);
  \draw (g2.south east) to (h2.north east);
  \draw (g2.north west) to ++(0,.15);
  \draw (g2.north east) to ++(0,.15);
  \draw (i2.south west) to ++(0,.-.15);
  \draw (i2.south east) to ++(0,.-.15);
  \draw (i2.north west) to   (e2.south);
  \draw (i2.north east) to (j2.south east);
  \draw (e2.north) to (j2.south west);
  \node (a3) at (0,-.25) {$\vdots$};
  \node (b3) at (.8,-.25) {$\vdots$};
  \node[morphism] (g3) at (0.4,-1.1) {$g_1$};
  \node[morphism] (h3) at (0.4,-5.75) {$g_0$};
  \node[morphism] (i3) at (0.4,0.35) {$g_{m-1}$};
  \node[morphism] (j3) at (0.4,5) {$g_{m}$};
  \draw (j3.south west) to (j1.north);
  \draw (j3.south east) to (i3.north east);
  \draw (j3.north) to ++(0,.35) node[right] {$D$};
  \draw (h3.south) to ++(0,-.35) node[right] {$C$};
  \draw (g3.north west) to ++(0,.15);
  \draw (g3.north east) to ++(0,.15);
  \draw (g3.south east) to (h3.north east);
  \draw (g3.south west) to (j2.north);
  \draw (i3.south west) to ++(0,.-.15);
  \draw (i3.south east) to ++(0,.-.15);
  \draw (i3.north west) to (h1.south);
  \draw (h3.north west) to (h2.south);
\end{pic}\]
\caption{Nested combs arising when composing morphisms in $\ncomb(\CC)$}
\label{fig:nested_combs}
\end{figure}

\begin{rem}
As remarked in~\cite[Section 3.4]{CFS16}, these $n$-combs form naturally a symmetric multicategory, a kind of category-like structure where the domain of a morphism is a list of objects but the codomain is always a single object (see \eg\cite[Section 2.2.21]{leinster2004} for the precise definition). The SMC $\ncomb(\CC)$ arises by a standard method of turning a symmetric multicategory into an SMC.
\end{rem}

The tensor unit of $\ncomb(\CC)$ is given by the empty sequence, and a state of type $(A,B)$ is just a morphism $A\to B$ in $\CC$. Thus a state of type $(A_i,B_i)_{i=1}^m$ consists of a list of morphisms $(f_i\colon A_i\to B_i)_{i=1}^m$ in $\CC$. Given a subcategory $\CF$ of $\CC$, the resource theory induced by $\ncomb(\CF)\hookrightarrow\ncomb(\CC)\xrightarrow{\hom(I,-)}\Set$ is the ``resource theory of universally-combinable processes'' of~\cite[Section 3.4]{CFS16}: its objects are now lists of morphisms in $\CC$, and a resource conversion between two such lists consists of a map in $\ncomb(\CF)$ that converts the first list to the second one. In particular, the maps $g_i$ appearing in such $n$-combs have to be maps of $\CF$.

We can add a multipartite structure to this resource theory just like we did for the resource theory of states in Example~\ref{ex:n-partite}. The $k$-fold tensor product $\CC^k\to \CC$ induces a monoidal functor $\ncomb(\CC^k)\to \ncomb(\CC)$ (and again, one could have a different subcategory of $\CC$ for each of the parties). In the resource theory induced by  $\ncomb(\CC^k)\to\ncomb(\CC)\xrightarrow{\hom(I,-)}\Set$, resources are now lists of morphisms, each equipped with the structure of a $k$-fold tensor product on its domain and codomain. Intuitively, such a morphism corresponds to a box shared by the $k$ parties, each with a specified (possibly trivial) input and output port. When $k=3$ and the three parties are labeled Eve, Alice and Bob, any map of the form
\[
  \begin{pic}
    \setlength\minimummorphismwidth{10mm}
    \node[morphism] (f) at (0,0) {$f$};
    \draw ([xshift=-2.5pt]f.south west) to ++(0,-.5) node[below] {$E_1$};
    \draw ([xshift=-2.5pt]f.north west) to ++(0,.5) node[above] {$E_2$};
    \draw (f.south) to ++(0,-.5) node[below] {$A_1$};
    \draw (f.north) to ++(0,.5) node[above] {$A_2$};
    \draw ([xshift=2.5pt]f.south east) to ++(0,-.5) node[below] {$B_1$};
    \draw ([xshift=2.5pt]f.north east) to ++(0,.5) node[above] {$B_2$};
  \end{pic}\]
is an example of such a resource, and general resources in this theory are then tuples of such maps. As another example,
a channel letting Alice broadcast any message over a fixed message space to Eve and Bob as in Figure~\ref{fig:insecure_channel} is an example of such a resource, where now some of the input and output ports of the parties are trivial (\ie equal to the tensor unit) and hence not drawn.

In this resource theory, all resources are ``single-use'' by design. This is both true for the combs, which must use each of the resources going into them exactly once, and for the general maps, which partition the starting resources to be used for each target resource. One can modify the construction of $\ncomb(\CC)$ so as to result in a cartesian monoidal category   $\ncombcart(\CC)$ (\ie a monoidal category where the monoidal product is given by the categorical product). In the resulting resource theory, all the resources are then usable arbitrarily many times. We will give the definition below, but won't use this in the sequel. The reason for this is that the protocols we study use their starting resources exactly once and hence fit already in $\ncomb(\CC)$. However, there is a natural embedding $\ncomb(\CC)\to\ncombcart(\CC)$, and using this one could view our protocols as transformations in the resource theory of reusable processes, and our security claims carry over through this inclusion.

 \begin{defi} Given an SMC $\CC$, the category $\ncombcart(\CC)$ is defined as follows: objects of $\ncombcart(\CC)$ are finite lists $(A_i,B_i)_{i=1}^n$ of pairs of objects of $\CC$. Morphisms are defined in two stages: A morphism $(A_i,B_i)_{i=1}^n\to (C,D)$ is given by a \emph{function}  $\sigma\colon \{1,\dots , m\}\to\{1,\dots , n\}$ for some $m\in \N$ and an $m$-comb
 \[\begin{pic}
   \node[dashedmorphism] (f) at (-0.4,0) {};
   \node[dashedmorphism] (e) at (-0.4,3.75) {};
   \setlength\minimummorphismwidth{10mm}
   \node (a) at (-0.4,2) {$\vdots$};
   \node (b) at (.4,2) {$\vdots$};
   \node[morphism] (g) at (0,1) {$g_1$};
   \node[morphism] (h) at (0,-1) {$g_0$};
   \node[morphism] (i) at (0,2.75) {$g_{m-1}$};
   \node[morphism] (j) at (0,4.75) {$g_{m}$};
   \draw (g.south west) to node[left] {$B_{\sigma(1)}$} (f.north);
   \draw (h.north west) to node[left] {$A_{\sigma(1)}$} (f.south);
   \draw (g.south east) to (h.north east);
   \draw (g.north west) to ++(0,.3) node[left]{$A_{\sigma(2)}$};
   \draw (g.north east) to ++(0,.3);
   \draw (h.south) to ++(0,-.35) node[right] {$C$};
   \draw (i.south west) to node[left]{$B_{\sigma(m-1)}$} +(0,.-.3);
   \draw (i.south east) to ++(0,.-.3);
   \draw (i.north west) to node[left]{$A_{\sigma(m)}$}  (e.south);
   \draw (i.north east) to (j.south east);
   \draw (e.north) to node[left] {$B_{\sigma(m)}$} (j.south west);
   \draw (j.north) to ++(0,.35) node[right] {$D$};
 \end{pic}\]
 in $\CC$, where two such $m$-combs are identified as in Definition~\ref{def:ncomb}.

 A morphism $(A_i,B_i)_{i=1}^m\to (C_j,D_j)_{j=1}^k$ is given by a \emph{relation} $R\colon \{1,\dots , m\}\to\{1,\dots , k\}$ and a morphism $(A_i,B_i)_{i\in R^{-1}(j)}\to (C_j,D_j)$ for each $j$. Composition is defined by nesting circuits into circuits, and the monoidal product is given by concatenation of lists.
 \end{defi}

\section{Cryptography as a resource theory}\label{sec:crypto}

\subsection{Attack models and security}

\begin{figure}
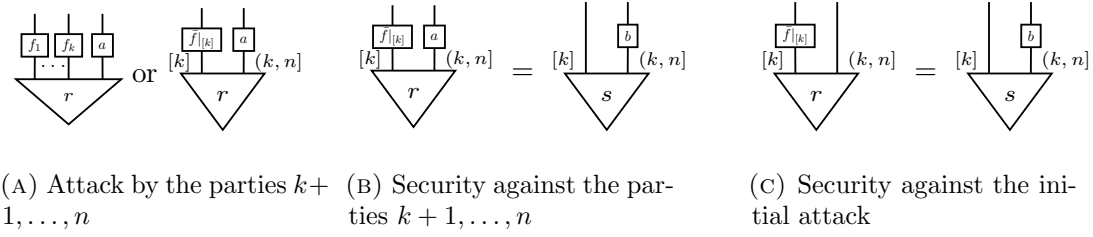

    \centering
\begin{subfigure}[b]{0.28\textwidth}
\[\begin{pic}[scale=0.9]
  	\setlength\minimumstatewidth{14mm}
    \node[state] (r) at (0,0) {$r$};
    \node (a) at (-.23,.2) {$\ldots$};
    \node[morphism,scale=.5,font=\normalsize] (f) at (-.5,.5) {$f_1$};
    \node[morphism,scale=.5,font=\normalsize] (h) at (0,.5) {$f_{k}$};
    \node[morphism,scale=.5,font=\normalsize] (g) at (.5,.5) {$a$};
    \draw ([xshift=-3.2pt]r.A) to (f.south);
    \draw ([xshift=3.2pt]r.B) to (g.south);
    \draw (f.north) to ++(0,.3);
    \draw (g.north) to ++(0,.3);
     \draw (h.north) to ++(0,.3);
    \draw (r.center) to (h.south);
  \end{pic}\text{ or}
\begin{pic}[scale=0.9]
    \node[morphism,scale=.5,font=\normalsize] (f) at (-.31,.5) {$\f|_{[k]}$};
     \node[morphism,scale=.5,font=\normalsize] (g) at (.31,.5) {$a$};
    \node[state,scale=1.25] (x) at (0,0) {$r$};
    \draw (f.south) to node[left] {$[k]$} (x.A);
    \draw (f.north) to ++(0,.3);
    \draw (g.south) to node[right] {$(k,n]$} (x.B);
    \draw (g.north) to ++(0,.3);
  \end{pic}\]
	\caption{Attack by the parties $k+1,\dots ,n$}
	\label{fig:attack}
	\end{subfigure}
	\enspace
    \begin{subfigure}[b]{0.28\textwidth}
\[\begin{pic}[scale=0.9]
    \node[morphism,scale=.5,font=\normalsize] (f) at (-.31,.5) {$\f|_{[k]}$};
     \node[morphism,scale=.5,font=\normalsize] (g) at (.31,.5) {$a$};
    \node[state,scale=1.25] (x) at (0,0) {$r$};
    \draw (f.south) to node[left] {$[k]$} (x.A);
    \draw (f.north) to ++(0,.3);
    \draw (g.south) to node[right] {$(k,n]$} (x.B);
    \draw (g.north) to ++(0,.3);
  \end{pic}=\begin{pic}
     \node[morphism,scale=.5,font=\normalsize] (g) at (.28,.5) {$b$};
    \node[state,scale=1.25] (x) at (0,0) {$s$};
    \draw (x.A) to node[left] {$[k]$} ++(0,.35) to ++(0,.6);
    \draw (g.south) to node[right] {$(k,n]$} (x.B);
    \draw (g.north) to ++(0,.3);
  \end{pic}\]
        \caption{Security against the parties $k+1,\dots ,n$}
        \label{fig:security}
    \end{subfigure}
    \qquad\enspace
    \begin{subfigure}[b]{0.28\textwidth}
\[\begin{pic}
    \node[morphism,scale=.5,font=\normalsize] (f) at (-.28,.5) {$\f|_{[k]}$};
    \node[state,scale=1.25] (x) at (0,0) {$r$};
    \draw (f.south) to node[left] {$[k]$} (x.A);
    \draw (f.north) to ++(0,.3);
    \draw (x.B) to node[right]  {$(k,n]$} ++(0,.35)  to ++(0,.6);
  \end{pic}=\begin{pic}
     \node[morphism,scale=.5,font=\normalsize] (g) at (.28,.5) {$b$};
    \node[state,scale=1.25] (x) at (0,0) {$s$};
    \draw (x.A) to node[left] {$[k]$} ++(0,.35) to ++(0,.6);
    \draw (g.south) to node[right] {$(k,n]$} (x.B);
    \draw (g.north) to ++(0,.3);
  \end{pic}\]
        \caption{Security against the initial attack}
        \label{fig:initial_attack}
    \end{subfigure}
\caption{Attacks and security constraints\vspace{-1em}}
\end{figure}
In order for a protocol $\f=(f_1,\dots ,f_n)\colon ((A_i)_{i=1}^n,r)\to ((B_i)^n_{i=1},s)$ to be secure, we should have some guarantees about what happens if, as a result of \emph{an attack} on the protocol, something else than $(f_1,\dots ,f_n)$ happens. For instance, some subset of the parties might deviate from the protocol and do something else instead. In the simulation paradigm~\cite{L17}, security is then defined by saying that, anything that could happen when running the real protocol, \ie $\f$ with $r$, could also happen in the ideal world, \ie with $s$. A given protocol might be secure against some kinds of attacks and insecure against others, so we define security against an abstract attack model. This abstract notion of an attack model is one of the main definitions of our paper. It isolates conditions needed for the composition theorem (Theorem ~\ref{thm:composition}). It also captures our key examples that we use to illustrate the definition after giving it.
\begin{defi}\label{def:attack} An \emph{attack model} $\A$ on an SMC $\CC$ consists of giving for each morphism $f$ of $\CC$ a class $\A(f)$ of morphisms of $\CC$ such that
  \begin{enumerate}[(i)]
    \item $f\in\A(f)$ for every $f$.
    \item For any $f\colon A\to B$ and $g\colon B\to C$ and composable $g'\in \A(g),f'\in \A(f)$ we have $g'\circ f'\in \A(g\circ f)$.
     Moreover, any $h\in \A(g\circ f)$ factorizes as $g'\circ f'$ with $g'\in \A(g)$ and $f'\in \A(f)$.
    \item For any $f\colon A\to B$, $g\colon C\to D$ in $\cat{C}$ and $f'\in \A(f), g'\in \A(g)$ we have $f'\otimes g'\in \A(f\otimes g)$. Moreover, any $h\in \A(f\otimes g)$ factorizes as $h'\circ (f'\otimes g')$ with $f'\in \A(f)$, $g'\in \A(g)$ and $h'\in \A(\id[B\otimes D])$. In addition, if $\dom(h)=A\otimes C$, we can choose the factorization  $h=h'\circ (f'\otimes g')$ so that $\dom(f')=A$ and $\dom(g')=C$. 
  \end{enumerate}
 Let $x\colon (A,r)\to (B,s)$ define a morphism in the resource theory $\res RF$ induced by $F\colon\cat{D}\to\cat{C}$ and $R\colon \cat{C}\to\cat{Set}$. We say that $x$ is \emph{secure} against an attack model $\A$ on $\cat{C}$ (or $\A$-secure) if for any $a\in \A(F(x))$ with $\dom(a)=F(A)$ there is $b\in\A(\id[F(B)])$ with $\dom(b)=F(B)$ such that $R(a)r=R(b)s$, \ie for any such $a$ there is a $b$ making the following square commute:
  \[\begin{tikzpicture}
    \matrix (m) [matrix of math nodes,row sep=2em,column sep=4em,minimum width=2em]
    {
     I_\Set=\{*\} & RF(B) \\
     RF(A) & R(X) \\};
    \path[->]
    (m-1-1) edge node [left] {$r$} (m-2-1)
           edge node [above] {$s$} (m-1-2)
    (m-1-2) edge node [right] {$R(b)$} (m-2-2)
    (m-2-1) edge node [below] {$R(a)$}  (m-2-2);
  \end{tikzpicture}\]
\end{defi}
The above definition of security asks for perfect equality and corresponds to information-theoretic security in cryptography. This is often too much to hope for, and we will relax this requirement in Section~\ref{sec:extensions}.

The intuition is that $\A$ gives, for each process in $\CC$, the set of behaviors that the attackers could force to happen instead of honest behavior. In particular, $\A(\id[B])$ gives the behaviors that are available to attackers given access to a system of type $B$. Then property (i) amounts to the assumption that the adversaries could behave honestly. The first halves of properties (ii) and (iii) say that, given an attack on $g$ and one on $f$, both attacks could happen when composing $g$ and~$f$ sequentially or in parallel. The second parts of these say that attacks on composite processes can be understood as composites of attacks. However, note that (iii) does not say that an attack on a product has to be a product of attacks: the factorization says that any $h\in\A(g\otimes f)$ factorizes as in Figure~\ref{fig:attackonprod} with $g'\in \A(g)$, $f'\in \A(f)$ and $h'\in \A(\id[B\otimes D])$. The intuition is that an attacker does not have to attack two parallel protocols independently of each other, but might play the protocols against each other in complicated ways. This intuition also explains why we do not require that all morphisms in $\A(f)$ have $F(A)$ as their domain, despite the definition of $\A$-security quantifying only against those: when factoring $h\in \A(g\circ f)$ as $g'\circ f'$ with  $g'\in \A(g)$ and $f'\in \A(f)$, we can no longer guarantee that $F(B)$ is the domain of $g'$---perhaps the attackers take us elsewhere when they perform~$f'$. Finally, the security definition is directly abstracted from the real-world-ideal-world paradigm, and can be seen as saying that a protocol realizes some target functionality securely if, for any attack on the protocol, there is an attack on the target functionality with identical end results. In other words, the attackers can not achieve anything during the protocol that they could not achieve with the target functionality, so that the protocol is at least as secure as the target functionality.

If one thinks of $F\colon\cat{D}\to\cat{C}$ as representing the inclusion of free processes into general processes, one also gets an explanation why we do not insist that free processes and attacks live in the same category, \ie that $F=\id[\CC]$. This is simply because we might wish to prove that some protocols are secure against attackers that can use more resources than we wish or can use in the protocols.

\begin{rem}
One can rewrite condition (ii) as $\A(g)\circ \A(f)=\A(g\circ f)$, if one defines $\A(g)\circ \A(f)$ as $\{g'\circ f'\colon g'\in\A(g),f'\in \A(f), g'\circ f'\text{ is defined}\}$. Similarly, $\A(f\otimes g)=\A(\id[B\otimes D])\circ (\A(f)\otimes \A(g))$ captures most of condition (iii). These equations look suspiciously close to stating that $\A$ is some kind of a functor, except that \begin{itemize}
  \item We do not require that identities are preserved. Intuitively, this corresponds to the fact that even if a protocol tells everyone to do nothing, dishonest parties might deviate arbitrarily.
  \item Given $f\colon A\to B$, the class $\A(f)$ is not required to be a subset of $\CC(A,B)$, making it difficult to see what the codomain of $\A$ is supposed to be. We do not require $\A(f)\subseteq \CC(A,B)$ as we cannot expect attackers to respect our type system: for instance, if a given party is supposed to map their system to the tensor unit (representing them discarding information), they might not do so if they're not honest.
\end{itemize}
We leave understanding attack models in more familiar categorical terms for future work.
\end{rem}

\begin{exa} For any SMC $\CC$ there are two trivial attack models: the minimal one defined by $\A(f)=\{f\}$ and the maximal one sending $f$ to the class of all morphisms of $\CC$. We interpret the minimal attack model as representing honest behavior, and the maximal one as representing arbitrary malicious behavior.
\end{exa}
\begin{prop} If $\A_1,\dots ,\A_n$ are attack models on SMCs $\CC_1,\dots ,\CC_n$ respectively, then there is a product $\prod_{i=1}^n\A_i$ attack model on $\prod_{i=1}^n\CC_i$ defined by $(\prod_{i=1}^n \A_i)(f_1,\dots, f_n)=\prod_{i=1}^n \A_i(f_i)$.
\end{prop}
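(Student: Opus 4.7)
The plan is to verify the three defining conditions of an attack model (Definition~\ref{def:attack}) pointwise, using the corresponding property of each $\A_i$ in coordinate $i$. Because morphisms, composition, and the monoidal product in $\prod_{i=1}^n \CC_i$ are all defined coordinatewise, the verification will reduce immediately to $n$ parallel applications of the hypothesis.

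For (i), given a morphism $(f_1,\dots,f_n)$ of $\prod_{i=1}^n\CC_i$, condition (i) for each $\A_i$ yields $f_i\in\A_i(f_i)$, so the tuple lies in $\prod_{i=1}^n\A_i(f_i)=(\prod_{i=1}^n\A_i)(f_1,\dots,f_n)$. For (ii), suppose $(f_1,\dots,f_n)$ and $(g_1,\dots,g_n)$ are composable in $\prod_{i=1}^n\CC_i$, i.e.\ each $g_i\circ f_i$ is defined. If $(g_1',\dots,g_n')\in\prod_{i=1}^n\A_i(g_i)$ and $(f_1',\dots,f_n')\in\prod_{i=1}^n\A_i(f_i)$ are composable in the product, each pair $(g_i',f_i')$ is composable in $\CC_i$, so by (ii) for $\A_i$ we have $g_i'\circ f_i'\in\A_i(g_i\circ f_i)$; assembling coordinates gives the required membership. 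Conversely, given $(h_1,\dots,h_n)\in\prod_{i=1}^n\A_i(g_i\circ f_i)$, apply (ii) for each $\A_i$ to factor $h_i=g_i'\circ f_i'$ with $g_i'\in\A_i(g_i)$, $f_i'\in\A_i(f_i)$, and then $(g_1',\dots,g_n')$ and $(f_1',\dots,f_n')$ provide the desired factorization.

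Condition (iii) is verified by the same pattern with $\otimes$ replacing $\circ$ throughout. The forward direction is immediate from (iii) for each $\A_i$. For the factorization direction, given $(h_1,\dots,h_n)\in\prod_{i=1}^n\A_i(f_i\otimes g_i)$, each coordinate factorizes as $h_i=h_i'\circ(f_i'\otimes g_i')$ with $f_i'\in\A_i(f_i)$, $g_i'\in\A_i(g_i)$, and $h_i'\in\A_i(\id[B_i\otimes D_i])$; bundling these into tuples yields the required factorization in $\prod_{i=1}^n\CC_i$, noting that the identity in the product is precisely $(\id[B_1\otimes D_1],\dots,\id[B_n\otimes D_n])$.

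No step is genuinely difficult here; the only thing to be careful about is spelling out that composability and the monoidal product in $\prod_{i=1}^n\CC_i$ decompose coordinatewise, so the factorization halves of (ii) and (iii) truly can be assembled from coordinatewise factorizations without any interaction between coordinates.
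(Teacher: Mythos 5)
Your proof is correct and takes exactly the same approach as the paper, which simply observes that the required properties follow from those of each $\A_i$ because operations in $\prod_{i=1}^n\CC_i$ are defined pointwise. You have merely spelled out the coordinatewise verification that the paper leaves implicit.
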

\begin{proof} The required properties of $\prod_{i=1}^n\A_i$ follow from those of each $\A_i$ and the fact that operations in $\prod_{i=1}^n\CC_i$ are defined pointwise.
\end{proof}
This proposition, together with the minimal and maximal attack models, is already expressive enough to model multi-party computation where some subset of the parties might do arbitrary malicious behavior. Indeed, consider the $n$-partite resource theory induced by $\CC^n\xrightarrow{\otimes}\CC\xrightarrow{\hom(I,-)}\Set$. Let us first model a situation where the first $n-1$ participants are honest and the last participant is dishonest. In this case we can set $\A=\prod_{i=1}^n \A_i$ where each of $\A_1,\dots ,\A_{n-1}$ is the minimal attack model on $\CC$ and $\A_n$ is the maximal attack model. Then, an attack on $\f=(f_1,\dots f_n)\colon ((A_i)_{i=1}^n,r)\to ((B_i)^n_{i=1},s)$ can be represented by the first $n-1$ parties obeying the protocol and the $n$th party doing an arbitrary computation~$a$, as depicted in the two pictures of Figure~\ref{fig:attack},
where $[n]:=\{1,\dots,n\}$, $(k,n]:=\{k+1,\dots n\}$, $\f|_{[k]}:=\bigotimes_{i=1}^k f_i$, and here $k=n-1$. The latter representation will be used when we do not need to emphasize pictorially the fact that the honest parties are each performing their own individual computations.

If instead of just one attacker, there are several \emph{independently} acting adversaries, we can take $\A=\prod_{i=1}^n \A_i$ where $\A_i$ is the minimal or maximal attack structure depending on whether the $i$th participant is honest or not. If the set of dishonest parties can collude and communicate arbitrarily during the process, we need the flexibility given in Definition~\ref{def:attack} and have the attack structure live in a different category than where our protocols live. For simplicity of notation, assume that the first~$k$ agents are honest but the remaining parties are malicious and might do arbitrary (joint) processes in $\CC$. In particular, the action done by the dishonest parties $k+1,\dots , n$ need not be describable as a product $\bigotimes_{i=k+1}^n (a_i)$ of individual actions. In that case we define $\A$ as follows: we first consider our resource theory as arising from $\CC^n\xrightarrow{\id^k\times\otimes}\CC^k\times \CC\xrightarrow{\otimes}\CC\xrightarrow{\hom(I,-)}\Set$, and define $\A$ on $\CC^k\times \CC$ as the product of the minimal attack model on $\CC^k$ and the maximal one on $\CC$. Concretely, this means that the first $k$ agents always obey the protocol, but the remaining agents can choose to perform arbitrary joint behaviors in $\CC$. Then a generic attack on a protocol $\f$ can be represented exactly as before in Figure~\ref{fig:attack}, except we no longer insist that $k=n-1$. Now a protocol $\f$ is $\A$-secure if for any $a$ with $\dom(a)=(A_i)_{i=k+1}^n$ there is a $b$ with $\dom(b)=(B_i)_{i=k+1}^n$   satisfying the equation of Figure~\ref{fig:security}.

If one is willing to draw more wire crossings, one can easily depict and define security against an arbitrary subset of the parties behaving maliciously, and henceforward this is the attack model we have in mind when we say that some $n$-partite protocol is secure against some subset of the parties. Moreover, for any subset $J$ of dishonest agents, one could consider more limited kinds of attacks: for instance, the agents might have limited computational power or limited abilities to perform joint computations---as long as the attack model satisfies the conditions of Definition~\ref{def:attack} one automatically gets a composable notion of secure protocols by Theorem~\ref{thm:composition} in Section~\ref{sec:composition}.

We've seen that one party acting maliciously defines an attack model on $\CC^k$. We now show that this also defines an attack model $\A$ on $\ncomb(\CC^k)$.  Informally, $\A$ lets the $i$th party change their part of any $m$-comb arbitrarily, leaving everything else about morphisms of  $\ncomb(\CC^k)$ fixed.

\begin{defi}\label{def:maliciousparty_ncomb}
For a fixed $i\in\{1,\dots k)$, we define an attack model $\A$ on $\ncomb(\CC^k)$ corresponding to a malicious $i$th party as follows.
Consider first a basic morphism
\[(\sigma,(g_0,\dots g_m))\colon (A_i,B_i)_{i=1}^m\to (C,D)\] in $\ncomb(\CC^k)$
given by a permutation $\sigma\colon \{1,\dots , m\}\to\{1,\dots , m\}$ and an $m$-comb given by a tuple $(g_0,\dots g_m)$ of morphisms in $\CC^k$. Note that each $g_j$ is a morphism in $\CC^k$, and hence a tuple of morphisms of $\CC$.
Let us write $\pi_j\colon \CC^k \to \CC$ for the $j$th projection. We can now define
\[\A((\sigma,(g_0,\dots g_m))):=\{(\sigma,(h_0,\dots h_m)) \mid \pi_j(h_\ell)=\pi_j(g_\ell)\text{ for all }\ell\text { and all }j\neq i\}.\]
We note that this is well-defined, as equality of morphisms in $\CC^k$ is defined pointwise, so that modifying the $i$th coordinate of an $m$-comb in a particular way respects the equivalence relation on $m$-combs.

A morphism $(A_i,B_i)_{i=1}^m\to (C_j,D_j)_{j=1}^n$ is given by a function $f\colon \{1,\dots , m\}\to\{1,\dots , n\}$ and morphisms $x_j \colon (A_i,B_i)_{i\in f^{-1}(j)}\to (C_j,D_j)$ for each $j$. For such a morphism $(f,(x_i)_{i=1}^n)$ we define
\[\A(f,(x_i)_{i=1}^n):=\{(f,(y_i)_{i=1}^n)\mid y_i\in\A(x_i)\text{ for }i=1,\dots,n\}.\]
 \end{defi}

To see that this is an attack model in the sense of Definition~\ref{def:attack}, note first that $f\in\A(f)$ for any morphism $f$, satisfying condition (i).  We also note that if one composes sequentially a composable pair of attacks, the result is also an attack, so the first half of (ii) holds. For the other direction it suffices to consider an attack on a nested comb as in Figure~\ref{fig:nested_combs}. Such an attack corresponds to the $i$th party replacing their resulting $m$-comb with an arbitrary $m$-comb: however, we may think of this as resulting from the $i$th party first replacing each nested comb with something else, and then replacing the outer comb appropriately. This gives the other direction of (ii). Condition (iii) is clear, as by construction attacks on a parallel composite are parallel composites of attacks.

\subsection{Composition theorems}\label{sec:composition}
\begin{thm}\label{thm:composition} Given symmetric monoidal functors $F\colon\cat{D}\to\cat{C}$, $R\colon \cat{C}\to\Set$ with $F$ strong monoidal and $R$ lax monoidal, and an attack model $\A$ on $\cat{C}$, the class of $\A$-secure maps forms a wide sub-SMC of the resource theory $\res RF$ induced by $RF$.
\end{thm}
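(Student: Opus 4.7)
The theorem requires three things: the class of $\A$-secure morphisms should contain all identities (so that the subcategory is wide), be closed under composition, and be closed under monoidal product. I would verify these in turn, working directly from the commutative security square in Definition~\ref{def:attack}.

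For identities, since $F(\id[(A,r)])=\id[F(A)]$, any $a\in\A(\id[F(A)])$ with $\dom(a)=F(A)$ is its own witness via $b:=a$. For composition, suppose $x\colon(A,r)\to(B,s)$ and $y\colon(B,s)\to(C,t)$ are $\A$-secure, and let $a\in\A(F(y)\circ F(x))$ with $\dom(a)=F(A)$. Property (ii) of an attack model gives a factorization $a=a_y\circ a_x$ with $a_x\in\A(F(x))$ and $a_y\in\A(F(y))$; necessarily $\dom(a_x)=F(A)$. Security of $x$ applied to $a_x$ produces $b_x\in\A(\id[F(B)])$ with $R(a_x)r=R(b_x)s$, and then $a_y\circ b_x\in\A(F(y))$ has domain $F(B)$, so security of $y$ yields a witness $b\in\A(\id[F(C)])$ with $R(a_y\circ b_x)s=R(b)t$. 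Functoriality of $R$ chains these into $R(a)r=R(a_y)R(a_x)r=R(a_y)R(b_x)s=R(b)t$, as required. The main subtlety in this step is that the intermediate object between $a_x$ and $a_y$ need not lie in the image of $F$, so one must check that $a_y\circ b_x$ is well-typed; this follows from the implicit codomain-agreement $\cod(b_x)=\cod(a_x)=\dom(a_y)$ forced by the security square for $x$.

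For the monoidal product, given $\A$-secure $x_i\colon(A_i,r_i)\to(B_i,s_i)$ and an attack $a\in\A(F(x_1)\otimes F(x_2))$ with $\dom(a)=F(A_1)\otimes F(A_2)$, property (iii) produces a factorization $a=h\circ(a_1\otimes a_2)$ with $a_i\in\A(F(x_i))$ of domain $F(A_i)$ and $h\in\A(\id[F(B_1)\otimes F(B_2)])$. Applying security of each $x_i$ yields simulators $b_i\in\A(\id[F(B_i)])$ with $R(a_i)r_i=R(b_i)s_i$, and I would take $b:=h\circ(b_1\otimes b_2)$ as the witness; by (iii) and (ii), $b\in\A(\id[F(B_1\otimes B_2)])$. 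The equality $R(b)(s_1\otimes s_2)=R(a)(r_1\otimes r_2)$ reduces, via the $b_i$ equations, to naturality of the lax monoidal structure map of $R$, which is precisely the datum used to define $\otimes$ on $\res RF$; any remaining notational friction is absorbed by the strong monoidal coherence of $F$. Throughout, the two factorization axioms in Definition~\ref{def:attack} do the key work; once the domain/codomain bookkeeping of the composition case is handled, the tensor case follows straightforwardly.
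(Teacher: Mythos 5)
Your treatment of sequential composition and of the monoidal product matches the paper's proof essentially step for step: the same factorizations from properties (ii) and (iii), the same choice of witnesses $a_y\circ b_x$ and $h\circ(b_1\otimes b_2)$, and the same appeal to naturality of the lax monoidal structure map of $R$ for the tensor case. Your side remark about the codomain agreement $\cod(b_x)=\cod(a_x)$ being forced by the security equation is a correct and worthwhile clarification of a point the paper leaves implicit.

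There is, however, a genuine gap at the start. To get a wide \emph{sub-SMC} you must show the class of secure maps contains all the \emph{coherence isomorphisms} of $\res RF$, not just the identities: even when $\CC$ and $\DD$ are taken strict, the symmetry $\sigma_{(A,r),(B,s)}$ is not an identity, and without it the subcategory is not symmetric monoidal. Your argument ``take $b:=a$'' works only because $F(\id)=\id$ and $\dom(a)=\cod(a)$; for a coherence isomorphism $c\colon(A,r)\to(B,s)$ an attack $a\in\A(F(c))$ has domain $F(A)$ and is not itself an element of $\A(\id[F(B)])$, so it cannot serve as its own simulator. The paper closes this by proving the stronger claim that \emph{every} isomorphism of $\res RF$ is $\A$-secure: one takes $b:=a\circ F(c)^{-1}$, which lies in $\A(\id[F(B)])$ by properties (i) and (ii), and computes $R(a)r=R(a)R(F(c)^{-1})s$. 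This is a small repair, but it does require both factorization axioms and is not subsumed by your identity argument. A secondary, more cosmetic point: the security condition for $x_1\otimes x_2$ in $\res RF$ quantifies over $\A(F(x_1\otimes x_2))$ rather than $\A(F(x_1)\otimes F(x_2))$; you gesture at absorbing this via strong monoidality of $F$, whereas the paper handles it explicitly by first proving the case $F=\id$ and then transporting along the coherence square $F(A\otimes C)\cong F(A)\otimes F(C)$, writing $F(x_1\otimes x_2)$ as a composite of secure maps. Spelling that reduction out would make your tensor step airtight.
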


\begin{proof}
 We first prove the claim when $F=\id[\CC]$. As the class of $\A$-secure maps is a subclass of maps inside an SMC, it suffices to show it contains all coherence isomorphisms (and thus all identities) and is closed under $\circ$ and $\otimes$.

For coherence isomorphisms we prove a stronger claim and show that all isomorphisms are $\A$-secure. Let $f\colon (A,r)\to (B,s)$ be an isomorphism so that $f$ is an isomorphism $A\to B$ in $\cat{C}$, and consider $a\in \A(f)$ with $\dom(a)=A$. Then $R(a)r=R(a)R(f^{-1})R(f)r=R(a)R(f^{-1})s$, so it suffices to show that $af^{-1}\in\A(\id[B])$. Property (i) of $\A$ implies that $(f^{-1})\in\A(f^{-1})$ so that property (ii) gives us $af^{-1}\in\A(ff^{-1})=\A(\id[B])$, as desired.

Assume now that $f\colon (A,r)\to (B,s)$ and $g\colon (B,s)\to (C,t)$ are $\A$-secure. Given $h\in\A(g\circ f)$ with domain $A$, factorize it as $g'\circ f'$ as guaranteed by (ii). As $f$ is $\A$-secure, there is some $b\in \A(\id[B])$ with $R(f')r=R(b)s$ and by (ii) $g'b\in\A(g)$, so that security of $g$ implies the existence of $c\in\A(\id[B])$ such that $R(g'b)(s)=R(c)t$. Thus $R(g'f')r=R(g')R(b)s=R(c)t$ showing that $g\circ f$ is $\A$-secure.

To show that secure maps are closed under $\otimes$, let $f\colon (A,r)\to (B,s)$ and $g\colon (C,t)\to (D,u)$ be $\A$-secure. Given $h\in \A(f\otimes g)$ with domain $A\otimes C$, factorize it as $h'\circ (f'\otimes g')$ as guaranteed by (iii). Then security of $f$ and $g$ gives us $b\in\A(\id[B])$ and $d\in\A(\id[D])$ so that $R(f')r=R(b)s$ and $R(g')t=R(d)u$.
We now claim that the diagram
  \[\begin{tikzpicture}
    \matrix (m) [matrix of math nodes,row sep=2em,column sep=4em,minimum width=2em]
    {
     I_\Set=\{*\} & R(A)\times R(C) & R(A\otimes C) \\
     R(B)\times R(D) & R(Y)\times R(Z) & R(Y\otimes Z) & R(X) \\
     & R(B\otimes D) & \\};
    \path[->]
    (m-1-1) edge node [left] {$(s,u)$} (m-2-1)
           edge node [above] {$(r,t)$} (m-1-2)
    (m-1-2) edge node [right] {$R(f')\times R(g')$} (m-2-2)
            edge node [above] {} (m-1-3)
    (m-1-3) edge node [right] {$R(f'\otimes g')$} (m-2-3)
            edge[out=0,in=135] node [above,sloped] {$R(h)$} (m-2-4)
    (m-2-1) edge node [above] {$R(b)\times R(d)$}  (m-2-2)
            edge node [left] {$$} (m-3-2)
    (m-2-2) edge node [below] {$$}  (m-2-3)
    (m-2-3) edge node [below] {$R(h')$}  (m-2-4)
    (m-3-2) edge node [below,sloped] {$R(b\otimes d)$} (m-2-3);
  \end{tikzpicture}\]
commutes, where the unlabelled arrows come from the natural transformation $R(-)\times R(-)\to R(-\otimes - )$ that is part of the lax monoidal structure on $R$. The top left square commutes by security of $f$ and $g$, and the rightmost shape commutes by the factorization $h=h'\circ (f'\otimes g')$. The remaining  two sub-diagrams are naturality squares and hence commute. Hence the whole diagram commutes. As the top path equals $R(h)(r\otimes t)$ and the bottom path equals $R(h'\circ (b\otimes d))(s\otimes u)$, we see that $h'\circ (b\otimes d)\in \A(\id[B]\otimes\id[D])$ witnesses that $f\otimes g$ is $\A$-secure.

To prove the claim for an arbitrary strong monoidal~$F$, observe first that $f\colon (A,r)\to (B,s)$ is $\A$-secure in $\res RF$ if, and only if $F(f)\colon (F(A),r)\to (F(B),s)$ is $\A$-secure in $\res R$. The claim can now be deduced from the existence and description of pullbacks in the category of SMCs, but we give an explicit proof: the class of $\A$-secure maps in $\res RF$ contains all isomorphisms and is closed under composition because it is so in $\res R$. As~$F$ is strong monoidal, the square
\[\begin{tikzpicture}
    \matrix (m) [matrix of math nodes,row sep=2em,column sep=4em,minimum width=2em]
    {
     F(A\otimes C) & F(B\otimes D) \\
     F(A)\otimes F(C) & F(B)\otimes F(D) \\};
    \path[->]
    (m-1-1) edge node [left] {$\cong$} (m-2-1)
           edge node [above] {$F(f\otimes g)$} (m-1-2)
    (m-2-2) edge node [right] {$\cong$} (m-1-2)
    (m-2-1) edge node [below] {$F(f)\otimes F(g)$}  (m-2-2);
  \end{tikzpicture}\]
commutes in $\CC$. If $f\colon (A,r)\to (B,s)$ and $g\colon (C,t)\to (D,u)$ are $\A$-secure in $\res RF$, then $F(f)$ and $F(g)$ are $\A$-secure in $\res R$. The case $F=\id[\CC]$ implies that $F(f)\otimes F(g)$ is $\A$-secure so that $F(f\otimes g)$ is $\A$-secure as a composite of secure maps, which means that $f\otimes g$ is $\A$-secure in $\res RF$ as desired.
\end{proof}

\noindent So far we have discussed security only against a single, fixed subset of dishonest parties, while in multi-party computation it is common to consider security against any subset containing \eg at most $n/3$ or $n/2$ of the parties. However, as monoidal subcategories are closed under intersection, we immediately obtain composability against multiple attack models.
\begin{cor}\label{cor:simultaneoussafety}
Given a non-empty family of functors $(\cat{D}\xrightarrow{F_i}\cat{C}_i\xrightarrow{R_i}\Set)_{i\in I}$ with $R_iF_i=R_jF_j=:R$ for all $i,j\in I$ and attack models $\A_i$ on $\CC_i$ for each $i$, the class of maps in $\res R$ that is secure against every $\A_i$ is a sub-SMC of $\res R$.
\end{cor}
Using Corollary~\ref{cor:simultaneoussafety} one readily obtains composability of protocols that are simultaneously secure against different attack models $\A_i$. Thus one could, in principle, consider composable cryptography in an $n$-party setting where some subsets are honest-but-curious, some might be outright malicious but have limited computational power, and some subsets might be outright malicious but not willing or able to coordinate with each other, without reproving any composition theorems.

While the security definition of $f$ quantifies over $\A(f)$, which may be infinite, under suitable conditions it is sufficient to check security only on a subset of $\A(f)$, so that whether $f$ is $\A$-secure often reduces to finitely many equations.

\begin{defi}\label{def:initialattack}
Given $f\colon A\to B$, a subset $X$ of $\A(f)$ is said to be \emph{initial} if any $f'\in\A(f)$ with $\dom(f')=A$ can be factorized as $b\circ a$ with $a\in X$ and $b\in\A(\id[B])$.
\end{defi}
\begin{thm}\label{thm:initialattacks}
 Let $f\colon (A,r)\to (B,s)$ define a morphism in the resource theory induced by $F\colon\cat{D}\to\cat{C}$ and $R\colon \cat{C}\to\cat{Set}$ and let $\A$ be an attack model on $\CC$. If $X\subset\A(F(f))$ is initial, then $f$ is $\A$-secure if, and only if the security condition holds against attacks in $X$, \ie if for any $f'\in X$ with $\dom(f')=F(A)$ there is $b\in\A(\id[F(B)])$ such that $R(f')r=R(b)s$.
\end{thm}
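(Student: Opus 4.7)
My plan is to prove the theorem by observing that one direction is immediate and the other direction reduces to the composition property (ii) of attack models.

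The forward direction is trivial: since $X \subseteq \A(F(f))$, if $f$ is $\A$-secure then in particular the security condition holds for all $f' \in X$ with $\dom(f') = F(A)$.

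For the converse, assume the security condition holds against $X$. Let $a \in \A(F(f))$ be an arbitrary attack with $\dom(a) = F(A)$; I must produce $c \in \A(\id[F(B)])$ with $\dom(c) = F(B)$ and $R(a)r = R(c)s$. By initiality of $X$, factorize $a = b \circ a'$ with $a' \in X$ and $b \in \A(\id[F(B)])$. Since $\dom(a) = F(A)$, the factorization forces $\dom(a') = F(A)$. By hypothesis applied to $a' \in X$, there exists $b' \in \A(\id[F(B)])$ with $\dom(b') = F(B)$ and $\cod(b') = \cod(a')$, satisfying $R(a')r = R(b')s$. Note that $\cod(b') = \cod(a') = \dom(b)$, so $b \circ b'$ is defined.

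Now set $c := b \circ b'$. By property (ii) of the attack model, $b \circ b' \in \A(\id[F(B)] \circ \id[F(B)]) = \A(\id[F(B)])$, and clearly $\dom(c) = \dom(b') = F(B)$. Finally,
\[ R(c)s = R(b)R(b')s = R(b)R(a')r = R(b \circ a')r = R(a)r, \]
as required. The only subtle point is bookkeeping of domains and codomains under the factorization to ensure the composite $b \circ b'$ is well-typed, but this is forced by the constraint $\dom(a) = F(A)$ together with the factorization $a = b \circ a'$; no real obstacle arises.
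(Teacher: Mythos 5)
Your proof is correct and follows essentially the same route as the paper's: factor the arbitrary attack through an initial one via Definition~\ref{def:initialattacks}, apply the hypothesis to the initial piece, and use property (ii) of attack models to see that the composite simulator lies in $\A(\id[F(B)])$. (Your extra bookkeeping of codomains is harmless and in fact slightly more careful than the paper, whose final sentence names $b\circ a$ rather than the intended witness $b\circ c$.)
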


\begin{proof}
Let $f'\in\A(F(f))$ be an attack satisfying $\dom(f')=F(A)$. As $X$ is initial, we can factorize $f'$ as $b\circ a$ with $a\in X$ and $b\in\A(\id[F(B)])$. As $f$ is secure against attacks in $X$ and $a\in X$, there is a $c\in\A(\id[F(B)])$  such that $R(a)r=R(c)s$. Then $R(f')r=R(b)R(a)r=R(b)R(c)s=R(b\circ c)s$ so that $b\circ a\in \A(\id[F(B)])$ witnesses security against $f'$.
\end{proof}

Let us return to the example of $\CC^n\to\CC$ with the first $k$ agents being honest and the final $n-k$ dishonest and collaborating. Then we can take a singleton as our initial subset of attacks on $\f$, and this is given by $\f|_{[k]}\otimes (\bigotimes_{i=k+1}^n\id)$. Intuitively, this represents a situation where the dishonest parties $k+1,\dots ,n$ merely stand by and forward messages between the environment and the functionality, so that initiality can be seen as explaining ``completeness of the dummy adversary''~\cite[Claim 11]{Can01} in UC-security. In this case the security condition can be equivalently phrased by saying that there exists $b\in\A([\id[b]])$ satisfying the equation of Figure~\ref{fig:initial_attack},
which reproduces the pictures of~\cite{MT13}. Similarly, for classical honest-but-curious adversaries one usually only considers the initial such adversary, who follows the protocol otherwise except that they keep track of the protocol transcript.
\begin{thm}\label{thm:perfectlifting}
In the resource theory of $n$-partite states, if $(f_1,\dots f_n)$ is secure against some subset $J$ of $[n]$ and $F$ is a strong monoidal, then $(Ff_1,\dots, Ff_n)$ is secure against~$J$ as well.
\end{thm}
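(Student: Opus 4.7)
The plan is to reduce to the case of the \emph{dummy adversary} via Theorem~\ref{thm:initialattacks} and then transport the resulting simulator across $F$ using strong monoidality.

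First I would observe that, in the $n$-partite resource theory with dishonest set $J$, the singleton consisting of the ``dummy attack'' $\f|_{J^c}\otimes\bigotimes_{i\in J}\id_{A_i}$ is an initial subset of $\A(\f)$ in the sense of Definition~\ref{def:initialattack}: an arbitrary attack factors as the honest parties performing $f_i$ for $i\in J^c$ and the identity on the $A_i$ for $i\in J$, followed by an arbitrary morphism in $\A(\id)$ acting on the dishonest wires. Hence by Theorem~\ref{thm:initialattacks}, security of $(f_1,\dots,f_n)$ against $J$ yields a simulator $b\in\A(\id)$ with
\[
\bigl(\f|_{J^c}\otimes \id_{\bigotimes_{i\in J}A_i}\bigr)\circ r \;=\; \bigl(\id_{\bigotimes_{i\in J^c}B_i}\otimes b\bigr)\circ s
\]
as an equation between states $I\to\bigotimes_{i\in J^c}B_i\otimes X$ in $\cat{C}$.

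Next I would apply $F$ to this equation and use strong monoidality to identify $F$ of tensor products with tensor products of $F$-images (up to the coherence iso $F(X)\otimes F(Y)\xrightarrow{\cong}F(X\otimes Y)$) and $F(I_{\cat{C}})$ with $I_{\cat{D}}$. Concretely, viewing the states pictorially and applying $F$ box-by-box using the graphical calculus for strong monoidal functors, the equation becomes
\[
\bigl(F\!\f|_{J^c}\otimes \id_{\bigotimes_{i\in J}FA_i}\bigr)\circ Fr \;=\; \bigl(\id_{\bigotimes_{i\in J^c}FB_i}\otimes Fb\bigr)\circ Fs,
\]
where I implicitly absorb the coherence isomorphisms into $Fr$, $Fs$ and the identification $F(\id_{A_i})=\id_{FA_i}$. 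This exhibits $Fb$ as a simulator against the dummy attack on $(Ff_1,\dots,Ff_n)$ in $\cat{D}$.

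Finally I would note that the dummy-attack singleton is again an initial subset of the attack class on $(Ff_1,\dots,Ff_n)$ in $\cat{D}$, by exactly the same argument as in $\cat{C}$: the $n$-partite attack model with $J$ dishonest is defined structurally from the product $\cat{D}^n\to\cat{D}$ and does not depend on $F$. Invoking Theorem~\ref{thm:initialattacks} once more in $\cat{D}$, the existence of $Fb$ as a simulator against the dummy suffices to establish $\A$-security of $(Ff_1,\dots,Ff_n)$ against $J$. The only real subtlety is bookkeeping the coherence isomorphisms relating $F(X\otimes Y)$ and $FX\otimes FY$ when rewriting the transported equation; this is what the graphical calculus of strong monoidal functors makes painless.
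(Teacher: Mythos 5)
Your proposal is correct and follows essentially the same route as the paper: reduce to the initial (dummy) attack, transport the resulting simulator equation across $F$ using strong monoidality (the paper does this step with the functorial-box graphical calculus of strong monoidal functors, which is exactly your ``absorb the coherence isomorphisms'' bookkeeping), and conclude by initiality of the dummy attack in the target category. The only cosmetic difference is that the paper presents the transport step as four pictures with a functor box being slid across the wires, whereas you state it as applying $F$ to the equation directly; these are the same argument.
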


\begin{proof}
Let us first spell out explicitly how the domain and codomain of $(Ff_1,\dots,Ff_n)$ depend on those of $\f$: if $\bar{f}\colon ((A_i)_{i=1}^n,r)\to ((B_i)^n_{i=1},s)$, then $Fr\colon F(I_\CC)\to F(\bigotimes_{i=1}^n A_i)$ induces a state on $\bigotimes_{i=1}^n F(A_i)$ by precomposing with the isomorphism $I_\DD\to F(I_\CC)$ and postcomposing with the isomorphism $F(\bigotimes_{i=1}^n A_i)\cong \bigotimes_{i=1}^n F(A_i)$ stemming from the strong monoidal structure of $F$. This is the state that $(Ff_1,\dots,Ff_n)$ transforms to the one induced by $F(s)$. Let us now show that this transformation is secure provided that $\f$ is.

The heart of the argument is already apparent in the case of $n=2$, so let us first show that if $(f_A,f_B)$ is secure against a malicious Bob, so is $(Ff_A,Ff_B)$. For this attack model, there is an initial attack, and the corresponding security constraint is depicted in Figure~\ref{fig:initial_attack}. Then security of $(Ff_A,Ff_B)$ can be shown graphically using the functorial boxes of~\cite{mellies2006functorial} by considering the equations
\[
\begin{pic}
\node at (.8,-.6) {$F$};
  \setlength\minimumstatewidth{15mm}
    \node[nmorphismtwocell,scale=.5,font=\normalsize] (f) at (-.5,.5) {$f_{A}$};
    \node[state] (x) at (0,0) {$r$};
    \draw (f.south) to node[left] {$A$} ([xshift=-3.5pt]x.A);
    \draw (f.north) to +(0,.6) node[right] {};
    \draw ([xshift=3.5pt]x.B) to node[right]  {$B$} +(0,.35)  to +(0,1.25);
\draw[black, fill = blue, fill opacity = 0.5, semithick] (-.9,-.75) to (-.9,1.3)
to (-0.2,1.3) to ++(0,-1) to[in=-90,out=-90] ++(0.4,0) to ++(0,1)
to (.9,1.3) to (.9,-.75) to (-.9,-.75);
  \end{pic}=
\begin{pic}
  \setlength\minimumstatewidth{15mm}
    \node[nmorphismtwocell,scale=.5,font=\normalsize] (f) at (-.5,.5) {$f_{A}$};
    \node[state] (x) at (0,0) {$r$};
    \draw (f.south) to node[left] {$A$} ([xshift=-3.5pt]x.A);
    \draw (f.north) to +(0,.6) node[right] {};
    \draw ([xshift=3.5pt]x.B) to node[right]  {$B$} +(0,.35)  to +(0,1.25);
\draw[black, fill = blue, fill opacity = 0.5, semithick] (-.9,-.75) to (-.9,1.3)
to (-0.2,1.3) to ++(0,-.4) to[in=-90,out=-90] ++(0.4,0) to ++(0,0.4)
to (.9,1.3) to (.9,-.75) to (-.9,-.75);
\node at (.8,-.6) {$F$};
  \end{pic}=\begin{pic}
    \setlength\minimumstatewidth{15mm}
     \node[nmorphismtwocell,scale=.5,font=\normalsize] (g) at (.5,.5) {$b$};
    \node[state] (x) at (0,0) {$s$};
    \draw ([xshift=-3.5pt]x.A) to node[left] {$A$} +(0,.35) to +(0,1.25);
    \draw (g.south) to node[right] {$B$} ([xshift=3.5pt]x.B);
    \draw (g.north) to +(0,.6) node[right] {};
   \draw[black, fill = blue, fill opacity = 0.5, semithick] (-.9,-.75) to (-.9,1.3)
to (-0.2,1.3) to ++(0,-.4) to[in=-90,out=-90] ++(0.4,0) to ++(0,0.4)
to (.9,1.3) to (.9,-.75) to (-.9,-.75);
\node at (.8,-.6) {$F$};
  \end{pic}
  =
  \begin{pic}
    \setlength\minimumstatewidth{15mm}
     \node[nmorphismtwocell,scale=.5,font=\normalsize] (g) at (.5,.5) {$b$};
    \node[state] (x) at (0,0) {$s$};
    \draw ([xshift=-3.5pt]x.A) to node[left] {$A$} +(0,.35) to +(0,1.25);
    \draw (g.south) to node[right] {$B$} ([xshift=3.5pt]x.B);
    \draw (g.north) to ++(0,.6) node[right] {};
   \draw[black, fill = blue, fill opacity = 0.5, semithick] (-.9,-.75) to (-.9,1.3)
to (-0.2,1.3) to ++(0,-1) to[in=-90,out=-90] ++(0.4,0) to ++(0,1)
to (.9,1.3) to (.9,-.75) to (-.9,-.75);
\node at (.8,-.6) {$F$};
  \end{pic}
  \]
where the second equation is security of the original protocol and the other two equations rely on $F$ being strong monoidal. The case of an arbitrary $n$ can be shown similarly by drawing a similar picture with $n-1$ dips in the box.
\end{proof}

\noindent For instance, if the inclusion of classical interactive computations into quantum ones is strong monoidal, \ie respects sequential and parallel composition (up to isomorphism), then unconditionally secure classical protocols are also secure in the quantum setting, as shown in the context of UC-security in~\cite[Theorem 15]{Unr10}. More generally, this result implies that the construction of the category of $n$-partite transformations secure against any fixed subset of $[n]$ is functorial in $\cat{C}$, and this is in fact also true for any family of subsets of $[n]$ by Corollary~\ref{cor:simultaneoussafety}.

\section{Further extensions of the framework}\label{sec:extensions}

The discussion above has been focused on perfect security, so that the equations defining security hold exactly. This is often too high a standard for security to hope for, and consequently cryptographers routinely work with computational or approximate security. We model this in two ways. The first approach replaces equations with an equivalence relation abstracting from the idea that the end results are ``computationally indistinguishable'' rather than strictly equal. The latter approach amounts to working in terms of a (pseudo)metric, that quantifies how close we are to the ideal resource, so that one can discuss approximately correct transformations or sequences of transformations that succeed in the limit. The first approach is mathematically straightforward and we discuss it next, while the second approach is discussed in Sections~\ref{sec:metric} and~\ref{sec:metricsecurity}. The second approach, while mathematically more involved, is needed to model protocols that are ``close enough'' to being computationally indistinguishable from the ideal, and thus to model statements in finite-key cryptography~\cite{TLGR12}.

\subsection{Security up to indistinguishability}\label{sec:equivalence}

We let $\Equ$ stand for the category of equivalence relations: its objects are sets $(X,\approx_X)$ equipped with an equivalence relation, and morphisms $(X,\approx_X)\to(Y,\approx_Y)$ are given by functions $f\colon X\to Y$  that respect the equivalence relation in that $x\approx_X x'\Rightarrow f(x)\approx_Y f(x')$. From now on, we will omit the subscript and write $\approx$ for the equivalence relation associated with any object of $\Equ$. The cartesian product of sets with the equivalence relations defined pointwise induces a symmetric monoidal structure on this category, and the operation $(X,\approx)\mapsto X/\approx_X$ of forming quotients gives a (strong) symmetric monoidal functor $E\colon \Equ\to\Set$. We now relax our notions of correctness and security when we have a symmetric monoidal functor $R\colon \CC\to\Equ$.

\begin{defi}\label{def:secuptoindistinguishability} Let $R\colon \CC\to\Equ$ be a symmetric monoidal functor.  Given $r\in R(A)$ and $s\in R(B)$, a morphism $f\colon A\to B$ is an \emph{$\approx$-correct transformation $(A,r)\to (B,s)$} if $R(f)r\approx s$. We let $\reseq R$ denote the resource theory of $\approx$-correct transformations.

Consider now symmetric monoidal functors $F\colon\cat{D}\to\cat{C}$, $R\colon \cat{C}\to\Equ$ with $F$ strong monoidal and $R$ lax monoidal and an attack model $\A$ on $\cat{C}$.  Let $f\colon (A,r)\to (B,s)$ define a morphism in the resource theory $\reseq   RF$. We say that $f$ is \emph{secure up to $\approx$} against an attack model $\A$ on $\cat{C}$ (or $\A$-secure up to $\approx$) if for any $a\in \A(F(f))$ with $\dom(a)=F(A)$ there is $b\in\A(\id[F(B)])$ with $\dom(b)=F(B)$ such that $R(a)r\approx R(b)s$, \ie the square from Definition~\ref{def:attack} commutes up to $\approx$.
\end{defi}

\begin{cor} Let $R,F,\A$ be as in Definition~\ref{def:secuptoindistinguishability}. The category $\reseq RF$ is a symmetric monoidal category, and the category of $\approx$-secure transformations against $\A$ is a wide sub-SMC of it.
\end{cor}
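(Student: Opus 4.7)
The plan is to reduce everything to the already-proven Theorem~\ref{thm:composition} by composing with the quotient functor. Since $E\colon \Equ \to \Set$ is strong symmetric monoidal, the composite $ER\colon \cat{C} \to \Set$ is lax symmetric monoidal, so Theorem~\ref{thm:composition} immediately applies to the pair of functors $F\colon \cat{D} \to \cat{C}$ and $ER\colon \cat{C} \to \Set$, giving that $\res (ER) F$ is a symmetric monoidal category in which $\A$-secure transformations form a wide sub-SMC.

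The next step is to relate $\reseq RF$ to $\res (ER) F$. I would introduce a functor $Q\colon \reseq RF \to \res (ER) F$ defined on objects by $(A, r) \mapsto (A, [r])$ and as the identity on underlying morphisms in $\cat{D}$. Well-definedness on morphisms is immediate: $f\colon (A,r) \to (B,s)$ in $\reseq RF$ means $RF(f)r \approx s$, equivalently $ERF(f)[r] = [s]$. Full faithfulness follows from the same equivalence, and $Q$ is essentially surjective by choosing a representative of each equivalence class. Strict compatibility with the monoidal product holds because the lax structure maps of $ER$ are induced from those of $R$ by passage to the quotient, and $Q$ acts as the identity on the underlying $\cat{D}$-data; hence $Q$ is a strict symmetric monoidal equivalence of categories. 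Transporting the SMC structure from $\res(ER)F$ via $Q$ endows $\reseq RF$ with the required symmetric monoidal structure.

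Finally, unwinding definitions shows that $f$ is $\A$-secure up to $\approx$ in $\reseq RF$ if and only if $Qf$ is $\A$-secure in $\res (ER) F$: the attack classes $\A(F(f))$ and $\A(\id[F(B)])$ are the same on both sides, while the required square commutes up to $\approx$ in $R$ precisely when its image under $E$ commutes on the nose in $\Set$. Consequently, the wide sub-SMC of $\A$-secure maps in $\res (ER) F$ provided by Theorem~\ref{thm:composition} pulls back along $Q$ to the class of $\A$-secure up to $\approx$ maps in $\reseq RF$, so the latter is itself a wide sub-SMC. The only non-mechanical point to check carefully is that the lax monoidal structure on $ER$ genuinely descends from that of $R$, which is automatic since the structure maps of $R$ are $\Equ$-morphisms and therefore respect the equivalence relation.
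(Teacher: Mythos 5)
Your proposal is correct and follows essentially the same route as the paper: the paper's proof also passes to the composite $ERF$ with the quotient functor $E\colon\Equ\to\Set$, observes that $f$ defines a map $(A,r)\to(B,s)$ in $\reseq RF$ iff it defines a map $(A,[r])\to(B,[s])$ in $\res ERF$ (and likewise for security up to $\approx$ versus on-the-nose security), and then invokes Theorem~\ref{thm:composition}. Your write-up merely makes explicit the equivalence $Q$ and the descent of the lax monoidal structure, which the paper leaves as "straightforward to check."
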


\begin{proof} Consider the composite $ERF$, where $E\colon \Equ\to\Set$ sends a set with an equivalence relation to its set of equivalence classes. It is straightforward to check that $f\colon A\to B$ defines a map $(A,r)\to (B,s)$ in $\reseq RF$ iff it defines a map $(A,[r])\to (B,[s])$ in $\res ERF$, implying the first claim. Similarly, $f$ is secure up to $\approx$ in $\reseq RF$ iff it is secure in $\res ERF$, so the second claim follows from Theorem~\ref{thm:composition}.
\end{proof}

It is also straightforward to adapt and prove Corollary~\ref{cor:simultaneoussafety} in order to state that maps that are secure up to $\approx$ against a set of attack models form an SMC.

\begin{exa} Let $\CC$ be a category equipped with a \emph{monoidal congruence}: a family of equivalence relations $\approx$ on each hom-set of $\CC$ that respects $\otimes$ and $\circ$ in that $f\approx f'$ and $g\approx g'$ imply $gf\approx g'f'$ (whenever defined) and $g\otimes f\approx g'\otimes f'$. In other words, let $\CC$ be an SMC enriched in $\Equ$. Then $\hom(I,-)$ gives a symmetric monoidal functor $\CC\to \Equ$, so one can work with $\approx$-correct resource theory of states and with $\approx$-security against a malicious subset of parties. Moreover, in this situation we also obtain a symmetric monoidal functor \[\hom(I,-)\colon \ncomb(\CC)\to\Equ,\] enabling us to work with $\approx$-correct and $\approx$-secure transformations  of morphisms of $\CC$.
\end{exa}

In Section~\ref{sec:dhke} we will give an example of this example, where $\approx$ will denote computational indistinguishability \ie the inability of efficient computational systems to tell two systems apart (except for a negligible advantage), with Diffie-Hellman key exchange then giving rise to a transformation that is secure up to $\approx$.

\subsection{Approximately correct transformations}\label{sec:metric}

We now move to the metric case. If for each $A$ the set of resources $R(A)$ associated to it is not just a set but has the structure of a metric space, using this additional structure enables one to construct other resource theories where instead of transforming $r\in R(A)$ to $s\in R(B)$ exactly we are happy to be able to get (arbitrarily) close. While such approximate (or asymptotic) conversions are readily studied in the physics literature (see~\eg~\cite[V.A and V.B]{chitambar:resource}), as far as we are aware this has not been formalized in the categorical context, so we first describe the situation without security constraints. As many interesting measures of distance in cryptography are in fact pseudometrics (non-equal functionalities might have distance~$0$), we work in a more general setting.

\begin{defi}
An \emph{extended pseudometric space} is a pair $(X,d)$ where $X$ is a set and $d\colon X\times X\to [0,\infty]$ is a function satisfying (i) $d(x,x)=0$, (ii) $d(x,y)=d(y,x)$ and (iii) $d(x,z)\leq d(x,y)+d(y,z)$ for all $x,y,z\in X$. A \emph{short map} $(X,d)\to (Y,e)$ is a function $f\colon X\to Y$ satisfying $d(x,y)\geq e(f(x),f(y))$. We will denote the category of extended pseudometric spaces and short maps simply by $\Met$. We equip $\Met$ with a monoidal structure where $(X,d)\otimes (Y,e)$ is given by equipping $X\times Y$ with $\ell^1$-distance, \ie the distance between $(x,y)$ and $(x',y')$ is given by $d(x,x')+e(y,y')$.

Let $R\colon \cat{C}\to\Met$ be a symmetric monoidal functor. Given $r\in R(A)$, $s\in R(B)$ and $\epsilon>0$, a morphism $f\colon A\to B$ is an \emph{$\epsilon$-correct transformation $(A,r)\to (B,s)$} if $d(R(f)r,s)<\epsilon$. The resource theory $\resmet R$ of \emph{asymptotically correct conversions} is defined as follows: an object is given by a pair $(A,r)$ where $A$ is an object of $\CC$ and $r\in R(A)$. A morphism $(A,r)\to (B,s)$ is given by a sequence $(f_n)_{n\in\N}$ of maps $A\to B$ in $\CC$ that is eventually $\epsilon$-correct for any $\epsilon>0$, \ie for which $R(f_n)r\to s$ as $n\to\infty$.
\end{defi}

In some resource theories, the relevant asymptotic transformations are allowed to use more and more copies of the resource, so that instead of a sequence of maps $A\to B$ we have a sequence $(f_n)_{n\in\N}$ of maps $A^{\otimes n}\to B$ taking $r^{\otimes n}$ to~$s$ in the limit. The theory developed here adapts easily to this variant as well, with essentially the same proofs.

\begin{lem}\label{lem:epsiloncorrect}
Let $R\colon \cat{C}\to\Met$ be symmetric monoidal. The composite or tensor product of an $\epsilon$-correct map with an $\epsilon'$-correct map is $\epsilon+\epsilon'$-correct.
\end{lem}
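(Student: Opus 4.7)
The plan is to handle the two cases (composition and tensor) separately using (a) functoriality/monoidality of $R$, (b) the fact that all morphisms of $\Met$, including the laxator of $R$, are short maps, and (c) the triangle inequality.

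For composition, suppose $f\colon(A,r)\to(B,s)$ is $\epsilon$-correct and $g\colon(B,s)\to(C,t)$ is $\epsilon'$-correct, so $d(R(f)r,s)<\epsilon$ and $d(R(g)s,t)<\epsilon'$. By functoriality $R(gf)r=R(g)(R(f)r)$, and since $R(g)$ is a short map we have $d(R(g)R(f)r,R(g)s)\leq d(R(f)r,s)<\epsilon$. The triangle inequality then gives
\[d(R(gf)r,t)\leq d(R(g)R(f)r,R(g)s)+d(R(g)s,t)<\epsilon+\epsilon',\]
so $gf$ is $(\epsilon+\epsilon')$-correct.

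For the tensor product, suppose $f\colon(A,r)\to(B,s)$ is $\epsilon$-correct and $g\colon(C,t)\to(D,u)$ is $\epsilon'$-correct. Let $\mu_{X,Y}\colon R(X)\otimes R(Y)\to R(X\otimes Y)$ denote the laxator of $R$, so that by definition $r\otimes t=\mu_{A,C}(r,t)$ and $s\otimes u=\mu_{B,D}(s,u)$. Naturality of $\mu$ gives $R(f\otimes g)\circ\mu_{A,C}=\mu_{B,D}\circ(R(f)\otimes R(g))$, hence $R(f\otimes g)(r\otimes t)=\mu_{B,D}(R(f)r,R(g)t)$. Using that $\mu_{B,D}$ is a short map from $(R(B)\otimes R(D),\ell^1)$ to $R(B\otimes D)$, we obtain
\[d\bigl(R(f\otimes g)(r\otimes t),\,s\otimes u\bigr)\leq d(R(f)r,s)+d(R(g)t,u)<\epsilon+\epsilon',\]
establishing that $f\otimes g$ is $(\epsilon+\epsilon')$-correct.

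No step here is a genuine obstacle: the argument is essentially ``shortness plus triangle inequality''. The only point to be careful about is the bookkeeping around the laxator $\mu$, since strictly speaking $r\otimes t$ lives in $R(A\otimes C)$ rather than in $R(A)\otimes R(C)$, and one has to invoke naturality of $\mu$ together with the fact that $\mu$ is a morphism of $\Met$ (hence short with respect to the chosen $\ell^1$ monoidal structure on $\Met$). Once this is set up, both parts of the lemma are one application of the triangle inequality.
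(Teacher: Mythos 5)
Your proof is correct and follows essentially the same route as the paper: shortness of $R(g)$ plus the triangle inequality for composition, and shortness of the monoidal structure map together with the $\ell^1$ definition of $\otimes$ in $\Met$ for the tensor case. You are in fact slightly more careful than the paper, which leaves the laxator implicit and contains a small typo ($d(R(gf)r,s)$ where $d(R(gf)r,t)$ is meant) that your write-up silently corrects.
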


\begin{proof}
Assume that $f$ is an $\epsilon$-correct transformation $(A,r)\to (B,s)$ and that $g$ is an $\epsilon'$-correct transformation $(B,s)\to (C,t)$. As $R(g)$ is a short map, this gives $d(R(gf)r,s)\leq d(R(gf)r,R(g)s)+d(R(g)s,t)<\epsilon+\epsilon'$.

Assume now that $f:(A,r)\to (B,s)$ is $\epsilon$-correct and that $g\colon(C,t)\to (D,u)$ is $\epsilon'$-correct. Then $d(R(f\otimes g)r\otimes t,s\otimes u)\leq d((R(f)r,R(g)t),(s,u))=d(R(f)r,s)+d(R(g)t,u)<\epsilon+\epsilon'$.\looseness=-1
\end{proof}

\begin{thm}
The resource theory $\resmet R$ of asymptotically correct conversions induced by $R\colon \cat{C}\to\Met$ is a symmetric monoidal category.
\end{thm}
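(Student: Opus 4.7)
The plan is to build $\resmet R$ by equipping sequences of morphisms in $\CC$ with a pointwise categorical and monoidal structure inherited from $\CC$, and to verify that the asymptotic correctness condition is preserved under identities, composition, and tensor product. The heavy lifting is already done by Lemma~\ref{lem:epsiloncorrect}, so what remains is bookkeeping.

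First I would check the category axioms. The identity on $(A,r)$ is the constant sequence $(\id[A])_{n\in\N}$; functoriality gives $R(\id[A])r=r$, so this sequence is $0$-correct at every index and hence a legitimate morphism. For composition of $(f_n)\colon (A,r)\to(B,s)$ and $(g_n)\colon(B,s)\to(C,t)$ I would take the pointwise composite $(g_n\circ f_n)_{n\in\N}$. Given $\epsilon>0$, choose $N_1$ so that $f_n$ is $\epsilon/2$-correct for $n\geq N_1$ and $N_2$ so that $g_n$ is $\epsilon/2$-correct for $n\geq N_2$; then Lemma~\ref{lem:epsiloncorrect} yields $\epsilon$-correctness of $g_n\circ f_n$ for $n\geq\max(N_1,N_2)$. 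Associativity and unitality of this composition are inherited strictly and pointwise from $\CC$.

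Next I would put the symmetric monoidal structure in place. On objects, set $(A,r)\otimes(B,s):=(A\otimes B,\,r\otimes s)$, where $r\otimes s\in R(A\otimes B)$ denotes the image of $(r,s)$ under the lax monoidal structure map $R(A)\times R(B)\to R(A\otimes B)$; the unit is $(I_\CC,u)$ with $u\in R(I_\CC)$ picked out by the lax unit morphism $I_\Met\to R(I_\CC)$. On morphisms, define $(f_n)\otimes(g_n):=(f_n\otimes g_n)_{n\in\N}$; the tensor clause of Lemma~\ref{lem:epsiloncorrect} shows this stays in $\resmet R$. The associator, unitors, and braiding lift as constant sequences of the corresponding coherence isomorphisms in $\CC$, and these are $0$-correct because naturality of the lax monoidal structure of $R$ guarantees that they transport the distinguished elements $r\otimes(s\otimes t)$, $(r\otimes s)\otimes t$, etc.\ to one another on the nose. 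The hexagon, triangle, and pentagon identities then hold pointwise in $\resmet R$ because they hold in $\CC$.

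I expect no genuine obstacle: the construction is completely parallel to the category of elements $\res RF$ and all verifications reduce to pointwise applications of Lemma~\ref{lem:epsiloncorrect} together with the SMC structure of $\CC$. The only point requiring a modicum of care is threading through the lax monoidal structure of $R$ to ensure that $r\otimes s\in R(A\otimes B)$ is well-defined and transported coherently by the inherited isomorphisms; but this is standard machinery, analogous to the monoidal Grothendieck construction of~\cite{moeller:monoidalgrothendieck} already invoked for $\res RF$.
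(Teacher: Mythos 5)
Your proposal is correct and follows essentially the same route as the paper: the coherence isomorphisms are realized as constant sequences of exact (hence $0$-correct) conversions coming from the underlying resource theory of states, and closure under $\circ$ and $\otimes$ is obtained from Lemma~\ref{lem:epsiloncorrect} by the eventually-$\epsilon/2$-correct argument. Your write-up simply spells out the pointwise inheritance of the coherence axioms and the role of the lax monoidal structure map in slightly more detail than the paper does.
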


\begin{proof}
The coherence isomorphisms are given by constant sequences of coherence isomorphisms of the resource theory induced by $\cat{C}\xrightarrow{R}\Met\to\Set$, and this implies that they satisfy the required equations of an SMC. Moreover, as they are exact resource conversions, they are also asymptotically correct. Thus it suffices to check that asymptotically correct conversions are closed under $\circ$ and $\otimes$. But this follows from Lemma~\ref{lem:epsiloncorrect}: given two asymptotically correct transformations and $\epsilon>0$, the two transformations are eventually $\epsilon/2$-correct after which their composite (whether $\circ$ or $\otimes$) is $\epsilon$-correct.
\end{proof}

In particular, if $\CC$ is $\Met$-enriched, the functor $\hom(I,-)$ lands in $\Met$ so that one can discuss asymptotic transformations between states.

While in resource theories one first tries to understand whether a given transformation is possible at all, once some resource conversion has been shown to be possible one might ask for more. In particular, in the asymptotic setting one might want the sequence $(f_n)_{n\in \N}$ to be efficient (and in particular computable) in $n$, and to converge to the target fast in terms of some measure of cost of implementing $f_n$. One might even want to be able to give an explicit bound on the distance between $R(f_n)r$ and $s$, as is done for instance in finite-key cryptography~\cite{TLGR12}.
However, such considerations are best addressed when working inside a specific resource theory rather than being hardwired into the definitions at the abstract level. Conversely, if one can show that a given asymptotic transformation is impossible even for such a permissive notion of transformation, the resulting no-go theorem is stronger than if one worked with ``efficient'' sequences.

\subsection{Computational security}\label{sec:metricsecurity}

We now show that one can reason composably about computational security in such a metric setting. The proofs follow rather straightforwardly from the definitions we have by using the structure at hand: most importantly, from the triangle inequality of any metric space and the fact that our maps between metric spaces are contractive. For concrete models of cryptography, one might need to do nontrivial work to show that one has all this structure, after which our theorems apply.

\begin{defi}
Consider $F\colon\cat{D}\to\cat{C}$ and $R\colon \cat{C}\to\Met$ and an attack model $\A$ on $\cat{C}$. For an $\epsilon>0$ and an $\epsilon$-correct map $f\colon (A,r)\to (B,s)$, we say that $f$ is an \emph{$\epsilon$-secure transformation} $(A,r)\to (B,s)$ against $\A$ if for any $a\in \A(F(f))$ with $\dom(a)=F(A)$ there is $b\in\A(\id[F(B)])$ such that $d(R(a)r,R(b)s)<\epsilon$.

Let $(f_n)_{n\in\N}\colon (A,r)\to (B,s)$ now define an asymptotically correct conversion in $\resmet RF$. We say that $(f)_{n\in\N}$ is \emph{asymptotically secure} against $\A$ (or asymptotically $\A$-secure) if it is eventually $\epsilon$-secure for any $\epsilon>0$. Explicitly, $(f_n)_{n\in\N}\colon (A,r)\to (B,s)$ is asymptotically secure if for any $\epsilon>0$ there is a threshold $k\in\N$ such that for any $n>k$ and any $a\in \A(F(f_n))$ with $\dom(a)=F(A)$ there is $b\in\A(\id[F(B)])$ such that $d(R(a)r,R(b)s)<\epsilon$.
\end{defi}

Note that while we changed our notion of security both in the presence of $\approx$ and in the
(pseudo)metric setting, we kept our notion of an attack model the same. We expect that the theory would carry over even if one defined ``attack models up to $\approx$''  (and ``metric attack models'') by relaxing Definition~\ref{def:attack} to state that the required factorizations only exist up to $\approx$ (up to every $\epsilon>0$). However, we do not pursue these generalizations here as they are not needed for the present work.  

We now show that bounds on security compose additively.

\begin{lem}\label{lem:epsilonsafe} Let $R\colon \cat{C}\to\Met$ be lax monoidal and $\A$ an attack model on $\CC$. The composite or tensor product of an $\epsilon$-secure map with an $\epsilon'$-secure map is $\epsilon+\epsilon'$-secure.
\end{lem}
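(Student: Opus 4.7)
The plan is to adapt the proof of Theorem~\ref{thm:composition} by replacing the exact equalities with metric estimates, using two ingredients supplied by the hypotheses: the triangle inequality in each $R(X)$, and the fact that $R$ sends each morphism of $\CC$ to a short map in $\Met$ (so applying $R(h)$ to both sides of an inequality can only contract distances). The additive behavior of the errors is, so to speak, a shadow of the additive law for the $\ell^1$-distance that defines $\otimes$ in $\Met$, and will be used directly in the tensor case via the lax monoidal structure on $R$.

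For the composition case, suppose $f\colon (A,r)\to (B,s)$ is $\epsilon$-secure and $g\colon (B,s)\to (C,t)$ is $\epsilon'$-secure, and let $h\in\A(F(g\circ f))$ with $\dom(h)=F(A)$. Use property~(ii) of Definition~\ref{def:attack} to write $h=g'\circ f'$ with $g'\in\A(F(g))$ and $f'\in\A(F(f))$, noting $\dom(f')=F(A)$. Apply $\epsilon$-security of $f$ to produce $b\in\A(\id[F(B)])$, with $\dom(b)=F(B)$ and $\cod(b)=\dom(g')$, such that $d(R(f')r,R(b)s)<\epsilon$. Then $g'\circ b\in\A(F(g))$ by~(i) and~(ii), and $\epsilon'$-security of $g$ supplies $c\in\A(\id[F(C)])$ with $d(R(g'\circ b)s,R(c)t)<\epsilon'$. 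The estimate
\[
d(R(h)r,\,R(c)t)\;\le\;d(R(g')R(f')r,\,R(g')R(b)s)+d(R(g'b)s,R(c)t)\;\le\;d(R(f')r,R(b)s)+\epsilon'\;<\;\epsilon+\epsilon'
\]
then witnesses $(\epsilon+\epsilon')$-security of $g\circ f$ against $h$, the second inequality coming from the fact that $R(g')$ is a short map.

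For the tensor case, suppose $f\colon (A,r)\to (B,s)$ is $\epsilon$-secure and $g\colon (C,t)\to (D,u)$ is $\epsilon'$-secure, and let $h\in\A(F(f\otimes g))$ with $\dom(h)=F(A\otimes C)$. Factor $h=h'\circ (f'\otimes g')$ via property~(iii), with $f'\in\A(F(f))$, $g'\in\A(F(g))$, and $h'\in\A(\id[F(B\otimes D)])$. Security of $f$ and $g$ yield $b\in\A(\id[F(B)])$ and $e\in\A(\id[F(D)])$ with $d(R(f')r,R(b)s)<\epsilon$ and $d(R(g')t,R(e)u)<\epsilon'$. Then $h'\circ (b\otimes e)\in\A(\id[F(B\otimes D)])$. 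To bound $d\bigl(R(h)(r\otimes t),R(h'\circ(b\otimes e))(s\otimes u)\bigr)$, first contract by $R(h')$, then rewrite both sides through the naturality square for the lax structure map $\mu_{-,-}\colon R(-)\otimes R(-)\to R(-\otimes -)$ appearing in the proof of Theorem~\ref{thm:composition}, obtaining $\mu_{B,D}(R(f')r,R(g')t)$ and $\mu_{B,D}(R(b)s,R(e)u)$. Since $\mu_{B,D}$ is short from $R(B)\otimes R(D)$ equipped with $\ell^1$ to $R(B\otimes D)$, the distance is at most $d(R(f')r,R(b)s)+d(R(g')t,R(e)u)<\epsilon+\epsilon'$.

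I do not expect a real obstacle: the proof is essentially bookkeeping, promoting each $=$ in the proof of Theorem~\ref{thm:composition} to a triangle-inequality step and replacing commuting squares by short-map inequalities. The only point requiring minor care is verifying that the domains/codomains of the attack factorizations line up (so that the composites $g'\circ b$ in the sequential case and $h'\circ (b\otimes e)$ in the parallel case are actually defined and land in the right $\A(\cdot)$ class), which is already handled implicitly in the proof of Theorem~\ref{thm:composition} and carries over verbatim.
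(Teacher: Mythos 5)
Your proposal is correct and follows essentially the same route as the paper's proof: factor the attack via properties (ii) respectively (iii) of the attack model, apply the security of each factor, and combine the two error bounds using the triangle inequality together with shortness of $R(g')$ (resp.\ $R(h')$ and the lax structure map out of the $\ell^1$ product). The only cosmetic difference is that you make explicit the inequality coming from shortness of the structure map $R(-)\times R(-)\to R(-\otimes-)$, where the paper writes that step as an equality.
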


\begin{proof}
We have already seen that $\epsilon$-correctness behaves as desired in Lemma~\ref{lem:epsiloncorrect}. Assume that $f$ is an $\epsilon$-secure transformation $(A,r)\to (B,s)$ and that $g$ is an $\epsilon'$-secure transformation $(B,s)\to (C,t)$ against $\A$. Given $h\in\A(g\circ f)$ with domain~$A$, factorize it as $g'\circ f'$ as guaranteed by (ii).
 As $f$ is $\A$-secure there is some $b\in \A(\id[B])$ with \mbox{$d(R(f')r,R(b)s)<\epsilon$}. Now $g'b\in\A(g)$ by (ii) so that security of $g$ implies the existence of $c\in\A(\id[B])$ such that $d(R(g'b)(s),R(c)t)<\epsilon'$. Thus \[d(R(h)r,R(c)t)=d(R(g'f')r,R(c)t)\leq d(R(g'f')r,R(g')R(b)s)+ d(R(g')R(b)s,R(c)t)<\epsilon+\epsilon'\] as desired.

 Assume now that $f$ is $\epsilon$-secure transformation $(A,r)\to (B,s)$ against $\A$ and that $g$ is $\epsilon'$-secure transformation $(C,t)\to (D,u)$ against $\A$. Given $h\in \A(f\otimes g)$ with domain $A\otimes C$ factorize it as $h'\circ (f'\otimes g')$ as guaranteed by (iii). Then $\epsilon$-security of $f$ ($\epsilon'$-security of $g$) gives us $b\in\A(\id[B])$ so that $d(R(f')r,R(b)s)<\epsilon$ ($c\in\A(\id[D])$ so that $d(R(g')t,R(c)u)<\epsilon'$). Now
 \begin{align*} d(R(h)(r\otimes t),R(h'\circ (b\otimes c))(s\otimes u)) &=d(R(h')\circ R(f'\otimes g')(r\otimes t),R(h')\circ R(b\otimes c)(s\otimes u))\\
                &\leq d(R(f'\otimes g')(r\otimes t),R(b\otimes c)(s\otimes u))\\
                &=d(R(f')r,R(b)s)+d(R(g')t,R(c)u)<\epsilon+\epsilon'\end{align*} as desired.
\end{proof}

We now give a composition theorem for asymptotically secure protocols.

\begin{thm}\label{thm:metriccomposition}
Given symmetric monoidal functors $F\colon\cat{D}\to\cat{C}$, $R\colon \cat{C}\to\Met$ with $F$ strong monoidal and $R$ lax monoidal, and an attack model $\A$ on $\cat{C}$, the class of asymptotically $\A$-secure maps forms a wide sub-SMC of the asymptotic resource theory $\resmet RF$ induced by $F$ and $R$.
\end{thm}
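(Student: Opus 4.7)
The plan is to mimic the proof of Theorem~\ref{thm:composition}, leveraging Lemma~\ref{lem:epsilonsafe} in place of the exact factorization arguments. Concretely, I need to verify that the class of asymptotically $\A$-secure maps contains all coherence isomorphisms of $\resmet RF$ and is closed under $\circ$ and $\otimes$; everything else is inherited from $\resmet RF$ being an SMC.

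First I would handle the identities and coherence isomorphisms. A coherence isomorphism of $\resmet RF$ is represented by a constant sequence $(\phi)_{n\in\N}$ where $\phi$ is a coherence isomorphism of $\CC$. The argument used for isomorphisms inside the proof of Theorem~\ref{thm:composition} shows that such a $\phi$ is $0$-secure (any attack $a\in\A(\phi)$ admits $a\phi^{-1}\in \A(\id)$ with $R(a\phi^{-1})s=R(a)r$, giving distance $0$). Hence the constant sequence is $\epsilon$-secure for every $\epsilon>0$ and trivially asymptotically $\A$-secure.

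Next I would close under $\circ$ and $\otimes$. Suppose $(f_n)\colon(A,r)\to(B,s)$ and $(g_n)\colon(B,s)\to(C,t)$ are asymptotically $\A$-secure. Fix $\epsilon>0$. By asymptotic security, there exist $N_1,N_2\in\N$ such that $f_n$ is $\epsilon/2$-secure for $n>N_1$ and $g_n$ is $\epsilon/2$-secure for $n>N_2$. For $n>\max(N_1,N_2)$, Lemma~\ref{lem:epsilonsafe} (applied in $\res R$ after pushing through $F$, exactly as at the end of the proof of Theorem~\ref{thm:composition}) gives that $g_n\circ f_n$ is $\epsilon$-secure; so the composite sequence is asymptotically secure. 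The argument for $\otimes$ is identical, using the $\otimes$-half of Lemma~\ref{lem:epsilonsafe}.

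Finally, to reduce from general strong monoidal $F$ to $F=\id[\CC]$, I would again follow the same pattern as in Theorem~\ref{thm:composition}: a map $(f_n)$ is asymptotically $\A$-secure in $\resmet RF$ iff $(Ff_n)$ is asymptotically $\A$-secure in $\resmet R$, since $F$ is strong monoidal and the structural isomorphisms $F(A\otimes C)\cong F(A)\otimes F(C)$ are $0$-secure by the first step, allowing the tensor case in particular to transfer along~$F$. No step is really an obstacle; the only thing to take care with is the ``$\epsilon/2+\epsilon/2$'' split, which is the standard device that makes additivity of error (Lemma~\ref{lem:epsilonsafe}) compatible with the universal quantifier over $\epsilon$ in the definition of asymptotic security.
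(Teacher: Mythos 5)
Your proposal is correct and follows essentially the same route as the paper's proof: reduce to $F=\id$ as in Theorem~\ref{thm:composition}, observe that coherence isomorphisms are (exactly, hence asymptotically) secure, and close under $\circ$ and $\otimes$ via the $\epsilon/2$-split together with Lemma~\ref{lem:epsilonsafe}. The only cosmetic difference is that you re-derive the $0$-security of coherence isomorphisms explicitly, whereas the paper simply cites their exact security from the earlier composition theorem.
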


\begin{proof}
As with Theorem~\ref{thm:composition}, it suffices to show that asymptotically secure maps contain all coherence isomorphisms and are closed under $\circ$ and $\otimes$. Moreover, the reduction from the general case to $F=\id$ is the same, so we assume that $F=\id$.
It is easy to see that whenever $f$ is $\A$-secure in the resource theory induced by $\cat{C}\xrightarrow{R}\Met\to\Set$, the constant sequence $(f)_{n\in \N}$ is asymptotically $\A$-secure. Thus security of coherence isomorphisms implies their asymptotic security.

Assume now that $(f_n)_{n\in \N}\colon (A,r)\to (B,s)$ and $(g_n)_{n\in \N}\colon (B,s)\to (C,t)$ are asymptotically $\A$-secure. Given $\epsilon>0$, for sufficiently large $n$ both $f_n$ and $g_n$ are $\epsilon/2$-secure so that their composite is $\epsilon$-secure by Lemma~\ref{lem:epsilonsafe}. The case for~$\otimes$ follows similarly from Lemma~\ref{lem:epsilonsafe}.
\end{proof}

\begin{cor}\label{cor:metricsimultaneoussafety}
Given a non-empty family of functors $(\cat{D}\xrightarrow{F_i}\cat{C}_i\xrightarrow{R_i}\Met)_{i\in I}$ with $R:=R_iF_i=R_jF_j$ for all $i,j\in I$ and attack models $\A_i$ on $\CC_i$ for each $i$, the class of maps in $\resmet R$ that is asymptotically secure against each $\A_i$ is a sub-SMC of $\resmet R$.
\end{cor}

To make these abstract results closer to cryptographic practice, one would work within some explicit $\CC$ and with (pseudo)metrics relevant for cryptographers. A paradigmatic case is given by metrics induced by distinguisher advantage, where one defines the distance between two behaviors by first taking the supremum over all (efficient) distinguishers $d$ of the probability of $d$ distinguishing the two behaviors and then normalizing this value from $[1/2,1]$ to $[0,1]$. If our starting category $\CC$ contains processes that are not (efficiently) computable, such distinguisher metrics might not be contractive as composing two distinct behaviors with a very powerful behavior might help a distinguisher trying to tell them apart. However, as long as
one restricts~$\CC$ (and consequently the behaviors available as resources, protocols and attacks) to behaviors that the relevant class of distinguishers can freely implement, this readily results in a $\Met$-enrichment, as composing two morphisms with a fixed morphism available to the distinguishers cannot increase distinguisher advantage. For instance, if the metric is induced by distinguisher advantage of polynomial-time distinguishers, one should get a $\Met$-enrichment on the subcategory of $\CC$ corresponding to polynomial-time behaviors. Once one has specified a concrete~$\CC$ and a $\Met$-enrichment on it, for any asymptotically secure protocol one can then discuss its speed of convergence, and in principle discuss which actual value of the security parameter is sufficiently secure for the task at hand.

We now wish to prove a variant of Theorem~\ref{thm:perfectlifting} in the approximate setting, abstracting from~\cite[Theorem 18]{Unr10}. Again, we specialize to the $n$-partite resource theory of states, where our attack models consist of some subset $J\subset\{1,\dots,n\}$ behaving maliciously. In this case, we assume our base categories to be $\Met$-enriched, so that $\hom(I,-)$ lands in $\Met$. In such a setting, a protocol is a sequence $(\f_i)_{i\in\N}$ where each $\bar{f_i}:=(f_{i,1},\dots f_{i,n})$ is an $n$-tuple of morphisms.

\begin{thm}\label{thm:computationallifting}
Let $\cat{C}$ and $\cat{D}$ be $\Met$-enriched SMCs, and let $F\colon \cat{C}\to\cat{D}$ be a strong monoidal $\Met$-enriched functor. If $(\f_i)_{i\in\N}$ is an asymptotic transformation between two states of $\cat{C}$ that is asymptotically secure against $J\subset\{1,\dots, n\}$, so is $(F\f_i)_{i\in \N}$.
\end{thm}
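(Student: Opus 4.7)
The plan is to reduce the statement to a single-index claim: if $\f_i$ is $\epsilon$-secure against $J$ in $\cat{C}$, then $F\f_i$ is $\epsilon$-secure against $J$ in $\cat{D}$. Asymptotic security of $(F\f_i)_{i\in\N}$ then follows by using, for each $\epsilon>0$, the same threshold $k\in\N$ furnished by asymptotic $\A$-security of $(\f_i)_{i\in\N}$.

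For the single-index step I first note that Theorem~\ref{thm:initialattacks} admits a direct $\epsilon$-analogue in the $\Met$-enriched setting: if $X\subseteq \A(f)$ is initial and $f$ is $\epsilon$-secure against every attack in $X$, then $f$ is $\epsilon$-secure against every $a\in\A(f)$ with $\dom(a)=A$. Indeed, factoring $a = b\circ a_0$ with $a_0\in X$ and picking a witness $c\in\A(\id[B])$ for $a_0$, the composite $b\circ c\in\A(\id[B])$ witnesses $a$ with the same bound, since $R(b)$ is a short map by $\Met$-enrichment and so only shrinks distances.

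In the $n$-partite setting with malicious subset $J$, an initial attack on $\f_i$ is the parallel composite of the honest parties' prescribed morphisms with identities on the dishonest ports. Strong monoidality of $F$, which preserves identities and tensor products up to coherent isomorphism, identifies the image under $F$ of this initial attack on $\f_i$ with the corresponding initial attack on $F\f_i$; this is essentially the functorial-box manipulation already carried out in the proof of Theorem~\ref{thm:perfectlifting}. Taking the simulator $b_i\in\A(\id[B])$ furnished by $\epsilon$-security of $\f_i$ against its initial attack, the pushforward $F(b_i)\in\A(\id[F(B)])$ is the candidate simulator for $F\f_i$.

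Finally, the distance bound transfers from $\cat{C}$ to $\cat{D}$ because $F$ is $\Met$-enriched, hence contractive on each hom-set: applying $F$ to the inequality guaranteed by $\epsilon$-security of $\f_i$ and using naturality of the strong monoidal structure isomorphisms yields the analogous inequality in $\cat{D}$. The main bookkeeping burden is tracking these coherence isomorphisms, but this is handled exactly as in the perfect case, so the whole argument amounts to repeating the graphical computation of Theorem~\ref{thm:perfectlifting} with each equality weakened to an $\epsilon$-approximation, the weakening being justified each time by the contractivity of $F$.
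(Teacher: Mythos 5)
Your proposal is correct and follows essentially the same route as the paper's (very terse) proof: reduce to initial attacks and then observe that the graphical argument of Theorem~\ref{thm:perfectlifting} transfers the $\epsilon$-bound because $F$ is $\Met$-enriched. The only addition is that you spell out the $\epsilon$-analogue of Theorem~\ref{thm:initialattacks} (using that $R(b)$ is short to preserve the bound under post-composition with the simulator), a step the paper leaves implicit but which is needed and which you justify correctly.
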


\begin{proof}
Again, it suffices to prove security against initial attacks. Now, the proof of Theorem~\ref{thm:perfectlifting} implies that if the desired equation in $\CC$ holds up to $\epsilon>0$, so does the equation in~$\DD$, so the claim follows.
\end{proof}

As discussed in~\cite{Unr10}, the computational version above is not as strong as the result in the case of perfect security, as the assumptions of Theorem~\ref{thm:computationallifting} are rather strong. For instance, if a protocol is secure against polynomial-time classical adversaries, it does not follow that it is secure against polynomial-time quantum adversaries. Correspondingly, if we use the metric induced by ``polynomial-time distinguishers'', the inclusion of classical computations into quantum computations is not $\Met$-enriched, as the distances might increase. However, if on the quantum side we use polynomial-time distinguishers, but on the classical side we use distinguishers that are able to simulate quantum polynomial-time machines, then protocols that are classically secure remain secure when thought of as quantum computations.

\subsection{Setup assumptions and freely usable resources}\label{sec:set-up}

Cryptographers often prove results saying that a given functionality is impossible to realize in the \emph{plain} model but is possible with some \emph{setup}. For instance, in~\cite{CF01} they show that bit commitment (BC) is impossible in the plain UC-framework but it is possible assuming a common reference string (CRS)---a functionality that gives all parties the same string drawn from some fixed distribution. In our viewpoint, claims such as these can be interpreted in the categories we have already built: for instance, impossibility of commitments amounts to non-existence of a secure map $I\to BC$ that builds bit commitments out of a trivial resource $I$, and possibility of bit commitments given a common reference string amounts to the existence of a secure protocol $CRS\to BC$.

A related, but distinct matter is that sometimes cryptographers wish to make some (possibly shared) functionalities freely available to all parties without having to explicitly mention them being used as a resource. For instance, so far in our framework all communication between the honest parties has been mediated by the functionality $r$ that they start from. However, one might want to model situations where \eg pairwise communication between parties is freely available (as is standard in multi-party computation) and does not need to be provided explicitly by the functionality one starts from. Put more abstractly, one might wish to declare some set $\mathcal{X}$ of functionalities ``free'' and think of secure protocols that build $s$ from $r$ and some functionalities from $\mathcal{X}$ just as maps $r\to s$, without having to explicitly keep track of how many copies of which $x\in \mathcal{X}$ was used. This is in fact something that happens quite often in resource theories even before any security conditions arise, as it could happen that the free processes $\CF$ are not quite expressive enough for the resource theory at hand. While one could try to define a larger category of free processes directly, it might be technically more convenient to obtain a larger class of free processes by allowing resource transformations to consume a resource from some class that is considered free. This can be achieved via a general construction on SMCs, a special case used in~\cite{fongetal:backprop} when constructing the category of learners. A special case also appears in the resource theory of contextuality as defined in~\cite{abramskyetal:comonadicview}, where one first defines deterministic free processes, and probabilistic (but classical) transformations $d\to e$ are then defined as transformations $d\otimes c\to e$ where $c$ is a non-contextual (and thus free) resource. This construction is discussed more generally in~\cite{cruttwell2022categorical}, but we modify it slightly by allowing one to choose a class of objects as ``parameters'' instead of taking that class to consist of all objects: this modification is important for resource theories as it lets one control which resources are made freely available.
\begin{prop}\label{prop:paraconstruction}
Let $\CC$ be an SMC and $\mathcal{X}$
 a class of objects that contains $I$ and is closed under $\otimes$. Then there is an SMC whose objects are those of $\CC$, and whose morphisms $A\to B$ are given by equivalence classes of morphisms $A\otimes X\to B$ in $\CC$ with $X\in\mathcal{X}$, where $f\colon A\otimes X\to B$,$f'\colon A\otimes X'\to B$ are equivalent if there is an isomorphism $g\colon X\to X'$ such that $f=f'\circ(\id[A]\otimes g)$
\end{prop}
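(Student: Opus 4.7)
The plan is to explicitly specify the identity, composition, and monoidal product, verify they are well-defined on equivalence classes, and then check the SMC axioms reduce to those of $\CC$ together with applications of coherence isomorphisms that are absorbed by the equivalence relation.

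For identity, I take $\id[A]$ to be the class of the right unitor $\rho_A \colon A\otimes I\to A$, which is legal since $I\in\mathcal{X}$. For composition, given $[f\colon A\otimes X\to B]$ and $[h\colon B\otimes Y\to C]$, I define $[h]\circ [f]$ to be represented by
\[
  A\otimes X\otimes Y \xrightarrow{f\otimes \id[Y]} B\otimes Y \xrightarrow{h} C,
\]
whose parameter $X\otimes Y$ lies in $\mathcal{X}$ by assumption. The key well-definedness check is that if $f'\sim f$ via $g\colon X\to X'$ and $h'\sim h$ via $g'\colon Y\to Y'$, then the composites are related by the isomorphism $g\otimes g'\colon X\otimes Y\to X'\otimes Y'$; this follows from functoriality of $\otimes$ in $\CC$. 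For the monoidal product on morphisms, given $[f\colon A\otimes X\to B]$ and $[h\colon C\otimes Y\to D]$, I define $[f]\otimes [h]$ to be represented by
\[
  A\otimes C\otimes X\otimes Y \xrightarrow{\id[A]\otimes \sigma_{C,X}\otimes\id[Y]} A\otimes X\otimes C\otimes Y \xrightarrow{f\otimes h} B\otimes D,
\]
where $\sigma$ is the symmetry in $\CC$. Well-definedness is as above. On objects, the monoidal product, the tensor unit, the associator, unitors, and symmetry are all inherited from $\CC$; more precisely, each coherence isomorphism $\alpha,\lambda,\rho,\sigma$ of $\CC$ is reinterpreted as a morphism in the new category by composing with the right unitor to tensor in a formal parameter $I\in\mathcal{X}$.

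The main verifications are associativity and unitality of composition, functoriality and naturality of $\otimes$, and the coherence (pentagon, triangle, hexagon) axioms. Each of these reduces to the corresponding fact in $\CC$ after rebracketing the parameter: for example, associativity of composition in the new category requires identifying $X\otimes (Y\otimes Z)$ with $(X\otimes Y)\otimes Z$ as parameters, and this is exactly what the equivalence relation provides, using the associator $\alpha_{X,Y,Z}$ of $\CC$ as the witnessing isomorphism. Unitality follows by using the unitors $\lambda_X$ and $\rho_X$ as witnesses. Functoriality of $\otimes$ with respect to composition (the interchange law) relies additionally on naturality of $\sigma$ in $\CC$ to slide the braiding past the underlying morphisms.

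The main obstacle is not conceptual but notational: one must choose a bracketing discipline for the parameters and consistently exhibit the associator/symmetry of $\CC$ as the isomorphism that witnesses each required equivalence. If one is willing to assume strictness of $\CC$ (permitted by the MacLane coherence theorem and already noted in Section~\ref{sec:background} as our default convention), associativity and unitality become literal equalities and all remaining axioms follow from those of $\CC$ together with routine diagram chases on parameter shuffles. The closure assumptions on $\mathcal{X}$ — namely $I\in\mathcal{X}$ and closure under $\otimes$ — are exactly what is needed so that identities, composites, and tensors of morphisms stay inside the construction.
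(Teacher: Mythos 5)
Your construction is exactly the one in the paper: the composites you write symbolically are precisely the two string diagrams the paper draws for $g\circ f$ and $g\otimes f$ (including the symmetry $\sigma_{C,X}$ shuffling the parameter past the second domain), and the paper's proof is only a sketch that leaves the well-definedness and SMC axioms to a graphical check. Your proposal correctly fills in those details, so it is the same approach, just carried out symbolically rather than pictorially.
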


\begin{proof}[Proof sketch] The composites $g\circ f$ and $g\otimes f$ are depicted by
\[
  \begin{pic}
  \node[morphism] (f) at (0,0) {$f$};
  \setlength\minimummorphismwidth{6mm}
  \node[morphism] (g) at (0.32,.9) {$g$};
  \draw ([xshift=-2pt]f.south west) to +(0,-.4) node[left] {$A$};
    \draw ([xshift=2pt]f.south east) to +(0,-.4) node[right] {$X$};
    \draw (f) to ([xshift=-2pt]g.south west);
    \draw ([xshift=2pt]g.south east) to +(0,-1.3) node[right] {$Y$};
    \draw (g.north) to +(0,.35) node[right] {$C$};
  \end{pic}
  \qquad\qquad
  \begin{pic}
  \node[morphism] (f) at (0,0) {$f$};
  \node[morphism] (g) at (.9,0) {$g$};
  \draw ([xshift=-2pt]f.south west) to +(0,-.4) node[left] {$A$};
    \draw ([xshift=2pt]f.south east) to[out=-90,in=90] +(0.52,-.4) node[below] {$X$};
    \draw (f.north) to +(0,.3) node[left] {$B$};
    \draw (g.north) to +(0,.3) node[right] {$D$};
    \draw ([xshift=-2pt]g.south west) to[out=-90,in=90] +(-0.52,-.4) node[below] {$C$};
    \draw ([xshift=2pt]g.south east) to +(0,-.4) node[right] {$Y$};
  \end{pic}
\]
It is easy to show graphically that these are well-defined and that this results in an SMC. \end{proof}
Using Proposition~\ref{prop:paraconstruction} we can easily model protocols that have free access to some cryptographic functionalities: one just declares a class $\mathcal{X}$ of functionalities (\eg pairwise communication channels) that is closed under $\otimes$ to be free. In that case a protocol acting on $(A_{i=1}^n,r)$ can be depicted by
\[
  \begin{pic}
    \setlength\minimumstatewidth{9mm}
    \setlength\morphismheight{3mm}
    \node[state] (r) at (0,0) {$r$};
    \node[state] (x) at (1.25,0) {$x$};
    \node (a) at (0.6,.6) {$\ldots$};
    \node (b) at (0,.1) {$\ldots$};
    \node (c) at (1.25,.1) {$\ldots$};
    \node[morphism,font=\tiny] (f) at (-0.04,.55) {$f_1$};
    \node[morphism,font=\tiny] (g) at (1.29,.55) {$f_n$};
    \draw ([xshift=-0pt]r.A) to ([xshift=-2pt]f.south west);
     \draw ([xshift=-0pt]x.A) to [in=-90,out=90] ([xshift=2pt]f.south east);
    \draw ([xshift=0pt]r.B) to [in=-90,out=90]  ([xshift=-2pt]g.south west);
    \draw ([xshift=0pt]x.B) to ([xshift=2pt]g.south east);
    \draw (f.north) to +(0,.25);
    \draw (g.north) to +(0,.25);
  \end{pic}
\]
where $x\in\mathcal{X}$ is a free resource. For example, if we are in a multi-party setting and want to treat secure pairwise communication as a solved problem as is common in multi-party computation, we could let $\mathcal{X}$ be generated by all pairwise channels.

\section{The one-time pad}\label{sec:otp}

We will now explore how the one-time pad (OTP) fits into our framework, paralleling the discussion of the OTP in~\cite{Mau11}. We will start from the category $\cat{FinStoch}$ of finite sets and stochastic maps between them, with $\otimes$ given by cartesian product of sets. This is sufficient for the OTP over a fixed message space, even if more complicated and interactive cryptographic protocols will need a different starting category. However, the actual category~$\CC$ we work in is built from $\cat{FinStoch}$ as we want our resources to be morphisms of $\cat{FinStoch}$ with a tripartite structure. As we've seen at the end of Section~\ref{sec:resourcetheories}, this can be achieved by working with the resource theory $\CC$ induced by
\[\ncomb(\cat{FinStoch}^3)\xrightarrow{\ncomb(\otimes)}\ncomb(\cat{FinStoch})\xrightarrow{\hom(I,-)}\Set,\] with the three parties named Eve, Alice and Bob. We will equip $\ncomb(\cat{FinStoch}^3)$ with the attack model from Definition~\ref{def:maliciousparty_ncomb}, with Eve being the malicious party.

To recap, a basic resource in this resource theory is given by finite sets $E_i$,$A_i,B_i$ for $i=1,2$, and  a map $f\colon E_1\otimes A_1\otimes B_1\to E_2\otimes A_2\otimes B_2$ in $\cat{FinStoch}$, depicted as 
\[
  \begin{pic}
    \setlength\minimummorphismwidth{10mm}
    \node[morphism] (f) at (0,0) {$f$};
    \draw ([xshift=-2.5pt]f.south west) to ++(0,-.5) node[below] {$E_1$};
    \draw ([xshift=-2.5pt]f.north west) to ++(0,.5) node[above] {$E_2$};
    \draw (f.south) to ++(0,-.5) node[below] {$A_1$};
    \draw (f.north) to ++(0,.5) node[above] {$A_2$};
    \draw ([xshift=2.5pt]f.south east) to ++(0,-.5) node[below] {$B_1$};
    \draw ([xshift=2.5pt]f.north east) to ++(0,.5) node[above] {$B_2$};
  \end{pic}\]
The intuition is that $\tuple{(A_i,B_i,E_i)_{i\in\{1,2\}},f}$ represents a box shared by Alice, Bob and Eve, with Alice's inputs and outputs ranging over $A_1$ and $A_2$ respectively, and similarly for Bob and Eve. A general object of $\CC$ then consists of a list of such basic objects, representing a list of such resources shared between Alice, Bob and Eve. We will often label the ports just by the party who accesses it, and omit labeling trivial ports. For example, if Figure~\ref{fig:copy_map}
\begin{figure}
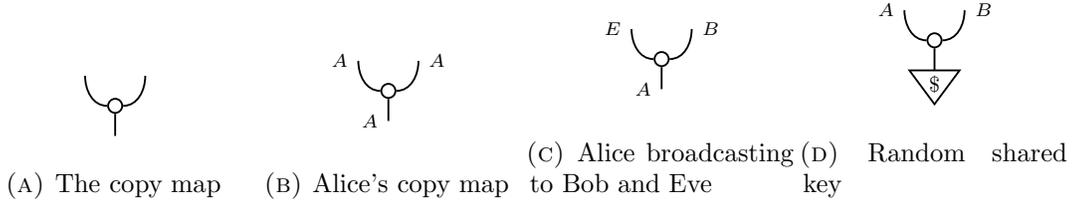

    \centering
\begin{subfigure}[b]{0.23\textwidth}
\[
  \begin{pic}[scale=.4,yscale=-1]
    \node[dot,fill=\copycolor] (d) {};
    \draw (d) to +(0,1);
    \draw (d) to[out=0,in=90] +(1,-1);
    \draw (d) to[out=180,in=90] +(-1,-1);
  \end{pic}
  \]
  \caption{The copy map}
  \label{fig:copy_map}
  \end{subfigure}
  ~
    \begin{subfigure}[b]{0.23\textwidth}
\[
  \begin{pic}[scale=.4,yscale=-1]
    \node[dot,fill=\copycolor] (d) {};
    \draw (d) to +(0,1) node [left] {$A$};
    \draw (d) to[out=0,in=90] +(1,-1) node[right] {$A$};
    \draw (d) to[out=180,in=90] +(-1,-1) node[left] {$A$};
  \end{pic}
  \]
        \caption{Alice's copy map}
        \label{fig:private_copy}
    \end{subfigure}
    ~
    \begin{subfigure}[b]{0.23\textwidth}
\[
  \begin{pic}[scale=.4,yscale=-1]
    \node[dot,fill=\copycolor] (d) {};
    \draw (d) to +(0,1) node [left] {$A$};
    \draw (d) to[out=0,in=90] +(1,-1) node[right] {$B$};
    \draw (d) to[out=180,in=90] +(-1,-1) node[left] {$E$};
  \end{pic}
  \]
        \caption{Alice broadcasting to Bob and Eve}
        \label{fig:insecure_channel}
    \end{subfigure}
   ~\begin{subfigure}[b]{0.23\textwidth}
\[
  \begin{pic}[scale=.4]
    \node[dot,fill=\copycolor] (d) {};
    \draw (d) to +(0,-1) node[state,scale=0.75] {\normalsize$\rand$};
    \draw (d) to[out=180,in=-90] +(-1,1) node[left] {$A$};
    \draw (d) to[out=0,in=-90] +(1,1) node[right] {$B$};
  \end{pic}
  \]
        \caption{Random shared key}
        \label{fig:shared_key}
   \end{subfigure}
   \caption{Variants of the copy map}
\end{figure}
depicts the copy map $X\to X\otimes X$ for some set $X$ in $\cat{FinStoch}$, then Figure~\ref{fig:private_copy}
denotes an object of $\CC$ representing Alice copying data in $X$ privately, whereas Figure~\ref{fig:insecure_channel}
denotes an object $\CC$ that sends Alice's input unchanged to Bob and to Eve---which we view as an insecure (but authenticated) channel from Alice to Bob.

In the version of the one-time pad we discuss, our starting resources consist of an insecure but authenticated channel\footnote{If the insecure channel allows Eve to tamper with the message, the analysis changes, as the resulting channel still lets Eve flip bits in the sent message, even if she cannot read the message. Consequently, one can study the OTP with a different domain and codomain than chosen here.} from Alice to Bob as in Figure~\ref{fig:insecure_channel}
and of a random key over the same message space, shared by Alice and Bob, depicted in Figure~\ref{fig:shared_key}. Here Alice and Bob having access to both of these resources is modelled by them sharing the monoidal product of these resources.
The goal is to build a secure channel
  $\begin{pic}[scale=.4] \draw (0,0) node[left] {$A$} to (0,3) node[right] {$B$}; \end{pic}$
from Alice to Bob from these.

We will next describe the local ingredients needed for OTP. First of all, Alice and Bob must agree on a group structure on the message space: this consists of a multiplication $\tinymult$ with unit $\tinyunit$ that is associative and unital, \ie satisfies the equations
  \begin{equation}\label{eq:monoid}
    \begin{pic}[scale=.4]
      \node[dot,fill=\multcolor] (t) at (0,1) {};
      \node[dot,fill=\multcolor] (b) at (1,0) {};
      \draw (t) to +(0,1);
      \draw (t) to[out=0,in=90] (b);
      \draw (t) to[out=180,in=90] (-1,0) to (-1,-1);
      \draw (b) to[out=180,in=90] (0,-1);
      \draw (b) to[out=0,in=90] (2,-1);
    \end{pic}
    =
    \begin{pic}[yscale=.4,xscale=-.4]
      \node[dot,fill=\multcolor] (t) at (0,1) {};
      \node[dot,fill=\multcolor] (b) at (1,0) {};
      \draw (t) to +(0,1);
      \draw (t) to[out=0,in=90] (b);
      \draw (t) to[out=180,in=90] (-1,0) to (-1,-1);
      \draw (b) to[out=180,in=90] (0,-1);
      \draw (b) to[out=0,in=90] (2,-1);
    \end{pic}
  \qquad\qquad
  \begin{pic}[scale=.4]
    \node[dot,fill=\multcolor] (d) {};
    \draw (d) to +(0,1);
    \draw (d) to[out=0,in=90] +(1,-1);
    \draw (d) to[out=180,in=90] +(-1,-1) node[dot,fill=\multcolor] {};
  \end{pic}
  =
  \begin{pic}[scale=.4]
    \draw (0,0) to (0,3);
  \end{pic}
  =
  \begin{pic}[yscale=.4,xscale=-.4]
    \node[dot,fill=\multcolor] (d) {};
    \draw (d) to +(0,1);
    \draw (d) to[out=0,in=90] +(1,-1);
    \draw (d) to[out=180,in=90] +(-1,-1) node[dot,fill=\multcolor] {};
  \end{pic}
  \end{equation}
Note that copying and deleting, denoted by a different color for convenience, satisfy similar equations
  \begin{equation}\label{eq:comonoid}
    \begin{pic}[xscale=.4,yscale=-.4]
      \node[dot,fill=\copycolor] (t) at (0,1) {};
      \node[dot,fill=\copycolor] (b) at (1,0) {};
      \draw (t) to +(0,1);
      \draw (t) to[out=0,in=90] (b);
      \draw (t) to[out=180,in=90] (-1,0) to (-1,-1);
      \draw (b) to[out=180,in=90] (0,-1);
      \draw (b) to[out=0,in=90] (2,-1);
    \end{pic}
    =
    \begin{pic}[yscale=-.4,xscale=-.4]
      \node[dot,fill=\copycolor] (t) at (0,1) {};
      \node[dot,fill=\copycolor] (b) at (1,0) {};
      \draw (t) to +(0,1);
      \draw (t) to[out=0,in=90] (b);
      \draw (t) to[out=180,in=90] (-1,0) to (-1,-1);
      \draw (b) to[out=180,in=90] (0,-1);
      \draw (b) to[out=0,in=90] (2,-1);
    \end{pic}
  \qquad\qquad
  \begin{pic}[scale=.4,yscale=-1]
    \node[dot,fill=\copycolor] (d) {};
    \draw (d) to +(0,1);
    \draw (d) to[out=0,in=90] +(1,-1);
    \draw (d) to[out=180,in=90] +(-1,-1) node[dot,fill=\copycolor] {};
  \end{pic}
  =
  \begin{pic}[scale=.4]
    \draw (0,0) to (0,3);
  \end{pic}
  =
  \begin{pic}[yscale=-.4,xscale=-.4]
    \node[dot,fill=\copycolor] (d) {};
    \draw (d) to +(0,1);
    \draw (d) to[out=0,in=90] +(1,-1);
    \draw (d) to[out=180,in=90] +(-1,-1) node[dot,fill=\copycolor] {};
  \end{pic}
  \end{equation}
 In addition, multiplication and copying interact:
\begin{equation}\label{eq:bialg}
  \begin{pic}[scale=.4]
    \node[dot,fill=\multcolor] (d) at (-0.75,1) {};
    \node[dot,fill=\copycolor] (e) at (-0.75,-1) {};
    \draw (d) to +(0,1);
    \draw (d) to[out=210,in=90] ++(-.75,-1) to[out=-90,in=150] (e);
    \draw (e) to +(0,-1);
    \node[dot,fill=\multcolor] (d1) at (0.75,1) {};
    \node[dot,fill=\copycolor] (e1) at (0.75,-1) {};
    \draw (d1) to[out=-30,in=90] ++(.75,-1) to[out=-90,in=30] (e1);
    \draw (e1) to +(0,-1);
    \draw (d1) to +(0,1);
    \draw (d) to (e1);
    \draw (e) to (d1);
  \end{pic}
  =
  \begin{pic}[scale=.4]
    \node[dot,fill=\copycolor] (r) at (0,.75) {};
    \node[dot,fill=\multcolor] (d) at (0,-.75) {};
    \draw (d) to (r);
    \draw (d) to[out=0,in=90] +(1,-1);
    \draw (d) to[out=180,in=90] +(-1,-1);
    \draw (r) to[out=0,in=-90] +(1,1);
    \draw (r) to[out=180,in=-90] +(-1,1);
  \end{pic}
    \end{equation}
 and the map $\begin{pic}
  \node[morphism] (f) at (0,0) {$i$};
  \draw (f.south) to +(0,-.1);
  \draw (f.north) to +(0,.1);
\end{pic}$ giving inverses\footnote{Usually for OTP one works over a power of $\mathbb{Z}_2$ so that $i$ is given by the identity map.} satisfies
\begin{equation}\label{eq:antipode}
  \begin{pic}[scale=.4,xscale=-1]
    \node[dot,fill=\multcolor] (d) at (0,1) {};
    \node[dot,fill=\copycolor] (e) at (0,-1) {};
    \draw (d) to +(0,1);
    \draw (d) to[out=0,in=90] ++(1,-1)  node[morphism,scale=.5] {\large $i$} to[out=-90,in=0] (e);
    \draw (d) to[out=180,in=90] ++(-1,-1) to[out=-90,in=180] (e);
    \draw (e) to +(0,-1);
  \end{pic}
  =
  \begin{pic}[scale=.4]
    \node[dot,fill=\multcolor] (r) at (0,1) {};
    \node[dot,fill=\copycolor] (d) at (0,-1) {};
    \draw (d) to +(0,-1);
    \draw (r) to +(0,1);
  \end{pic}
  =
\begin{pic}[scale=.4]
    \node[dot,fill=\multcolor] (d) at (0,1) {};
    \node[dot,fill=\copycolor] (e) at (0,-1) {};
    \draw (d) to +(0,1);
    \draw (d) to[out=0,in=90] ++(1,-1)  node[morphism,scale=.5] {\large $i$} to[out=-90,in=0] (e);
    \draw (d) to[out=180,in=90] ++(-1,-1) to[out=-90,in=180] (e);
    \draw (e) to +(0,-1);
  \end{pic}
  \end{equation}
That the key is uniformly random is captured by
  \begin{equation}\label{eq:uniformly_random}
  \begin{pic}[scale=.4]
    \node[dot,fill=\multcolor] (d) {};
    \draw (d) to +(0,1);
    \draw (d) to[out=0,in=90] +(1,-1);
    \draw (d) to[out=180,in=90] +(-1,-1) node[state,scale=0.5] {\normalsize$\rand$};
  \end{pic}
  =
  \begin{pic}[scale=.4]
    \node[state,scale=0.5] (r) at (0,1) {\normalsize$\rand$};
    \node[dot,fill=\copycolor] (d) at (0,-1) {};
    \draw (d) to +(0,-1);
    \draw (r) to +(0,1);
  \end{pic}
  =
  \begin{pic}[yscale=.4,xscale=-.4]
    \node[dot,fill=\multcolor] (d) {};
    \draw (d) to +(0,1);
    \draw (d) to[out=0,in=90] +(1,-1);
    \draw (d) to[out=180,in=90] +(-1,-1) node[state,scale=0.5] {\normalsize$\rand$};
  \end{pic},\end{equation}
 which amounts to saying that ``adding uniform noise to a channel results in uniform noise''. Moreover, producing a random key and deleting it amounts to doing nothing:
 \begin{equation}\label{eq:deletion}
 \begin{pic}[scale=.4]
 \node[dot,fill=\copycolor] (f) at (0,0) {};
  \draw (f) to ++(0,-1.25) node[state,scale=0.75] {\normalsize$\rand$};
 \end{pic}=\quad \end{equation}

 Taken together, the equations describe a structure known as a Hopf algebra with an integral~\cite{sweedler:integrals} in a symmetric monoidal category. In $\cat{FinStoch}$, if one keeps the meaning of the copy maps fixed, such structures correspond exactly to finite groups equipped with the uniform distribution, and any such group can be used to implement the one-time pad with the message space given by the set underlying the group. Concretely, this means that Alice and Bob must agree on a group structure on the message space, and the fact that this multiplication forms a group and that the key is random can be captured by these equations.

The OTP protocol is then depicted as follows:
\[
  \begin{pic}[scale=.4]{}
    \node[dot,fill=\copycolor] (d) {};
    \draw (d) to +(0,-1) node[state,scale=0.5] {\normalsize$\rand$};
    \draw (d) to[out=0,in=-90] ++(1,1) node[right] {$B$} to ++(0,1.5) node[morphism,scale=.5] (i) {\large $i$};
    \draw (d) to[out=180,in=-90] ++(-1,1) node[left] {$A$} to[out=90,in=0] ++(-1,1) node[dot,fill=\multcolor] (e) {};
    \draw (e) to[out=180,in=90] ++(-1,-1) to ++(0,-2.5) node[left] {$A$};
    \draw (e) to ++(0,1) node[dot,fill=\copycolor] (f) {};
    \draw (f) to[out=180,in=-90] ++(-1,1) node[left] {$E$} to ++(0,.5) node[dot,fill=\copycolor] {};
    \draw (f) to[out=0,in=-90] ++(1,1) node[right] {$B$} to[out=90,in=180] ++(1,1) node[dot,fill=\multcolor] (g) {};
    \draw (g) to +(0,1) node[right] {$B$};
    \draw (g) to[out=0,in=90] ++(1,-1) to (i);
  \end{pic}
  \]
\ie Alice adds the key to her message, broadcasts it to Eve and Bob. Eve deletes her part and Bob adds the inverse of the key to the ciphertext to recover the message.

That the protocol is correct (\ie works when everyone follows it) can be proven as follows:
\[ \begin{pic}[scale=.4]
    \node[dot,fill=\copycolor] (d) {};
    \draw (d) to +(0,-1) node[state,scale=0.5] {\normalsize$\rand$};
    \draw (d) to[out=0,in=-90] ++(1,1)  to ++(0,1.5) node[morphism,scale=.5] (i) {\large $i$};
    \draw (d) to[out=180,in=-90] ++(-1,1) to[out=90,in=0] ++(-1,1) node[dot,fill=\multcolor] (e) {};
    \draw (e) to[out=180,in=90] ++(-1,-1) to  ++(0,-2.5) node[left] {$A$};
    \draw (e) to ++(0,1) node[dot,fill=\copycolor] (f) {};
    \draw (f) to[out=180,in=-90] ++(-1,1)  to ++(0,.5) node[dot,fill=\copycolor] {};
    \draw (f) to[out=0,in=-90] ++(1,1) to[out=90,in=180] ++(1,1) node[dot,fill=\multcolor] (g) {};
    \draw (g) to +(0,1) node[right] {$B$};
    \draw (g) to[out=0,in=90] ++(1,-1) to (i);
  \end{pic}\ \overset{\eqref{eq:comonoid}}{=}
    \begin{pic}[scale=.4]
    \node[dot,fill=\copycolor] (d) {};
    \draw (d) to +(0,-1) node[state,scale=0.5] {\normalsize$\rand$};
    \draw (d) to[out=0,in=-90] ++(1,1) to ++(0,1.5) node[morphism,scale=.5] (i) {\large $i$};
    \draw (d) to[out=180,in=-90] ++(-1,1) to[out=90,in=0] ++(-1,1) node[dot,fill=\multcolor] (e) {};
    \draw (e) to[out=180,in=90] ++(-1,-1) to  ++(0,-3) node[left] {$A$};
    \draw (e) to[out=90,in=180] ++(2,3) node[dot,fill=\multcolor] (g) {};
    \draw (g) to +(0,1) node[right] {$B$};
    \draw (g) to[out=0,in=90] ++(1,-1) to (i);
  \end{pic}
    \ \overset{\eqref{eq:monoid}}{=}
  \begin{pic}[scale=.4]
    \node[dot,fill=\multcolor] (d) at (0,1) {};
    \node[dot,fill=\copycolor] (e) at (0,-1) {};
    \draw (d) to[out=90,in=0] ++(-1,1) node[dot,fill=\multcolor] (f) {};
    \draw (d) to[out=0,in=90] ++(1,-1)  node[morphism,scale=.5] {\large $i$} to[out=-90,in=0] (e);
    \draw (d) to[out=180,in=90] ++(-1,-1) to[out=-90,in=180] (e);
    \draw (e) to +(0,-1) node[state,scale=.5] {\normalsize$\rand$};
    \draw (f) to +(0,1) node[right] {$B$};
    \draw (f) to[out=180,in=90] ++(-1,-1) to ++(0,-3)  to++(0,-.5) node[left] {$A$};
  \end{pic}
  \  \overset{\eqref{eq:antipode}}{=}
  \begin{pic}[scale=.4]
    \node[dot,fill=\multcolor] (d) at (0,1) {};
    \node[dot,fill=\copycolor] (e) at (0,-1) {};
    \draw (d) to[out=90,in=0] ++(-1,1) node[dot,fill=\multcolor] (f) {};
    \draw (e) to +(0,-1) node[state,scale=.5] {\normalsize$\rand$};
    \draw (f) to +(0,1) node[right] {$B$};
    \draw (f) to[out=180,in=90] ++(-1,-1) to ++(0,-3)  to++(0,-.5) node[left] {$A$};
  \end{pic}
\overset{\eqref{eq:monoid}\ \&\ \eqref{eq:deletion}}{=}
  \begin{pic}[scale=.4]
    \draw (0,0) node[left] {$A$} to (0,3) node[right] {$B$};
  \end{pic}
  \]

However, we also need to show that the protocol is secure. In this case, Eve has an initial attack given by just reading the ciphertext. We can now prove security pictorially as follows:
\begin{align*}
  \begin{pic}[scale=.4]
    \node[dot,fill=\copycolor] (d) {};
    \draw (d) to +(0,-1) node[state,scale=0.5] {\normalsize$\rand$};
    \draw (d) to[out=0,in=-90] ++(1,1) to ++(0,1.5) node[morphism,scale=.5] (i) {\large $i$};
    \draw (d) to[out=180,in=-90] ++(-1,1) to[out=90,in=0] ++(-1,1) node[dot,fill=\multcolor] (e) {};
    \draw (e) to[out=180,in=90] ++(-1,-1) to  ++(0,-2.5) node[left] {$A$};
    \draw (e) to ++(0,1) node[dot,fill=\copycolor] (f) {};
    \draw (f) to[out=180,in=-90] ++(-1,1)  to ++(0,1.5) node[left] {$E$};
    \draw (f) to[out=0,in=-90] ++(1,1) to[out=90,in=180] ++(1,1) node[dot,fill=\multcolor] (g) {};
    \draw (g) to +(0,1) node[right] {$B$};
    \draw (g) to[out=0,in=90] ++(1,-1) to (i);
  \end{pic}
   \overset{\eqref{eq:bialg}}{=}
    \begin{pic}[scale=.4]
    \node[dot,fill=\multcolor] (d) at (-0.75,1) {};
    \node[dot,fill=\copycolor] (e) at (-0.75,-1) {};
    \draw (d) to ++(0,1) to ++(0,1) node[left] {$E$} ;
    \draw (d) to[out=210,in=90] ++(-.75,-1) to[out=-90,in=150] (e);
    \draw (e) to ++(0,-1)  to ++(0,-1) node[left] {$A$};
    \node[dot,fill=\multcolor] (d1) at (0.75,1) {};
    \node[dot,fill=\copycolor] (e1) at (0.75,-1) {};
    \draw (d) to (e1);
    \draw (e) to (d1);
    \draw (d1) to[out=-30,in=90] ++(.75,-1) to[out=-90,in=30] (e1);
    \draw (e1) to[out=-90,in=180] ++(1,-1) node[dot,fill=\copycolor] (f) {};
    \draw (d1) to[out=90,in=180] ++(1,1) node[dot,fill=\multcolor] (g) {};
    \draw (f) to ++(0,-1) node[state,scale=.5] {\normalsize$\rand$};
    \draw (f) to[out=0,in=-90] ++(1,1) to ++(0,1) node[morphism,scale=.5] (i) {\large $i$};
    \draw (g) to[out=0,in=90] ++(1,-1) to (i);
    \draw (g) to ++(0,1) node[right] {$B$};
  \end{pic}
  &
   \overset{\eqref{eq:monoid}\ \&\ \eqref{eq:comonoid}}{=}
    \begin{pic}[scale=.4]
    \node[dot,fill=\multcolor] (d) at (1,1) {};
    \node[dot,fill=\copycolor] (e) at (1,-1) {};
    \node[dot,fill=\multcolor] (a) at (-2,2) {};
    \node[dot,fill=\copycolor] (b) at (-2,-2) {};
    \draw (d) to[out=90,in=0] ++(-1,1) node[dot,fill=\multcolor] (f) {};
    \draw (d) to[out=0,in=90] ++(1,-1)  node[morphism,scale=.5] {\large $i$} to[out=-90,in=0] (e);
    \draw (d) to[out=180,in=90] ++(-1,-1) to[out=-90,in=180] (e);
    \draw (e) to[out=-90,in=0] ++(-1,-1) node[dot,fill=\copycolor] (g) {};
    \draw (f) to +(0,1) node[right] {$B$};
    \draw (f) to (b);
    \draw (g) to +(0,-1) node[state,scale=.5] {\normalsize$\rand$};
    \draw (g) to (a);
    \draw (a) to +(0,1) node[left] {$E$};
    \draw (b) to +(0,-1) node[left] {$A$};
    \draw (a) to[in=90,out=210] ++(-1,-2) to[in=150,out=-90] (b);
  \end{pic}
  \overset{\eqref{eq:antipode}}{=}
      \begin{pic}[scale=.4]
    \node[dot,fill=\multcolor] (d) at (1,1) {};
    \node[dot,fill=\copycolor] (e) at (1,-1) {};
    \node[dot,fill=\multcolor] (a) at (-2,2) {};
    \node[dot,fill=\copycolor] (b) at (-2,-2) {};
    \draw (d) to[out=90,in=0] ++(-1,1) node[dot,fill=\multcolor] (f) {};
    \draw (e) to[out=-90,in=0] ++(-1,-1) node[dot,fill=\copycolor] (g) {};
    \draw (f) to +(0,1) node[right] {$B$};
    \draw (f) to (b);
    \draw (g) to +(0,-1) node[state,scale=.5] {\normalsize$\rand$};
    \draw (g) to (a);
    \draw (a) to +(0,1) node[left] {$E$};
    \draw (b) to +(0,-1) node[left] {$A$};
    \draw (a) to[in=90,out=210] ++(-1,-2) to[in=150,out=-90] (b);
  \end{pic}
  \\
  &
  \overset{\eqref{eq:monoid}\ \&\ \eqref{eq:comonoid}}{=}
    \begin{pic}[scale=.4]
    \node[dot,fill=\multcolor] (d) at (0,2) {};
    \node[dot,fill=\copycolor] (e) at (0,-2) {};
    \draw (d) to +(0,1) node[left] {$E$};
    \draw (d) to[out=-30,in=90] ++(.75,-1)  node[state,scale=.5] {\normalsize$\rand$};
    \draw (d) to[in=90,out=210] ++(-1,-2) to[in=150,out=-90] (e);
    \draw (e) to +(0,-1) node[left] {$A$};
    \draw (e) to[out=30,in=-90] ++(1.75,2) to ++(0,3) node[right] {$B$};
  \end{pic}
 \overset{\eqref{eq:uniformly_random}}{=}
   \begin{pic}[scale=.4]
  \node[state,scale=.5] (d) at (0,1.5) {\normalsize$\rand$};
    \draw (d) to +(0,1) node[left] {$E$};
    \node[dot,fill=\copycolor] (a) at (1,-1) {};
    \draw (a) to[out=180,in=-90] ++(-1,1) node[dot,fill=\copycolor] {};
    \draw (a) to ++(0,-1) node[left] {$A$};
    \draw (a) to[out=0,in=-90] ++(1,1) to ++(0,2) node[right] {$B$};
  \end{pic}
 \overset{\eqref{eq:comonoid}}{=}
 \begin{pic}[scale=.4]
    \draw (0,0) node[left] {$A$} to (0,4) node[right] {$B$};
    \node[state,scale=.5] (d) at (-2,2) {\normalsize$\rand$};
    \draw (d) to +(0,2) node[left] {$E$};
  \end{pic}
\end{align*}
Taken together, these show that Eve's initial attack is equal to her just producing a random message herself when Alice and Bob share the target resource.

Thus OTP represents a map \[\text{authenticated channel}\otimes \text{shared key}\to\text{secure channel}\] that is secure against Eve. In pictures, we might say that OTP is a map
\[ \begin{pic}[scale=.4,yscale=-1]
    \node[dot,fill=\copycolor] (d) {};
    \draw (d) to +(0,1) node [left] {$A$};
    \draw (d) to[out=0,in=90] +(1,-1) node[right] {$B$};
    \draw (d) to[out=180,in=90] +(-1,-1) node[left] {$E$};
  \end{pic}
  \otimes
    \begin{pic}[scale=.4]
    \node[dot,fill=\copycolor] (d) {};
    \draw (d) to +(0,-1) node[state,scale=0.75] {\normalsize$\rand$};
    \draw (d) to[out=180,in=-90] +(-1,1) node[left] {$A$};
    \draw (d) to[out=0,in=-90] +(1,1) node[right] {$B$};
  \end{pic}
  \xrightarrow{OTP}
    \begin{pic}[scale=.4] \draw (0,0) node[left] {$A$} to (0,3) node[right] {$B$}; \end{pic}
  \]

\section{Diffie-Hellman key exchange}\label{sec:dhke}

We now turn our attention to the problem of obtaining keys in the first place, by modelling Diffie-Hellman key exchange (DHKE). Let us first recall DHKE as usually described. First of all, Alice and Bob agree on a finite cyclic group $G$ with generator $g$ (in fact, the group might depend on the security parameter, so one could think of them as agreeing on a sequence of groups). Alice and Bob uniformly sample $a$ and $b$ from $\Z_n$ where $n=\left\vert G\right\vert$. Alice and Bob then broadcast $g^a$ and $g^b$ over a public channel, after which Alice computes $k_a:=(g^b)^a$ and Bob computes $k_b:=(g^a)^b$. As $k_a=k_b=g^{ab}$, they've now agreed on a key. If the group (or sequence of them) is chosen well, the distribution $(g^a,g^b,g^{ab})$ for uniformly sampled $a,b$  ``looks like'' the distribution $(g^a,g^b,g^c)$ for uniformly sampled $a,b,c$, and this is in fact often taken as the definition of security~\cite[Section 7.3.2]{KL15}.

\subsection{The category of efficient probabilistic computations}

We will now build the category we will work in. While $\cat{FinStoch}$ was a workable starting point for the OTP, it is no longer good enough for DHKE as the security constraints are computational and hence require asymptotic notions. Informally, we will work with the category whose maps are ``efficient sequences of stochastic maps'', where we identify efficiency with computability in polynomial-time as usual. As a first approximation, one might think of the larger category $[\N,\cat{FinStoch}]$ of all sequences of stochastic maps (here $\N$ is viewed as a discrete category) and then restrict to the efficient sequences therein. However, one usually defines polynomial-time computability for functions on some standard set (\eg $\N$ or $\{0,1\}^*$) and then lifts this to other sets by encoding. We capture this in the following definition\footnote{Our construction is a variant of a special case from~\cite{dusko:crypto}---the most important difference is that we define indistinguishability in terms of negligible distinguisher advantage (as is standard), rather than in terms of negligible $\ell^\infty$-distance.}.

\begin{defi}
An \emph{efficient sequence of finite sets} consists of a sequence of injections $(\sem{-}_n\colon A_n\to \{0,1\}^*$) where each $A_n$ is a finite set and the characteristic functions of the sets $\sem{A_n}_n\subset \{0,1\}^*$ can be computed in polynomial time. When there is no risk of confusion, we will often drop the subscript and write $\sem{-}$ instead of $\sem{-}_n$, and similarly we won't disambiguate notationally between the encoding functions $\sem{-}$ of different sets unless we really have to. When we think of the index as security parameter, we will often index with $\lambda$ rather than with $n$.

An efficient stochastic map $(A_n,\sem{-})_{n\in\N} \to (B_n,\sem{-})_{n\in \N}$ consists of a sequence of stochastic maps $(f_n\colon A_n\to B_n)_{n\in \N}$ such that sequence of maps $(\sem{f_n}\colon \sem{A_n}_n\to \sem{B_n}_n)_{n\in\N}$ defined by $\sem{f_n}(\sem{a})=\sem{f_n(a)}$ can be computed in probabilistic polynomial time.

We denote the category of efficient sequences of finite sets and efficient stochastic maps between them by $\Eff$. Fixing some efficient bijective pairing function $\inprod{-}{-}\colon\{0,1\}^*\times \{0,1\}^*\to \{0,1\}^*$, we will equip $\Eff$ with the structure of a symmetric monoidal category, where the monoidal structure on objects is induced by the cartesian product of sets, and the encoding of $(a,b)\in A_n\times B_n$ is given by $\inprod{\sem{a}}{\sem{b}}$.
\end{defi}

We will now formalize the notion of ``computational indistinguishability'' as a congruence on $\Eff$.  Recall that a function $f\colon \N\to\R^+$ is \emph{negligible} if for any exponent $c\in\N$ the function $f$ is eventually smaller than $1/x^c$.

\begin{defi}\label{def:indistuingishability}
Given two objects $A$ and $B$ of $\Eff$, a distinguisher of type $A\to B$  is informally speaking a program that tries to tell apart programs of type $A\to B$. More precisely, a distinguisher of type $A\to B$ is a probabilistic polynomial-time algorithm that queries an oracle (taking inputs in $A$ and giving outputs in $B$), and outputs a single bit. In particular, it can query the oracle multiple repeatedly, albeit only polynomially many times. Given a distinguisher $D$ of type $A\to B$ and a map $f\colon A\to B$ in $\Eff$, we let $D(f)$ denote the function $\N\to \R^+$ that given $n$, outputs the probability that $D$ outputs $1$ given access $f_n$ as its oracle. Two parallel maps $f,g\colon A\rightrightarrows B$ in $\Eff$ are \emph{computationally indistinguishable}, written $f\approx g$, if for any distinguisher $D$ of type $A\to B$, the function $\left\vert D(f)-D(g)\right\vert$ is negligible.
\end{defi}

As morphisms of $\Eff$ and our chosen distinguishers are by definition polynomial-time, for any morphism  $f$ of $\Eff$ one can always construct distinguishers that use $f$ as a subroutine. This is at the heart of the following proposition.
\begin{prop} Computational indistinguishability is a monoidal congruence on $\Eff$.
\end{prop}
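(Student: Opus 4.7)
The plan is to verify the three requirements in turn: that $\approx$ is an equivalence relation on each hom-set, that it is preserved by sequential composition, and that it is preserved by the monoidal product. Throughout, I will rely on two elementary facts: the sum of two negligible functions is negligible, and an efficient stochastic map may be used as a subroutine by a probabilistic polynomial-time distinguisher without disturbing its running time bound (since the class of polynomial-time functions is closed under polynomially many compositions with polynomial-time subroutines).

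First, for the equivalence relation part: reflexivity and symmetry of $\approx$ are immediate from $|D(f)-D(f)|=0$ and $|D(f)-D(g)|=|D(g)-D(f)|$. For transitivity, the triangle inequality gives
\[|D(f)-D(h)|\le |D(f)-D(g)|+|D(g)-D(h)|,\]
and the right-hand side is a sum of two negligible functions, hence negligible.

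Next, for sequential composition, suppose $f,f'\colon A\to B$ and $g,g'\colon B\to C$ are morphisms of $\Eff$ with $f\approx f'$ and $g\approx g'$. Given any distinguisher $D$ of type $A\to C$, I would introduce the hybrid $g'\circ f$ and bound
\[|D(g\circ f)-D(g'\circ f')|\le |D(g\circ f)-D(g'\circ f)|+|D(g'\circ f)-D(g'\circ f')|.\]
For the first summand, I construct a distinguisher $D_1$ of type $B\to C$ that simulates $D$, answering each of $D$'s oracle queries $a\in A$ by first running $f$ on $a$ (using $f$ as a subroutine) and then querying its own oracle on the result. Then $D_1(g)=D(g\circ f)$ and $D_1(g')=D(g'\circ f)$, so this difference is negligible by $g\approx g'$. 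For the second summand, I construct a distinguisher $D_2$ of type $A\to B$ that simulates $D$, answering each oracle query by first querying its own oracle and then running $g'$ on the output; this gives $D_2(f)=D(g'\circ f)$ and $D_2(f')=D(g'\circ f')$, and the difference is negligible by $f\approx f'$. The hybrid argument for $\otimes$ is analogous: given $f\approx f'\colon A\to B$ and $g\approx g'\colon C\to D$, I hybrid through $f'\otimes g$, constructing distinguishers that use $g$ (respectively $f'$) as a subroutine to fill in the parallel component while querying the oracle on the other.

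The main obstacle is not conceptual but bookkeeping: I must be careful that each constructed distinguisher queries its oracle only polynomially many times and runs in probabilistic polynomial time overall. Since the original $D$ makes polynomially many queries, each of which is answered by at most one invocation of a polynomial-time subroutine ($f$, $g'$, or the relevant component of the monoidal hybrid), this follows immediately; but it is the one place where the definition of efficient distinguisher and efficient morphism interact, and thus where the argument would need to be presented carefully. Once this is established, summing the two negligible bounds and quantifying over all $D$ yields the required congruence properties.
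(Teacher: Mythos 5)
Your proposal is correct and follows essentially the same route as the paper: the triangle inequality plus closure of negligible functions under addition for transitivity, and a hybrid argument in which a distinguisher for the composite is converted into a distinguisher for one factor by running the other (efficient) factor as a subroutine. The only cosmetic difference is your choice of intermediate hybrid ($g'\circ f$ rather than $g\circ f'$), and your extra care about the query count and running time of the constructed distinguishers is a welcome elaboration of a point the paper leaves implicit.
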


\begin{proof}
It is obvious that $\approx$ is reflexive and symmetric. To see that it is transitive, assume that $f,g,h$ are maps $A\to B$ with $f\approx g\approx h$. Consider an arbitrary distinguisher $D$ of type $A\to B$. Then

\[\left\vert D(f)-D(h)\right\vert=\left\vert D(f)-D(g)+D(g)-D(h)\right\vert\leq \left\vert D(f)-D(g)\right\vert+\left\vert D(g)-D(h)\right\vert\]
As the sum of two negligible functions is negligible, we have $f\approx h$.

We now check that $\approx$ respects sequential composition. Consider morphisms $f,f'\colon A\rightrightarrows B$ and $g,g'\colon B\rightrightarrows C$ with $f\approx f'$ and $g\approx g'$. Now, as $f\approx f'$ we must have $gf\approx gf'$, as any distinguisher $D$ of  type $A\to C$ can be used to construct a new distinguisher $D_{g\circ -}$ of type $A\to B$, that behaves otherwise as $D$ but whenever it receives an answer in $B$ from the oracle, uses $g$ to transform it to an answer in $C$. Similarly, $gf\approx g' f'$. Taken together, we have $gf\approx gf'\approx g'f'$, so that by transitivity we have $gf\approx g'f'$.  The argument for parallel composition is similar.
\end{proof}

\subsection{Resources needed for DHKE}

As with the one-time pad, we want our resources to be morphisms of this base category (equipped with a chosen labeling of input and output ports telling which parties control/access which). Consequently, we will work with the resource theory given by $\ncomb(\Eff^3)\xrightarrow{\ncomb(\otimes)}\ncomb(\Eff)\xrightarrow{\hom(I,-)}\Equ$, with the attack model on $\ncomb(\Eff^3)$ as in Definition~\ref{def:maliciousparty_ncomb}, with the malicious party called Eve. While our protocol will actually be correct up to $=$, DHKE is only secure up to $\approx$, so we are working with security up to computational indistinguishability as in Section~\ref{sec:equivalence}.

We will now discuss the ingredients of the DHKE protocol. First of all, we can think of the sequence of groups $G=(G_\lambda)_{\lambda\in \N}$ as an object in $\Eff$, with the group structure giving rise to a similar structure as for the OTP, and the element $g\in G$ (\ie $g_\lambda\in G_\lambda$ for $\lambda\in\N$) is then modeled as a (deterministic) map $I\to G \in \Eff$.

However, Alice and Bob don't really need to use the whole group structure of $G$: rather, it's sufficient that they can compute the map $(a,h)\mapsto h^a$, where $h$ is any element of the group and $a$ is an integer (modulo $\left\vert G\right\vert$). We will model this by defining the object $\Z_n:=(\Z_{n(\lambda)})$ where $n(\lambda)=\left\vert G_\lambda\right\vert$ equipped with some encoding, so that the map $(a,h)\mapsto h^a$ becomes a map $act\colon\Z_n\otimes G\to G$ as in Figure~\ref{fig:action}.  In addition to these, Alice and Bob both need the availability to sample uniformly from $\Z_n$, which then becomes a state $r\colon I\to \Z_n$ as in Figure~\ref{fig:uniformint}, and to compute the generator $g$ as in Figure~\ref{fig:generator}.

\begin{figure}
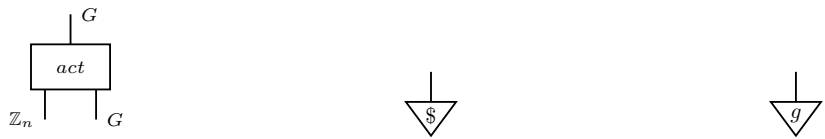

    \centering
    \begin{subfigure}[b]{0.31\textwidth}
\[
  \begin{pic}[scale=.4]
      \setlength\minimummorphismwidth{7mm}
      \node[morphism] (a) at (0,0) {$act$};
      \draw (a.north) to ++(0,1) node[right] {$G$};
      \draw ([xshift=-4.25pt]a.south west) to ++(0,-1) node[left] {$\Z_n$};
      \draw ([xshift=4.25pt]a.south east) to ++(0,-1) node[right] {$G$};
  \end{pic}
\]
        \caption{The action $\Z_n\otimes G\to G$.} 
        \label{fig:action}
    \end{subfigure}
    ~
\begin{subfigure}[b]{0.31\textwidth}
  \[
 \begin{pic}[scale=.4]
    \node[state,scale=.75] (d) at (0,0) {\normalsize$\rand$};
    \draw (d) to ++(0,1);
  \end{pic}
  \]
  \caption{Uniform distribution on $\Z_n$}
  \label{fig:uniformint}
  \end{subfigure}
  ~
    \begin{subfigure}[b]{0.31\textwidth}
\[
 \begin{pic}[scale=.4]
    \node[state,scale=.75] (d) at (0,0) {\normalsize$g$};
    \draw (d) to ++(0,1);
  \end{pic}
  \]
  \caption{The generator $g$ of $G$}
        \label{fig:generator}
    \end{subfigure}

   \caption{Private computations used in Diffie-Hellman key exchange}
\end{figure}

The shared resources needed for the protocol amount to just the broadcasting maps of type $G$, with one for Alice and one for Bob. Alice's broadcasting map is depicted in Figure~\ref{fig:insecure_channel}, and a depiction of Bob's broadcasting map is obtained by switching the roles of $A$ and $B$ in the pictures.
Similarly, the target resource is a uniformly random key $k$ sampled from  $G$, depicted in Fig~\ref{fig:shared_key}.

We now discuss some properties enjoyed by these structures. First of all, both $\Z_n$ and $G$ are groups on their own right, so that together with their copy maps they satisfy the equations~\eqref{eq:monoid}-\eqref{eq:antipode} from the previous section. Similarly, the fact that the private random distribution used by Alice and Bob is uniform is captured by the equation~\eqref{eq:uniformly_random}. Now, the map  $act\colon\Z_n\otimes G\to G$ makes $G$ into a module over $\Z_n$, which pictorially corresponds to the following two equations
  \begin{equation}\label{eq:module}
  \begin{pic}[scale=.4]
      \setlength\minimummorphismwidth{7mm}
      \node[morphism] (a) at (.85,2) {$act$};
      \node[dot,fill=\multcolor] (b) at (0,0) {};
      \draw (a.north) to ++(0,1) node[right] {$G$};
      \draw ([xshift=-4.25pt]a.south west) to (b);
      \draw ([xshift=4.25pt]a.south east) to ++(0,-3.15) node[right] {$G$};
      \draw (b) to[out=180,in=90] ++(-1,-1) to ++(0,-.85) node[left] {$\Z_n$};
      \draw (b) to[out=0,in=90] ++(1,-1) to ++(0,-.85) node[left] {$\Z_n$};
  \end{pic}
  =
  \begin{pic}[scale=.4]
      \setlength\minimummorphismwidth{7mm}
      \node[morphism] (a) at (0,2.15) {$act$};
      \node[morphism] (b) at (.85,0) {$act$};
      \draw (a.north) to ++(0,1) node[right] {$G$};
      \draw ([xshift=-4.25pt]a.south west)  to ++(0,-3.15) node[left] {$\Z_n$};
       \draw ([xshift=4.25pt]a.south east) to (b.north);
      \draw ([xshift=-4.25pt]b.south west)  to ++(0,-1) node[right] {$\Z_n$};
       \draw ([xshift=4.25pt]b.south east) to ++(0,-1) node[right] {$G$};
  \end{pic}
  \quad\text{and}\quad
    \begin{pic}[scale=.4]
      \setlength\minimummorphismwidth{7mm}
      \node[morphism] (a) at (.85,2) {$act$};
      \node[dot,fill=\multcolor] (b) at (0,0) {};
      \draw (a.north) to ++(0,1) node[right] {$G$};
      \draw ([xshift=-4.25pt]a.south west) to (b);
      \draw ([xshift=4.25pt]a.south east) to ++(0,-2) node[right] {$G$};
  \end{pic}
    =\quad
  \begin{pic}
    \draw (0,0) node[right] {$G$} to (0,1.8);
  \end{pic}
    \end{equation}

Moreover, we need the following general fact: copying the result of a deterministic function (whether $g$ or the action) is equal to copying the inputs and applying the function twice\footnote{In fact, this is often taken as the definition of deterministic morphisms in categorical approaches probability~\cite[Definition 10.1]{fritz:markovcats}.}. In pictures, this amounts to saying that if $f\colon A\to B$ is deterministic, then

\begin{equation}\label{eq:determinism}
  \begin{pic}[scale=.4]
    \node[dot,fill=\copycolor] (d) at (0,0) {};
    \draw (d) to ++(0,-1) node[left] {$A$};
    \draw (d) to[in=-90,out=180] ++(-1,1) to ++(0,.5) node[morphism,scale=.75] (f1)  {$f$};
    \draw (d) to[in=-90,out=0] ++(1,1)  to ++(0,.5)  node[morphism,scale=.75]  (f2) {$f$};
    \draw (f1.north) to ++(0,1) node[left] {$B$};
    \draw (f2.north) to ++(0,1) node[right] {$B$};
  \end{pic}
  =
     \begin{pic}[scale=.4]
    \node[morphism,scale=.75] (c) at (0,0) {\normalsize$f$}; 
    \draw (c.north) to ++(0,1) node[dot,fill=\copycolor] (d){};
    \draw (c.south) to ++(0,-1) node[left] {$A$};
    \draw (d) to[in=-90,out=180] ++(-1,1) node[left]  {$B$};
    \draw (d) to[in=-90,out=0] ++(1,1) node[right] {$B$};
  \end{pic}
\end{equation}

The final properties that we need are that multiplication on $\Z_n$ is commutative and that copying is ``cocommutative''. In pictures, these are captured by
\begin{equation}\label{eq:commutativity}
  \begin{pic}[scale=.4]
    \node[dot,fill=\multcolor] (d) {};
    \draw (d) to ++(0,1);
    \draw (d) to[out=0,in=90] ++(1,-1) to[out=-90,in=90] ++(-2,-2);
    \draw (d) to[out=180,in=90] ++(-1,-1) to[out=-90,in=90] ++(2,-2);
  \end{pic}
  =
  \begin{pic}[scale=.4]
    \node[dot,fill=\multcolor] (d) {};
    \draw (d) to ++(0,1.5);
    \draw (d) to[out=0,in=90] ++(1,-1) to ++(0,-.5);
    \draw (d) to[out=180,in=90] ++(-1,-1) to ++(0,-.5);
  \end{pic}
\end{equation}
and
\begin{equation}\label{eq:cocommmutativity}
    \begin{pic}[scale=.4]
    \node[dot,fill=\copycolor] (d) {};
    \draw (d) to +(0,-1);
    \draw (d) to[out=180,in=-90] ++(-1,1) to[out=90,in=-90] ++(2,2);
    \draw (d) to[out=0,in=-90] ++(1,1)  to[out=90,in=-90] ++(-2,2);
  \end{pic}
    =
    \begin{pic}[scale=.4]
    \node[dot,fill=\copycolor] (d) {};
    \draw (d) to ++(0,-1.5);
    \draw (d) to[out=180,in=-90] ++(-1,1) to ++(0,.5);
    \draw (d) to[out=0,in=-90] ++(1,1) to ++(0,.5);
  \end{pic}
\end{equation}

\subsection{Correctness and security of DHKE}

We have now discussed all of the building blocks of DHKE, and can now depict the protocol pictorially as follows.
\[
  \begin{pic}[scale=.4]
    \setlength\minimummorphismwidth{7mm}
    \node[morphism,scale=.75] (a1) at (0,0) {\normalsize$act$};
    \draw ([xshift=-3.25pt]a1.south west) to[in=0,out=-90] ++(-1,-1) node[dot,fill=\copycolor] (d1) {};
    \draw (d1) to +(0,-1) node[state,scale=0.75] {\normalsize$\rand$};
    \node[morphism,scale=.75] (b1) at (-2,6) {\normalsize$act$}; 
    \draw (b1.north) to ++(0,1);
    \draw (d1) to[out=180,in=-90] ++(-1,1) to ([xshift=-3.25pt]b1.south west);
    \draw ([xshift=3.25pt]a1.south east) to ++(0,-1) node[state,scale=0.75] {\normalsize$g$};
    \draw (a1.north) to ++(0,1) node[dot,fill=\copycolor] (e1) {};
    \node[morphism,scale=.75] (a2) at (6,0) {\normalsize$act$}; 
    \draw ([xshift=-3.25pt]a2.south west) to[in=0,out=-90] ++(-1,-1) node[dot,fill=\copycolor] (d2) {};
    \draw (d2) to +(0,-1) node[state,scale=0.75] {\normalsize$\rand$};
    \node[morphism,scale=.75] (b2) at (4,6) {\normalsize$act$};  
    \draw (b2.north) to ++(0,1);
    \draw (d2) to[out=180,in=-90] ++(-1,1) to ([xshift=-3.25pt]b2.south west);
    \draw ([xshift=3.25pt]a2.south east) to ++(0,-1) node[state,scale=0.75] {\normalsize$g$};
    \draw (a2.north) to ++(0,1) node[dot,fill=\copycolor] (e2) {};
    \draw (e1) to[out=0,in=-90] ([xshift=3.25pt]b2.south east);
    \draw (e1) to[out=180,in=-90] ++(-5,2) to ++(0,4) node[dot,fill=\copycolor] {}; 
     \draw (e2) to[out=180,in=-90] ++(-1,.75) to[out=90,in=0] ++(-1,.75) to ++(-7.25,0) to[out=180,in=-90] ++(-1,1) to ++(0,3.5) node[dot,fill=\copycolor] {}; 
    \draw (e2) to[out=0,in=-90] ++(1,.75) to ++(0,1) to[out=90,in=0] ++(-1,.75) to ++(-6.36,0) to[in=-90,out=180] ++(-1,.75) to ([xshift=3.25pt]b1.south east);
    \draw[dotted] (-3.5,-3.25) to (-3.5,7.5);
    \draw[dotted] (2.25,-3.25) to (2.25,7.5);
    \node at (-5,-3.25) {Eve};
    \node at (0,-3.25) {Alice};
    \node at (6,-3.25) {Bob};
  \end{pic}
  \]
This picture corresponds to our earlier description of the protocol, where Alice and Bob first uniformly sample $a$ and $b$ from $\Z_n$ where $n=\left\vert G\right\vert$. Alice and Bob then broadcast $g^a$ and $g^b$ over a public channel, after which Alice computes $k_a:=(g^b)^a$ and Bob computes $k_b:=(g^a)^b$, with Eve supposed to delete the public messages.
We can now prove that the protocol is correct using the properties we have laid out.
 \[
  \begin{pic}[scale=.4]
    \setlength\minimummorphismwidth{7mm}
    \node[morphism,scale=.75] (a1) at (0,0) {\normalsize$act$};
    \draw ([xshift=-3.25pt]a1.south west) to[in=0,out=-90] ++(-1,-1) node[dot,fill=\copycolor] (d1) {};
    \draw (d1) to +(0,-1) node[state,scale=0.75] {\normalsize$\rand$};
    \node[morphism,scale=.75] (b1) at (-2,6) {\normalsize$act$};
    \draw (b1.north) to ++(0,1);
    \draw (d1) to[out=180,in=-90] ++(-1,1) to ([xshift=-3.25pt]b1.south west);
    \draw ([xshift=3.25pt]a1.south east) to ++(0,-1) node[state,scale=0.75] {\normalsize$g$};
    \draw (a1.north) to ++(0,1) node[dot,fill=\copycolor] (e1) {};
    \node[morphism,scale=.75] (a2) at (6,0) {\normalsize$act$};
    \draw ([xshift=-3.25pt]a2.south west) to[in=0,out=-90] ++(-1,-1) node[dot,fill=\copycolor] (d2) {};
    \draw (d2) to +(0,-1) node[state,scale=0.75] {\normalsize$\rand$};
    \node[morphism,scale=.75] (b2) at (4,6) {\normalsize$act$};  
    \draw (b2.north) to ++(0,1);
    \draw (d2) to[out=180,in=-90] ++(-1,1) to ([xshift=-3.25pt]b2.south west);
    \draw ([xshift=3.25pt]a2.south east) to ++(0,-1) node[state,scale=0.75] {\normalsize$g$};
    \draw (a2.north) to ++(0,1) node[dot,fill=\copycolor] (e2) {};
    \draw (e1) to[out=0,in=-90] ([xshift=3.25pt]b2.south east);
    \draw (e1) to[out=180,in=-90] ++(-5,2) to ++(0,4) node[dot,fill=\copycolor] {}; 
     \draw (e2) to[out=180,in=-90] ++(-1,.75) to[out=90,in=0] ++(-1,.75) to ++(-7.25,0) to[out=180,in=-90] ++(-1,1) to ++(0,3.5) node[dot,fill=\copycolor] {};
    \draw (e2) to[out=0,in=-90] ++(1,.75) to ++(0,1) to[out=90,in=0] ++(-1,.75) to ++(-6.36,0) to[in=-90,out=180] ++(-1,.75) to ([xshift=3.25pt]b1.south east);
  \end{pic}
   \overset{\eqref{eq:comonoid}}{=}
     \begin{pic}[scale=.4]
    \setlength\minimummorphismwidth{7mm}
    \node[morphism,scale=.75] (a1) at (0,0) {\normalsize$act$}; 
    \node[morphism,scale=.75] (a2) at (6,0) {\normalsize$act$}; 
    \node[morphism,scale=.75] (b1) at (-2,6) {\normalsize$act$}; 
    \node[morphism,scale=.75] (b2) at (4,6) {\normalsize$act$}; 
    \draw ([xshift=-3.25pt]a1.south west) to[in=0,out=-90] ++(-1,-1) node[dot,fill=\copycolor] (d1) {};
    \draw (d1) to +(0,-1) node[state,scale=0.75] {\normalsize$\rand$};
    \draw (b1.north) to ++(0,1);
    \draw (d1) to[out=180,in=-90] ++(-1,1) to ([xshift=-3.25pt]b1.south west);
    \draw ([xshift=3.25pt]a1.south east) to ++(0,-1) node[state,scale=0.75] {\normalsize$g$};
    \draw (a1.north) to[out=90,in=-90] ([xshift=3.25pt]b2.south east);
    \draw ([xshift=-3.25pt]a2.south west) to[in=0,out=-90] ++(-1,-1) node[dot,fill=\copycolor] (d2) {};
    \draw (d2) to +(0,-1) node[state,scale=0.75] {\normalsize$\rand$};
    \draw (b2.north) to ++(0,1);
    \draw (d2) to[out=180,in=-90] ++(-1,1) to ([xshift=-3.25pt]b2.south west);
    \draw ([xshift=3.25pt]a2.south east) to ++(0,-1) node[state,scale=0.75] {\normalsize$g$};
    \draw (a2.north) to[out=90,in=-90] ([xshift=3.25pt]b1.south east);
  \end{pic}
   \overset{\eqref{eq:cocommmutativity}}{=}
     \begin{pic}[scale=.4]
    \setlength\minimummorphismwidth{7mm}
    \node[morphism,scale=.75] (a1) at (0,0) {\normalsize$act$}; 
    \node[morphism,scale=.75] (a2) at (3,0) {\normalsize$act$}; 
    \node[morphism,scale=.75] (b1) at (-2,4) {\normalsize$act$};
    \node[morphism,scale=.75] (b2) at (4,4) {\normalsize$act$}; 
    \draw ([xshift=-3.25pt]a1.south west) to[in=0,out=-90] ++(-1,-1) node[dot,fill=\copycolor] (d1) {};
    \draw (d1) to +(0,-1) node[state,scale=0.75] {\normalsize$\rand$};
    \draw (b1.north) to ++(0,1);
    \draw (d1) to[out=180,in=-90] ++(-1,1) to ([xshift=-3.25pt]b1.south west);
    \draw ([xshift=3.25pt]a1.south east) to ++(0,-1) node[state,scale=0.75] {\normalsize$g$};
    \draw (a1.north) to[out=90,in=-90] ([xshift=3.25pt]b2.south east);
    \draw ([xshift=-3.25pt]a2.south west) to ++(0,-2) to[in=180,out=-90] ++(1.25,-1.25) node[dot,fill=\copycolor] (d2) {};
    \draw (d2) to ++(0,-1) node[state,scale=0.75] {\normalsize$\rand$};
    \draw (b2.north) to ++(0,1) node[right] {};
    \draw (d2)  to[out=0,in=-90] ++(1.25,1.25) to ++(0,3.155) to[out=90,in=-90] ++(-1.5,1.75) to ([xshift=-3.25pt]b2.south west);
    \draw ([xshift=3.25pt]a2.south east) to ++(0,-1) node[state,scale=0.75] {\normalsize$g$};
    \draw (a2.north) to[out=90,in=-90] ([xshift=3.25pt]b1.south east);
  \end{pic}
  \]
  \[
       \overset{\eqref{eq:module}}{=}
     \begin{pic}[scale=.4]
    \setlength\minimummorphismwidth{7mm}
    \node[morphism,scale=.75] (a) at (0.68,6) {\normalsize$act$};
    \node[morphism,scale=.75] (c) at (3.68,6) {\normalsize$act$};
    \draw (a.north) to ++(0,1) node[left] {};
    \draw ([xshift=4.25pt]a.south east) to ++(0,-1) node[state,scale=0.75]  {\normalsize$g$};
    \node[dot,fill=\copycolor] (e) at (3,0) {};
    \draw (e) to ++(0,-.5) node[left] {$b$} to ++(0,-.5) node[state,scale=0.75] {\normalsize$\rand$};
    \draw (e) to[out=0,in=-90] ++(0.75,.75) to[out=90,in=-90] ++(-1.5,1.5) to[out=90,in=180] ++(.75,.75) node[dot,fill=\multcolor] (f) {};
    \node[dot,fill=\copycolor] (d) at (0,0) {}; 
    \draw (d) to ++(0,-.5) node[left] {$a$} to ++(0,-.5) node[state,scale=0.75] {\normalsize$\rand$};
    \draw (d) to[in=-90,out=180] ++(-1,1) to ++(0,1) to[out=90,in=180] ++(1,1) node[dot,fill=\multcolor] (g) {};
    \draw (d) to[in=-15] (f);
    \draw (e) to (g);
     \draw ([xshift=-4.25pt]a.south west) to (g);
    \draw ([xshift=-4.25pt]c.south west) to (f);
    \draw ([xshift=4.25pt]c.south east) to ++(0,-1) node[state,scale=0.75]  {\normalsize$g$};
    \draw (c.north) to ++(0,1); 
  \end{pic}
   \overset{\eqref{eq:commutativity}}{=}
     \begin{pic}[scale=.4]
    \setlength\minimummorphismwidth{7mm}
    \node[morphism,scale=.75] (a) at (0.68,5) {\normalsize$act$};
    \node[morphism,scale=.75] (c) at (3.68,5) {\normalsize$act$};
    \draw (a.north) to ++(0,1);
    \draw ([xshift=4.25pt]a.south east) to ++(0,-1) node[state,scale=0.75]  {\normalsize$g$};
    \node[dot,fill=\copycolor] (e) at (3,0) {};
    \draw (e) to ++(0,-.5) node[left] {$b$} to ++(0,-.5) node[state,scale=0.75] {\normalsize$\rand$};
    \draw (e) to[out=30,in=-90] ++(0.75,1) to[out=90,in=-30] ++(-0.75,1) node[dot,fill=\multcolor] (f) {};
    \node[dot,fill=\copycolor] (d) at (0,0) {}; 
    \draw (d) to ++(0,-.5) node[left] {$a$} to ++(0,-.5) node[state,scale=0.75] {\normalsize$\rand$};
    \draw (d) to[in=-90,out=150] ++(-0.75,1) to[out=90,in=210] ++(0.75,1) node[dot,fill=\multcolor] (g) {};
    \draw (d) to (f);
    \draw (e) to (g);
     \draw ([xshift=-4.25pt]a.south west) to (g);
    \draw ([xshift=-4.25pt]c.south west) to (f);
    \draw ([xshift=4.25pt]c.south east) to ++(0,-1) node[state,scale=0.75]  {\normalsize$g$};
    \draw (c.north) to ++(0,1); 
  \end{pic}
   \overset{\eqref{eq:bialg}}{=}
        \begin{pic}[scale=.4]
    \setlength\minimummorphismwidth{7mm}
    \node[morphism,scale=.75] (a) at (0,6) {\normalsize$act$};
    \node[morphism,scale=.75] (c) at (3,6) {\normalsize$act$};
    \draw (a.north) to ++(0,1);
    \draw ([xshift=4.25pt]a.south east) to ++(0,-1) node[state,scale=0.75]  {\normalsize$g$};
    \draw (c.north) to ++(0,1); 
    \draw ([xshift=4.25pt]c.south east) to ++(0,-1) node[state,scale=0.75]  {\normalsize$g$};
    \draw ([xshift=-4.25pt]a.south west) to ++(0,-1.5) to[out=-90,in=180] ++(1.5,-1.5) node[dot,fill=\copycolor] (e) {};
    \draw ([xshift=-4.25pt]c.south west) to ++(0,-1.5) to[out=-90,in=0] (e);
    \draw (e) to ++(0,-1.5) node[dot,fill=\multcolor] (f) {};
    \draw (f) to[out=0,in=90] ++(1,-1) node[state,scale=0.75] {\normalsize$\rand$};
    \draw (f) to[out=180,in=90] ++(-1,-1) node[state,scale=0.75] {\normalsize$\rand$};
  \end{pic}
      \overset{\eqref{eq:uniformly_random}}{=}
    \begin{pic}[scale=.4]
    \setlength\minimummorphismwidth{7mm}
    \node[morphism,scale=.75] (a) at (0,3.6) {\normalsize$act$};
    \node[morphism,scale=.75] (c) at (3,3.6) {\normalsize$act$};
    \draw (a.north) to ++(0,1);
    \draw ([xshift=4.25pt]a.south east) to ++(0,-1) node[state,scale=0.75]  {\normalsize$g$};
    \draw (c.north) to ++(0,1); 
    \draw ([xshift=4.25pt]c.south east) to ++(0,-1) node[state,scale=0.75]  {\normalsize$g$};
    \draw ([xshift=-4.25pt]a.south west) to ++(0,-1.5) to[out=-90,in=180] ++(1.5,-1.5) node[dot,fill=\copycolor] (e) {};
    \draw ([xshift=-4.25pt]c.south west) to ++(0,-1.5) to[out=-90,in=0] (e);
    \draw (e) to ++(0,-1) node[state,scale=0.75] {\normalsize$\rand$};
    \draw[color=white] (c.south) to ++(0,-3) node[dot,fill=\copycolor] (f) {};
    \draw (f) to ++(0,-1) node[state,scale=0.75] {\normalsize$\rand$};
  \end{pic}
  \]
  \[
   \overset{\eqref{eq:deletion}}{=}
    \begin{pic}[scale=.4]
    \setlength\minimummorphismwidth{7mm}
    \node[morphism,scale=.75] (a) at (0,6) {\normalsize$act$};
    \node[morphism,scale=.75] (c) at (3,6) {\normalsize$act$};
    \draw (a.north) to ++(0,1);
    \draw ([xshift=4.25pt]a.south east) to ++(0,-1) node[state,scale=0.75]  {\normalsize$g$};
    \draw (c.north) to ++(0,1); 
    \draw ([xshift=4.25pt]c.south east) to ++(0,-1) node[state,scale=0.75]  {\normalsize$g$};
    \draw ([xshift=-4.25pt]a.south west) to ++(0,-1.5) to[out=-90,in=180] ++(1.5,-1.5) node[dot,fill=\copycolor] (e) {};
    \draw ([xshift=-4.25pt]c.south west) to ++(0,-1.5) to[out=-90,in=0] (e);
    \draw (e) to ++(0,-1) node[state,scale=0.75] {\normalsize$\rand$};
  \end{pic}
   \overset{\eqref{eq:determinism}}{=}
     \begin{pic}[scale=.4]
    \setlength\minimummorphismwidth{12mm}
    \node[morphism,scale=.75] (c) at (0,0) {\normalsize$act$}; 
    \draw (c.north) to ++(0,1) node[dot,fill=\copycolor] (d){};
    \draw ([xshift=-4.25pt]c.south west)to ++(0,-.5) to ++(0,-.5)node[state,scale=0.75]  {\normalsize$\rand$};
    \draw ([xshift=4.25pt]c.south east) to ++(0,-1) node[state,scale=0.75]  {\normalsize$g$};
    \draw (d) to[in=-90,out=180] ++(-1,1);
    \draw (d) to[in=-90,out=0] ++(1,1);
    \end{pic}
  \]

This shows that when everyone follows the protocol, the end result is Alice and Bob sharing an element of the form $g^a$ for $a$ chosen uniformly from $\{1,\dots,\left\vert G\right\vert\}$. As this gives the uniform distribution on $G$, the protocol is indeed correct.

We will now discuss the security of the protocol. The usual formulation of the DH-assumption is given by saying that
when $a,b,c$ are sampled uniformly, we have $(g^a,g^b,g^{ab})\approx (g^a,g^b,g^c)$. For us, it's easier to work with an equivalent formulation of assuming that  $(g^a,g^b,g^{ab},g^{ab})\approx (g^a,g^b,g^c,g^c)$, as either joint probability distribution can be derived from the other efficiently, just by copying or discarding the last element as appropriate.

Now, when one turns a claim like  $(g^a,g^b,g^{ab},g^{ab})\approx (g^a,g^b,g^c,g^c)$ into an axiom relating two pictures, one has some choices to make: after all, there are many possible diagrams that depict the probability distributions appearing in this claim. One could create a long-winded pictorial proof akin to the correctness proof by choosing appropriate starting pictures. However, we can make things easier for us, and formalize the DH-assumption by stating that
\[
  \begin{pic}[scale=.4]
    \setlength\minimummorphismwidth{7mm}
    \node[morphism,scale=.75] (a1) at (0,0) {\normalsize$act$};
    \node[morphism,scale=.75] (b1) at (-2,6) {\normalsize$act$};
    \node[morphism,scale=.75] (a2) at (6,0) {\normalsize$act$}; 
    \node[morphism,scale=.75] (b2) at (4,6) {\normalsize$act$}; 
    \draw ([xshift=-3.25pt]a1.south west) to[in=0,out=-90] ++(-1,-1) node[dot,fill=\copycolor]  {};
    \draw (d1) to ++(0,-.5) node[left] {$a$} to ++(0,-.5) node[state,scale=0.75] {\normalsize$\rand$};
    \draw (b1.north) to ++(0,1) node[right] {$g^{ab}$};
    \draw (d1) to[out=180,in=-90] ++(-1,1) to ([xshift=-3.25pt]b1.south west);
    \draw ([xshift=3.25pt]a1.south east) to ++(0,-1) node[state,scale=0.75] {\normalsize$g$};
    \draw (a1.north) to ++(0,1) node[dot,fill=\copycolor] (e1) {};
    \draw ([xshift=-3.25pt]a2.south west) to[in=0,out=-90] ++(-1,-1) node[dot,fill=\copycolor] (d2) {};
    \draw (d2) to ++(0,-.5) node[right] {$b$} to ++(0,-.5)  node[state,scale=0.75] {\normalsize$\rand$};
    \draw (b2.north) to ++(0,1) node[right] {$g^{ab}$};
    \draw (d2) to[out=180,in=-90] ++(-1,1) to ([xshift=-3.25pt]b2.south west);
    \draw ([xshift=3.25pt]a2.south east) to ++(0,-1) node[state,scale=0.75] {\normalsize$g$};
    \draw (a2.north) to ++(0,1) node[dot,fill=\copycolor] (e2) {};
    \draw (e1) to[out=0,in=-90] ([xshift=3.25pt]b2.south east);
    \draw (e1) to[out=180,in=-90] ++(-5,2) to ++(0,4) node[left] {$g^a$}; 
     \draw (e2) to[out=180,in=-90] ++(-1,.75) to[out=90,in=0] ++(-1,.75) to ++(-7.25,0) to[out=180,in=-90] ++(-1,1) to ++(0,3.5)  node[right] {$g^b$}; 
    \draw (e2) to[out=0,in=-90] ++(1,.75) to ++(0,1) to[out=90,in=0] ++(-1,.75) to ++(-6.36,0) to[in=-90,out=180] ++(-1,.75) to ([xshift=3.25pt]b1.south east);
  \end{pic}
  \quad \approx
     \begin{pic}[scale=.4]
    \setlength\minimummorphismwidth{12mm}
    \node[morphism,scale=.75] (a) at (0,0) {\normalsize$act$}; 
    \draw (a.north) to ++(0,1) node[left] {$g^a$};
    \draw ([xshift=-4.25pt]a.south west) to ++(0,-.5) node[left] {$a$} to ++(0,-.5) node[state,scale=0.75]  {\normalsize$\rand$};
    \draw ([xshift=4.25pt]a.south east) to ++(0,-1) node[state,scale=0.75]  {\normalsize$g$};
    \node[morphism,scale=.75] (b) at (4,0) {\normalsize$act$};  
    \draw (b.north) to ++(0,1) node[right] {$g^b$};
    \draw ([xshift=-4.25pt]b.south west) to ++(0,-.5) node[right] {$b$} to ++(0,-.5) node[state,scale=0.75]  {\normalsize$\rand$};
    \draw ([xshift=4.25pt]b.south east) to ++(0,-1) node[state,scale=0.75]  {\normalsize$g$};
    \node[morphism,scale=.75] (c) at (8,0) {\normalsize$act$};  
    \draw (c.north) to ++(0,1) node[dot,fill=\copycolor] (d){};
    \draw ([xshift=-4.25pt]c.south west)to ++(0,-.5) node[right] {$c$} to ++(0,-.5)node[state,scale=0.75]  {\normalsize$\rand$};
    \draw ([xshift=4.25pt]c.south east) to ++(0,-1) node[state,scale=0.75]  {\normalsize$g$};
    \draw (d) to[in=-90,out=180] ++(-1,1) node[left]  {$g^c$};
    \draw (d) to[in=-90,out=0] ++(1,1) node[right] {$g^c$};
  \end{pic}
  \]
as the left-hand side indeed depicts the distribution $(g^a,g^b,g^{ab},g^{ab})$ and the right hand side the distribution $(g^a,g^b,g^c,g^c)$. From this assumption, the security of DHKE is immediate: the left-hand side depicts the initial attack by Eve, and up to $\approx$, this is the same result as Eve sampling $g^a$ and $g^b$ on her own while Alice and Bob share $g^c$, which is uniformly distributed. Consequently, DHKE is a secure protocol that takes two broadcast channels (one from Alice and one from Bob) and constructs a shared secret key out of them. In pictures, one might then say that DHKE is a transformation

\[ \begin{pic}[scale=.4,yscale=-1]
    \node[dot,fill=\copycolor] (d) {};
    \draw (d) to +(0,1) node [left] {$A$};
    \draw (d) to[out=0,in=90] +(1,-1) node[right] {$B$};
    \draw (d) to[out=180,in=90] +(-1,-1) node[left] {$E$};
  \end{pic}
  \otimes
    \begin{pic}[scale=.4,yscale=-1]
    \node[dot,fill=\copycolor] (d) {};
    \draw (d) to +(0,1) node [left] {$B$};
    \draw (d) to[out=0,in=90] +(1,-1) node[right] {$A$};
    \draw (d) to[out=180,in=90] +(-1,-1) node[left] {$E$};
  \end{pic}
  \xrightarrow{DHKE}
    \begin{pic}[scale=.4]
    \node[dot,fill=\copycolor] (d) {};
    \draw (d) to +(0,-1) node[state,scale=0.75] {\normalsize$\rand$};
    \draw (d) to[out=180,in=-90] +(-1,1) node[left] {$A$};
    \draw (d) to[out=0,in=-90] +(1,1) node[right] {$B$};
  \end{pic}
  \]
Alternatively, one could have formulated the DH-assumption as stating that
\[
     \begin{pic}[scale=.4]
    \setlength\minimummorphismwidth{7mm}
    \node[morphism,scale=.75] (a) at (-2.33,6) {\normalsize$act$};
    \draw (a.north) to ++(0,1) node[left] {$g^a$};
    \draw ([xshift=4.25pt]a.south east) to ++(0,-1) node[state,scale=0.75]  {\normalsize$g$};
    \node[morphism,scale=.75] (b) at (0.67,6) {\normalsize$act$}; 
    \draw (b.north) to ++(0,1) node[left] {$g^b$};
    \draw ([xshift=4.25pt]b.south east) to ++(0,-1) node[state,scale=0.75]  {\normalsize$g$};
    \node[dot,fill=\copycolor] (e) at (3,0) {}; 
    \draw (e) to ++(0,-.5) node[left] {$b$} to ++(0,-.5) node[state,scale=0.75] {\normalsize$\rand$};
    \draw (e) to[out=180,in=-90] ++(-1,1) to[out=90,in=0] ++(-1,1) to[out=180,in=-90] ++(-1,1) to ([xshift=-4.25pt]b.south west);
    \draw (e) to[out=0,in=-90] ++(1,1) to ++(0,1) to[out=90,in=0] ++(-1,1) node[dot,fill=\multcolor] (f) {};
    \node[dot,fill=\copycolor] (d) at (0,0) {}; 
    \draw (d)to ++(0,-.5) node[left] {$a$} to ++(0,-.5) node[state,scale=0.75] {\normalsize$\rand$};
    \draw (d) to[out=180,in=-90] ++(-1,1) to[out=90,in=0] ++(-1,1) to[out=180,in=-90] ++(-1,1) to ([xshift=-4.25pt]a.south west);
    \draw (d) to[out=0,in=-90] ++(1,1) to[in=180,out=90] (f);
    \node[morphism,scale=.75] (c) at (3.68,6) {\normalsize$act$}; 
    \draw ([xshift=-4.25pt]c.south west) to (f);
    \draw ([xshift=4.25pt]c.south east) to ++(0,-1) node[state,scale=0.75]  {\normalsize$g$};
    \draw (c.north) to ++(0,1) node[dot,fill=\copycolor] (g) {};
    \draw (g) to[in=-90,out=180] ++(-1,1) node[left]  {$g^{ab}$};
    \draw (g) to[in=-90,out=0] ++(1,1) node[right] {$g^{ab}$};
  \end{pic}
  \quad \approx
     \begin{pic}[scale=.4]
    \setlength\minimummorphismwidth{12mm}
    \node[morphism,scale=.75] (a) at (0,0) {\normalsize$act$}; 
    \draw (a.north) to ++(0,1) node[left] {$g^a$};
    \draw ([xshift=-4.25pt]a.south west) to ++(0,-.5) node[left] {$a$} to ++(0,-.5) node[state,scale=0.75]  {\normalsize$\rand$};
    \draw ([xshift=4.25pt]a.south east) to ++(0,-1) node[state,scale=0.75]  {\normalsize$g$};
    \node[morphism,scale=.75] (b) at (4,0) {\normalsize$act$};  
    \draw (b.north) to ++(0,1) node[right] {$g^b$};
    \draw ([xshift=-4.25pt]b.south west) to ++(0,-.5) node[right] {$b$} to ++(0,-.5) node[state,scale=0.75]  {\normalsize$\rand$};
    \draw ([xshift=4.25pt]b.south east) to ++(0,-1) node[state,scale=0.75]  {\normalsize$g$};
    \node[morphism,scale=.75] (c) at (8,0) {\normalsize$act$}; 
    \draw (c.north) to ++(0,1) node[dot,fill=\copycolor] (d){};
    \draw ([xshift=-4.25pt]c.south west)to ++(0,-.5) node[right] {$c$} to ++(0,-.5)node[state,scale=0.75]  {\normalsize$\rand$};
    \draw ([xshift=4.25pt]c.south east) to ++(0,-1) node[state,scale=0.75]  {\normalsize$g$};
    \draw (d) to[in=-90,out=180] ++(-1,1) node[left]  {$g^c$};
    \draw (d) to[in=-90,out=0] ++(1,1) node[right] {$g^c$};
  \end{pic}
  \]
For this formulation, security is no longer immediate: to conclude that the protocol is secure, one must first transform the LHS of our prior security definition into the LHS of this equation. We leave developing such a pictorial proof for the interested reader.

It is well-known that DHKE is vulnerable to an attacker-in-the-middle, where Eve pretends to be Bob when communicating with Alice and vice versa, establishing a shared secret key with both. How does this look like in our approach, especially since DHKE as analyzed above was deemed secure? The reason for this is that in our simplistic setting, the channels used by Alice and Bob are authenticated, in that both parties know that any message sent along the channel goes to the other party as-is. Moreover, these starting resources do not allow Eve to send messages to Alice and Bob. Consequently, in our model Eve simply cannot perform such an attack. However, we could instead have our starting resources to be channels between the honest parties and Eve. One could then consider the DHKE protocol in this setting, where Eve is supposed to just wire the messages onward passively. In this case, the protocol is no longer secure. For instance, if Eve behaves as an attacker-in-the-middle, both Alice and Bob end up sharing a key with Eve and not with each other, whereas there is no attack Eve can perform on the ideal key resource with an indistinguishable end result.

\subsection{Composing with other protocols}

Now, DHKE gives a transformation
\[ \begin{pic}[scale=.4,yscale=-1]
    \node[dot,fill=\copycolor] (d) {};
    \draw (d) to +(0,1) node [left] {$A$};
    \draw (d) to[out=0,in=90] +(1,-1) node[right] {$B$};
    \draw (d) to[out=180,in=90] +(-1,-1) node[left] {$E$};
  \end{pic}
  \otimes
    \begin{pic}[scale=.4,yscale=-1]
    \node[dot,fill=\copycolor] (d) {};
    \draw (d) to +(0,1) node [left] {$B$};
    \draw (d) to[out=0,in=90] +(1,-1) node[right] {$A$};
    \draw (d) to[out=180,in=90] +(-1,-1) node[left] {$E$};
  \end{pic}
  \xrightarrow{DHKE}
    \begin{pic}[scale=.4]
    \node[dot,fill=\copycolor] (d) {};
    \draw (d) to +(0,-1) node[state,scale=0.75] {\normalsize$\rand$};
    \draw (d) to[out=180,in=-90] +(-1,1) node[left] {$A$};
    \draw (d) to[out=0,in=-90] +(1,1) node[right] {$B$};
  \end{pic}
  \]
and the OTP from the previous section gives a map
\[ \begin{pic}[scale=.4,yscale=-1]
    \node[dot,fill=\copycolor] (d) {};
    \draw (d) to +(0,1) node [left] {$A$};
    \draw (d) to[out=0,in=90] +(1,-1) node[right] {$B$};
    \draw (d) to[out=180,in=90] +(-1,-1) node[left] {$E$};
  \end{pic}
  \otimes
    \begin{pic}[scale=.4]
    \node[dot,fill=\copycolor] (d) {};
    \draw (d) to +(0,-1) node[state,scale=0.75] {\normalsize$\rand$};
    \draw (d) to[out=180,in=-90] +(-1,1) node[left] {$A$};
    \draw (d) to[out=0,in=-90] +(1,1) node[right] {$B$};
  \end{pic}
  \xrightarrow{OTP}
    \begin{pic}[scale=.4] \draw (0,0) node[left] {$A$} to (0,3) node[right] {$B$}; \end{pic}
  \]
so one would like to form the composite
\[\begin{pic}[scale=.4,yscale=-1]
    \node[dot,fill=\copycolor] (d) {};
    \draw (d) to +(0,1) node [left] {$A$};
    \draw (d) to[out=0,in=90] +(1,-1) node[right] {$B$};
    \draw (d) to[out=180,in=90] +(-1,-1) node[left] {$E$};
  \end{pic}
  \otimes
\begin{pic}[scale=.4,yscale=-1]
    \node[dot,fill=\copycolor] (d) {};
    \draw (d) to +(0,1) node [left] {$A$};
    \draw (d) to[out=0,in=90] +(1,-1) node[right] {$B$};
    \draw (d) to[out=180,in=90] +(-1,-1) node[left] {$E$};
  \end{pic}
  \otimes
    \begin{pic}[scale=.4,yscale=-1]
    \node[dot,fill=\copycolor] (d) {};
    \draw (d) to +(0,1) node [left] {$B$};
    \draw (d) to[out=0,in=90] +(1,-1) node[right] {$A$};
    \draw (d) to[out=180,in=90] +(-1,-1) node[left] {$E$};
  \end{pic}
  \xrightarrow{\id\otimes DHKE}
  \begin{pic}[scale=.4,yscale=-1]
    \node[dot,fill=\copycolor] (d) {};
    \draw (d) to +(0,1) node [left] {$A$};
    \draw (d) to[out=0,in=90] +(1,-1) node[right] {$B$};
    \draw (d) to[out=180,in=90] +(-1,-1) node[left] {$E$};
  \end{pic}
  \otimes
    \begin{pic}[scale=.4]
    \node[dot,fill=\copycolor] (d) {};
    \draw (d) to +(0,-1) node[state,scale=0.75] {\normalsize$\rand$};
    \draw (d) to[out=180,in=-90] +(-1,1) node[left] {$A$};
    \draw (d) to[out=0,in=-90] +(1,1) node[right] {$B$};
    \end{pic}
    \xrightarrow{OTP}
    \begin{pic}[scale=.4] \draw (0,0) node[left] {$A$} to (0,3) node[right] {$B$}; \end{pic}
  \]
There is only one problem: in the previous section we worked with the resource theory induced by  $\ncomb(\cat{FinStoch}^3)\xrightarrow{\ncomb(\otimes)}\ncomb(\cat{FinStoch})\xrightarrow{\hom(I,-)}\Set$, whereas in this section we replaced $\cat{FinStoch}$ with $\Eff$ and moved to security up to $\approx$. However, it is easy to lift the OTP to $\Eff$ for any efficient sequence of groups $G$. Indeed, if $G$ is our sequence of groups, it satisfies all the equations we used in the security proof of the OTP and hence results in a secure protocol in the resource theory of this section. Concretely, this amounts to thinking of the one-time pad as follows: Alice and Bob agree on an efficient sequence of groups $G$, and then OTP is a protocol that transforms a broadcasting map (of type $G$) and shared uniformly random key over $G$ into a secure communication channel (of type $G$). As a result, the above composite makes sense and is computationally secure by our composition theorems.

We now illustrate a further use of the composition theorems, similar to the example in~\cite{Mau11}. A major drawback of OTP, despite its perfect security, is the fact that one needs a key that is as long as the message. Now, Alice and Bob might want to use DHKE to obtain a key in the first place, but perhaps they want to communicate a longer message than allowed by the key.  If they agree on a pseudo-random number generator (PRNG) with their key as the seed, they can map the short key to a longer key. If the PRNG is computationally secure, then the end-result is computationally indistinguishable from a long key, depicted by
\[  \begin{pic}[scale=.4,yscale=-1]
    \node[dot,fill=\copycolor] (d) {};
    \draw (d) to +(0,1) node[state,scale=0.5] {short};
    \draw (d) to[out=0,in=90] ++(1,-1) node[morphism,scale=0.5] {PRNG} to ++(0,-1) node[right] {$B$};
    \draw (d) to[out=180,in=90] ++(-1,-1)  node[morphism,scale=0.5] {PRNG} to ++(0,-1) node[left] {$A$};
  \end{pic}\approx
  \begin{pic}[scale=.4,yscale=-1]
    \node[dot,fill=\copycolor] (d) {};
    \draw (d) to +(0,1) node[state,scale=0.5] {long};
    \draw (d) to[out=0,in=90] ++(1,-1) node[right] {$B$};
    \draw (d) to[out=180,in=90] ++(-1,-1) node[left] {$A$};
  \end{pic}
  \]
One could then form a long composite of protocols, where one first uses DHKE to obtain a key, then PRNG to expand it, and finally, the OTP to use it.

There is one difference with our analysis of OTP to that of~\cite{Mau11} which we should comment on: in~\cite{Mau11}, the resulting secret channel is slightly weaker, as it gives some information to Eve---namely, the length of the message. Concretely, this makes perfect sense, since Eve learns the length of the message upon seeing the ciphertext. Similarly, one might model the ideal shared key resulting from DHKE as a slightly weaker resource, that gives a uniformly random shared key to Alice and Bob, but lets Eve know the size of the key space.
Why is this feature seemingly absent from our analysis, even when we work with $\Eff$, where the message is no longer over a fixed message space? In a sense, this feature is implicit in our setting, as Eve's behavior can depend on the security parameter and consequently on the message/key length. If one thinks of Alice and Bob choosing the concrete value of the security parameter they actually use, this dependency lets Eve's behavior depend on the message length. However, it might be desirable to work in a slightly different category where this information leakage is made more explicit. We leave developing such an analysis for future work.

\section{No-go results}\label{sec:no-go}

Composable security is a stronger constraint than stand-alone security, and indeed many cryptographic functionalities are known to be impossible to achieve ``in the plain model'', \ie without set-up assumptions. A case in point is bit commitment, which was shown to be impossible in the UC-framework in~\cite{CF01}. This result was later generalized in~\cite{PR08} to show that any two-party functionality that can be realized in the plain UC-framework is ``splittable''. While the authors of~\cite{PR08} remark that their result applies more generally than just to the UC-framework, this wasn't made precise until~\cite{MR11}\footnote{Except that in their framework the 2-party case seems to require security constraints also when both parties cheat.}. We present a categorical proof of this result in our framework, which promotes the pictures ``illustrating the proof'' in~\cite{PR08} into a full proof---the main difference is that in~\cite{PR08} the pictures explicitly keep track of an environment trying to distinguish between different functionalities, whereas we prove our result in the case of perfect security and then deduce the asymptotic claim.

We now assume that $\CC$, our ambient category of interactive computations is compact closed\footnote{We do not view this as overtly restrictive, as many theoretical models of concurrent interactive (probabilistic/quantum) computation are compact closed~\cite{winskel:game,clairambault:gamesforquantum,clairambaultetal:gamesforquantum2}.}. As we are in the 2-party setting, we take our free computations to be given by $\CC^2$, and we consider two attack models: one where Alice cheats and Bob is honest, and one where Bob cheats and Alice is honest. We think of $\tinycup$ as representing a two-way communication channel, but this interpretation is not needed for the formal result.
\begin{thm}\label{thm:bipartite} For Alice and Bob (one of whom might cheat),
if a bipartite functionality $r$ can be securely realized from a communication channel between them, \ie from $\tinycup$, then there is a $g$ such that
\begin{equation}\label{eq:splittable}\tag{$*$}
\begin{pic}
    \node[state] (x) at (0,0) {$r$};
    \draw (x.A) to ++(0,.4) node[left] {$A$};
    \draw (x.B) to ++(0,.4) node[right] {$B$};
  \end{pic}=\begin{pic}\setlength\morphismheight{3mm}
  \setlength\minimummorphismwidth{3mm}
    \node[state] (x) at (0,0) {$r$};
     \node[state] (y) at (1,0) {$r$};
    \node[morphism] (f) at (0.5,0.4) {$g$};
    \draw (x.A) to ++(0,.6);
    \draw (x.B) to ([xshift=-4.25pt] f.south west);
    \draw (y.A) to ([xshift=4.25pt] f.south east);
    \draw (y.B) to ++(0,.6);
  \end{pic}.
  \end{equation}
\end{thm}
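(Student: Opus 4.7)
The plan is to apply the security definition twice (once for each malicious party) to extract two ``recovery'' maps that invert $f_A$ and $f_B$ against $r$, and then use the snake equation of compact closed categories to verify the splittability equation.

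First I would set up notation. Write $\tinycup$ as a state $I\to C\otimes C^*$ for some channel type $C$, with Alice controlling the $C$-wire and Bob the $C^*$-wire. The hypothesis provides $f_A\colon C\to A$ and $f_B\colon C^*\to B$ with $(f_A\otimes f_B)\circ\tinycup=r$. I work with the product attack models on $\CC^2$ with one factor maximal and the other minimal, corresponding to a single cheating party. For cheating Alice, the singleton $\{(\id[C],f_B)\}$ is an initial attack in the sense of Definition~\ref{def:initialattack}: any attack $(a,f_B)$ with $\dom(a)=C$ factors as $(a,\id[B])\circ(\id[C],f_B)$, and $(a,\id[B])$ lies in $\A(\id[A],\id[B])$ because $a\in\A_1(\id[A])$ (the maximal model on Alice's factor). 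Symmetrically $\{(f_A,\id[C^*])\}$ is initial for the cheating-Bob case. Theorem~\ref{thm:initialattacks} thus reduces the two security hypotheses to the existence of simulators for these two initial attacks.

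Security against cheating Alice then produces a map $s_A\colon A\to C$ with
\[
(s_A\otimes\id[B])\circ r=(\id[C]\otimes f_B)\circ\tinycup,
\]
and symmetrically cheating Bob yields $s_B\colon B\to C^*$ with
\[
(\id[A]\otimes s_B)\circ r=(f_A\otimes\id[C^*])\circ\tinycup.
\]
Both equations assert that one side of $r$ can be ``rewound'' to the underlying cup. To finish, set $g:=\tinycap\circ(s_B\otimes s_A)\colon B\otimes A\to I$, where $\tinycap\colon C^*\otimes C\to I$ is the cap. Substituting the two security equations into the right-hand side of $(*)$ converts $r\otimes r$ into $\bigl((f_A\otimes\id[C^*])\circ\tinycup\bigr)\otimes\bigl((\id[C]\otimes f_B)\circ\tinycup\bigr)$, with $\id[A]\otimes\tinycap\otimes\id[B]$ contracting the two middle wires. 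Sliding $f_A$ and $f_B$ out past the cap yields $(f_A\otimes f_B)\circ(\id[C]\otimes\tinycap\otimes\id[C^*])\circ(\tinycup\otimes\tinycup)$, which by the snake equation collapses to $(f_A\otimes f_B)\circ\tinycup=r$.

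The main obstacle is really the initial-attack bookkeeping: one must verify that $\id[C]$ (resp. $\id[C^*]$) qualifies as an initial attack, which uses the codomain-flexibility in Definition~\ref{def:attack} together with the maximality of $\A$ on the malicious factor. Once the initial attacks are pinned down, what remains is a short diagrammatic manipulation—graphically, two cups joined by a cap yank straight into a single cup, while the two simulators $s_A,s_B$ are absorbed by the honest operations $f_A,f_B$ to reproduce the protocol and hence~$r$.
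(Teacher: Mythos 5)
Your proposal is correct and follows essentially the same route as the paper's proof: extract the two simulators $s_A,s_B$ from security against each party's initial (forwarding) attack, take $g=\tinycap\circ(s_B\otimes s_A)$, and conclude by the snake equation. The only cosmetic differences are that you read the paper's chain of equalities from right to left and spell out the initiality bookkeeping that the paper has already established in general after Theorem~\ref{thm:initialattacks}.
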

\begin{proof}
A secure protocol transforming $\tinycup$ to $r$ consists of maps $f_A,f_B$ satisfying \[\begin{pic}
    \node[morphism,scale=.5,font=\normalsize] (f) at (0,0) {$f_A$};
    \node[morphism,scale=.5,font=\normalsize] (g) at (.5,0) {$f_B$};
    \draw (f.north) to +(0,.3);
    \draw (g.north) to +(0,.3);
    \draw (f.south)  to[out=-90,in=180] ++(0.25,-.25) to[out=0,in=-90] (g.south);
    \end{pic}=\begin{pic}
    \node[state] (x) at (0,0) {$r$};
    \draw (x.A) to ++(0,.3);
    \draw (x.B) to ++(0,.3);
  \end{pic}\]
and the security constraints against attacks by Alice or Bob. In particular, security against Alice's and Bob's initial attacks give us $s_A$ and $s_B$ such that
    \[\begin{pic}
    \node[morphism,scale=.5,font=\normalsize] (f) at (0,0) {$f_A$};
    \draw (f.north) to ++(0,.3);
    \draw (f.south)  to[out=-90,in=-180] ++(0.25,-.25) to[out=0,in=-90] ++(.25,.25) to ++(0,.6);
    \end{pic}=\begin{pic}
     \node[morphism,scale=.5,font=\normalsize] (g) at (.225,.35) {$s_B$};
    \node[state] (x) at (0,0) {$r$};
    \draw (x.A) to ++(0,.75);
    \draw (g.south) to  (x.B);
    \draw (g.north) to ++(0,.25);
  \end{pic}
 \quad \text{and} \quad
  \begin{pic}[xscale=-1]
    \node[morphism,scale=.5,font=\normalsize] (f) at (0,0) {$f_B$};
    \draw (f.north) to ++(0,.3);
    \draw (f.south)  to[out=-90,in=-180] ++(0.25,-.25) to[out=0,in=-90] ++(.25,.25) to ++(0,.6);
    \end{pic}=\begin{pic}[xscale=-1]
     \node[morphism,scale=.5,font=\normalsize] (g) at (.225,.35) {$s_A$};
    \node[state] (x) at (0,0) {$r$};
    \draw (x.B) to ++(0,.75);
    \draw (g.south) to (x.A);
    \draw (g.north) to ++(0,.25);
  \end{pic}   \quad \text{so that} \quad\begin{pic}
    \node[state] (x) at (0,0) {$r$};
    \draw (x.A) to ++(0,.3);
    \draw (x.B) to ++(0,.3);
  \end{pic}=\begin{pic}
    \node[morphism,scale=.5,font=\normalsize] (f) at (0,0) {$f_A$};
    \node[morphism,scale=.5,font=\normalsize] (g) at (.5,0) {$f_B$};
    \draw (f.north) to ++(0,.3);
    \draw (g.north) to ++(0,.3);
    \draw (f.south)  to[out=-90,in=180] ++(0.25,-.25) to[out=0,in=-90] (g.south);
    \end{pic}
  =\begin{pic}
    \node[morphism,scale=.5,font=\normalsize] (f) at (0,0) {$f_A$};
    \node[morphism,scale=.5,font=\normalsize] (g) at (1.5,0) {$f_B$};
    \draw (f.north) to ++(0,.3);
    \draw (g.north) to ++(0,.3);
    \draw (f.south)  to[out=-90,in=-180] ++(.25,-.25) to[out=0,in=-90] ++(.25,.25) to[out=90,in=180] ++(.25,.25) to [out=0,in=90] ++(.25,-.25) to[out=-90,in=180] ++(.25,-.25) to[out=0,in=-90] (g.south);
    \end{pic}
  =
\begin{pic}
     \node[morphism,scale=.5,font=\normalsize] (g) at (.23,.35) {$s_B$};
    \node[state] (x) at (0,0) {$r$};
    \node[morphism,scale=.5,font=\normalsize] (f) at (0.78,.35) {$s_A$};
    \node[state] (y) at (1,0) {$r$};
    \draw (x.A) to ++(0,.75);
    \draw (y.B) to ++(0,.75);
    \draw (g.south) to (x.B);
    \draw (g.north) to[out=90,in=-180] ++(0.275,.25) to [out=0,in=90] (f.north);
    \draw (f.south) to (y.A);
  \end{pic}\qedhere
  \]
\end{proof}

We now wish to use this result to rule out functionalities that do not satisfy this equation for any $g$. Notable examples of this include bit commitment, which lets Alice commit to a bit that will be revealed to Bob later, and oblivious transfer, which lets Alice choose two bits, of which Bob can choose to learn exactly one, with Alice not learning Bob's choice.

\begin{cor} Given a compact closed $\CC$ modeling computation in which wires model communication channels, (composable) bit commitment and oblivious transfer are impossible in that model without setup, even asymptotically in terms of distinguisher advantage.
\end{cor}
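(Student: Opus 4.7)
The plan has three stages: promote Theorem~\ref{thm:bipartite} to the approximate setting, exhibit a constant lower bound on how far bit commitment and oblivious transfer are from satisfying the splittability equation~(\ref{eq:splittable}), and combine the two to rule out asymptotically secure protocols. Throughout, I work in the $\Met$-enriched compact closed category $\CC$ suggested by the statement, with the metric coming from distinguisher advantage.

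First, I would prove an approximate version of Theorem~\ref{thm:bipartite}: if $(f_A,f_B)$ is an $\epsilon$-secure realization of $r$ from $\tinycup$ against both attack models considered, then there exists a morphism $g$ in $\CC$ such that the left-hand and right-hand sides of~(\ref{eq:splittable}) are within distance $O(\epsilon)$. The proof is a direct re-run of the chain of equalities in the proof of Theorem~\ref{thm:bipartite}, replacing each equality by an $\epsilon$-inequality: correctness of the protocol contributes one $\epsilon$, the security equations for Alice's and Bob's initial attacks contribute one each, and the triangle inequality combines them. Here the contractivity of short maps together with compact-closedness of $\CC$ ensure that bending wires and precomposing with $f_A, f_B, s_A, s_B$ does not inflate distances.

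Second, I would exhibit a universal constant $c>0$ such that neither bit commitment nor oblivious transfer admits any $g$ making the two sides of~(\ref{eq:splittable}) $c$-close in distinguisher advantage. For bit commitment, the distinguisher plays Alice: it commits to a uniform bit $b$, then opens. On the left the honest commitment returns $b$ with probability~$1$, while on the right Bob's interface is produced by a second copy of $r$ whose Alice interface is controlled by $g$ without access to $b$; by the hiding property built into~$r$, no $g$ can make the second copy open to $b$ with probability better than $1/2$, giving constant advantage. An analogous argument works for oblivious transfer: the distinguisher picks Alice's two bits uniformly and Bob's choice bit uniformly, then checks whether Bob's output matches the appropriate Alice-bit together with whether Alice learned Bob's choice; the right-hand side of~(\ref{eq:splittable}) forces Bob's output to be generated from only one copy of the OT functionality whose Alice-inputs are in turn generated by $g$, so the joint distribution deviates from the ideal one by a constant.

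Finally I would combine the two. If a sequence of protocols $(f_{A,n},f_{B,n})_{n\in\N}$ asymptotically realized bit commitment (resp.\ oblivious transfer) from $\tinycup$, then by the asymptotic composability machinery of Theorem~\ref{thm:metriccomposition} the protocols would be $\epsilon$-secure for any $\epsilon>0$ eventually; applying the approximate Theorem~\ref{thm:bipartite} would yield, for all sufficiently large $n$, morphisms $g_n$ witnessing $O(\epsilon)$-splittability, contradicting the constant lower bound $c$ from the second step. The main obstacle is keeping the approximate splittability lemma model-independent: one has to verify that the chain of substitutions in Theorem~\ref{thm:bipartite} stays contractive in whatever distinguisher-advantage metric is placed on $\CC$, which requires the metric to respect both composition and the compact-closed structure (cups and caps). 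Once this is in place, the contradiction is immediate and the no-go result follows at the level of generality claimed.
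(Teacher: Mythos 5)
Your proposal is correct and follows essentially the same route as the paper: the paper's own proof consists of exactly your second step (a distinguisher committing to a uniform bit $b$ forces $g$ on the right-hand side of~\eqref{eq:splittable} to commit without knowing $b$, giving a constant gap detected at reveal, with OT handled analogously), while your first and third steps—the $\epsilon$-approximate version of Theorem~\ref{thm:bipartite} obtained by replacing each equality in its proof chain with an $\epsilon$-inequality, and the contradiction with asymptotic security—are left implicit in the paper (which remarks only that it proves the perfect case ``and then deduce[s] the asymptotic claim''). Your write-up is a more explicit, but not essentially different, version of the same argument.
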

\begin{proof}
If $r$ represents bit commitment from Alice to Bob, it does not satisfy the equation required by Theorem~\ref{thm:bipartite} for any $g$, and the two sides of~\eqref{eq:splittable} can be distinguished efficiently with at least probability $1/2$. Indeed, take any $g$ and let us compare the two sides of~\eqref{eq:splittable}: if the distinguisher commits to a random bit $b$, then Bob gets a notification of this on the left hand-side, so that $g$ has to commit to a bit on the right side of~\eqref{eq:splittable} to avoid being distinguished from the left side. But this bit coincides with $b$ with probability at most $1/2$, so that the difference becomes apparent at the reveal stage. The case of OT is similar.
\end{proof}
We now discuss a similar result in the tripartite case, which rules out building a broadcasting channel from pairwise channels securely against any single party cheating. In~\cite{MMP+18} comparable pictures are used to illustrate the official, symbolically rather involved, proof, whereas in our framework the pictures are the proof. Another key difference is that~\cite{MMP+18} rules out broadcasting directly, whereas we show that any tripartite functionality realizable from pairwise channels satisfies some equations, and then use these equations to rule out broadcasting.

Formally, we are working with the resource theory given by $\CC^3\xrightarrow{\otimes}\CC\xrightarrow{\hom(I,-)}\Set$ where $\CC$ is an SMC, and reason about protocols that are secure against three kinds of attacks: one for each party behaving dishonestly while the rest obey the protocol. Note that we do not need to assume compact closure for this result, and the result goes through for any state on $A\otimes A$ shared between each pair of parties: we will denote such a state by $\tinycup$ by convention.

\begin{thm} If a tripartite functionality $r$ can be realized from each pair of parties sharing a state
 $\tinycup$, securely against any single party, then there are simulators $s_A,s_B,s_C$ such that
\[\begin{pic}
    \setlength\minimumstatewidth{15mm}
    \node[state] (r) at (0,0) {$r$};
    \node[morphism,scale=.5,font=\normalsize] (f) at (-.5,.35) {$s_A$};
    \draw ([xshift=-3.5pt]r.A) to (f.south);
    \draw ([xshift=3.5pt]r.B) to ++(0,.75);
    \draw ([xshift=-.5pt]f.north west) to ++(0,.25);
    \draw ([xshift=.5pt]f.north east) to ++(0,.25);
    \draw (r.center) to ++(0,.75);
  \end{pic}\quad=\quad
  \begin{pic}
    \setlength\minimumstatewidth{15mm}
    \node[state] (r) at (0,0) {$r$};
    \node[morphism,scale=.5,font=\normalsize] (h) at (0,.35) {$s_B$};
    \draw ([xshift=-3.5pt]r.A) to ++(0,.75);
    \draw ([xshift=3.5pt]r.B) to ++(0,.75);
    \draw ([xshift=-.5pt]h.north west) to ++(0,.25);
    \draw ([xshift=.5pt]h.north east) to ++(0,.25);
    \draw (r.center) to (h.south);
  \end{pic}\quad=\quad
  \begin{pic}
    \setlength\minimumstatewidth{15mm}
    \node[state] (r) at (0,0) {$r$};
    \node[morphism,scale=.5,font=\normalsize] (g) at (.5,.35) {$s_C$};
    \draw ([xshift=-3.5pt]r.A) to ++(0,.75);
    \draw ([xshift=3.5pt]r.B) to (g.south);
    \draw ([xshift=-.5pt]g.north west) to ++(0,.25);
    \draw ([xshift=.5pt]g.north east) to ++(0,.25);
    \draw (r.center) to ++(0,.75);
  \end{pic}\ .
  \]
\end{thm}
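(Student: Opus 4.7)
My plan is to invoke security against each of the three single-party attack models at its initial attack and then chain the resulting three equations. For each party $P \in \{A, B, C\}$, the initial attack is the one in which $P$ simply passes their two cup-halves through unchanged. By Theorem~\ref{thm:initialattacks} it is enough to exhibit simulators for these initial attacks, so I would take $s_A, s_B, s_C$ to be the simulators guaranteed by the three security conditions. Each one satisfies an equation asserting that the ``party $P$ idles, other two honest'' real-world picture equals $r$ composed with $s_P$ on port $P$; the output types line up because $P$'s two cup-halves have the same combined type as the two wires emitted by $s_P$.

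The first step is to unpack the hypothesis concretely: there exist maps $f_A, f_B, f_C$ with $(f_A \otimes f_B \otimes f_C) \circ (\tinycup_{AB} \otimes \tinycup_{AC} \otimes \tinycup_{BC}) = r$ (with appropriate braiding), and for each $P$, the corresponding security equation above. The second step is to argue that the three ``party $P$ idles'' real-world pictures coincide, which immediately forces the three ideal-side pictures to coincide as well. The cycling strategy is: starting from the identity-on-Alice picture, use Alice's security equation to rewrite it as $r$ with $s_A$ on port $A$; then, by reading the corresponding equation for Bob in reverse, re-express this state as the identity-on-Bob picture. Each security equation is reversible as a rewrite, so this should allow us to transport the ``idle'' from one party to another using the protocol equation as a bridge.

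The main obstacle is carrying out these substitutions without compact closure, since snake identities powered the bipartite argument in Theorem~\ref{thm:bipartite} but are explicitly unavailable here. The three idling pictures genuinely differ in which one of $\{f_A, f_B, f_C\}$ is absent, and one cannot simply ``bend a wire'' to swap them. Instead, the argument must rely entirely on the algebraic content of the three security equations together with the protocol equation. The symmetric three-cup structure is what should make this possible: every cup is touched by exactly two parties, so each $f_P$ is paired with a unique security equation that controls how it can be introduced or removed adjacent to $s_P$. Translating the explicit pictorial chain of~\cite{MMP+18} into this setting should complete the proof, with the delicate bookkeeping being the tracking of tensor factors and wire orderings as each rewrite is applied; but all steps remain within the symmetric monoidal structure of $\CC$, so no extra assumptions on $\CC$ beyond being an SMC are needed.
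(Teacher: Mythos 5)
Your plan does not go through, and the self-contradiction in it is the symptom: you first assert that ``the three `party $P$ idles' real-world pictures coincide,'' which would immediately finish the proof, and then correctly observe that they ``genuinely differ in which one of $\{f_A,f_B,f_C\}$ is absent.'' The latter is right, and it kills the chaining strategy. The three security conditions, instantiated at the three idling attacks, produce three equations whose \emph{left-hand sides are three different morphisms} (indeed of different types on the cheating party's port, since an idling Alice outputs her two raw cup-halves). There is no bridge between them: your ``cycling'' step would need to pass from $r$ composed with $s_A$ to $r$ composed with $s_B$ by reading Bob's equation in reverse, but Bob's equation only rewrites $r$-with-$s_B$, which is exactly what you do not yet have. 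Deferring the gap to ``translating the pictorial chain of~\cite{MMP+18}'' is not available either; the paper explicitly takes a different route from that reference (which rules out broadcasting directly rather than proving these three equations).

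The missing idea is a single, deliberately \emph{redundant} real-world diagram that can be parsed as an attack by each of the three parties simultaneously. Concretely, the paper draws $f_A$, $f_B$, $f_B$, $f_C$ in a row --- note $f_B$ appears twice --- with neighbouring boxes joined by cups and a long cup joining the outer legs of $f_A$ and $f_C$. Read one way, Alice is cheating: she runs $f_A$ and the first copy of $f_B$ (plus the wire between them) on her two cup-halves, while Bob and Charlie honestly run the second $f_B$ and $f_C$. Read another way, Bob is cheating and runs $f_B$ twice; read a third way, Charlie is cheating and runs $f_B$ and $f_C$. Each reading is a legitimate attack in the corresponding attack model, so security against each single party yields a simulator $s_A$, $s_B$, or $s_C$ for which $r$ composed with that simulator equals this \emph{one} morphism. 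All three right-hand sides are therefore equal to the same thing, and hence to each other --- no chaining or wire-bending is needed, which is also why compact closure is never invoked. Your instinct that everything must stay within the SMC structure is correct, but without this doubled diagram the three security equations never meet.
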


\begin{proof}
Any tripartite protocol building on top of each pair of parties sharing $\tinycup$ can be drawn as in the left side of
\[
\begin{pic}
    \setlength\morphismheight{3mm}
    \node[morphism] (f) at (0,0) {$f_A$};
    \node[morphism] (g) at (1,0) {$f_B$};
    \node[morphism] (h) at (2,0) {$f_C$};
    \draw (f.north) to ++(0,.3);
    \draw (g.north) to ++(0,.3);
    \draw (h.north) to ++(0,.3);
    \draw (f.south east)  to[out=-90,in=-90] (g.south west);
    \draw (g.south east)  to[out=-90,in=-90] (h.south west);
    \draw (f.south west)  to[out=-90,in=-90] (h.south east);
    \end{pic} \qquad\qquad
\begin{pic}
    \setlength\morphismheight{3mm}
    \coordinate (a) at (0.5,-.9);
    \node[morphism] (e) at (-1,0) {$f_A$};
    \node[morphism] (f) at (0,0) {$f_B$};
    \node[morphism] (g) at (1,0) {$f_B$};
    \node[morphism] (h) at (2,0) {$f_C$};
    \draw (e.north) to ++(0,.3);
    \draw (f.north) to ++(0,.3);
    \draw (g.north) to ++(0,.3);
    \draw (h.north) to ++(0,.3);
    \draw (e.south east)  to[out=-90,in=-90] (f.south west);
    \draw (f.south east)  to[out=-90,in=-90] (g.south west);
    \draw (g.south east)  to[out=-90,in=-90] (h.south west);
    \draw (e.south west)  to[out=-90,in=180] (a) to[out=0,in=-90] (h.south east);
    \end{pic}
   \]
Consider now the morphism in $\CC$ depicted on the right: it can be seen as the result of three different attacks on the protocol $(f_A,f_B,f_C)$ in $\CC^3$: one where Alice cheats and performs $f_A$ and $f_B$ (and the wire connecting them), one where Bob performs $f_B$ twice, and one where Charlie performs $f_B$ and $f_C$. The security of $(f_A,f_B,f_C)$ against each of these gives the required simulators.
\end{proof}

\begin{cor} Given an SMC $\CC$ modeling interactive computation, and a state $\tinycup$ on $A\otimes A$ modeling pairwise communication, it is impossible to build broadcasting channels securely (even asymptotically in terms of distinguisher advantage) from pairwise channels.
\end{cor}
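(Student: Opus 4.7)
The plan is to argue by contradiction using the tripartite theorem just proved. Model the broadcasting channel as a tripartite resource in which Alice chooses a message from some set $M$ with $|M| \geq 2$ that is then received identically by Bob and Charlie. Concretely, take $r$ to be a state on $M \otimes M \otimes M$ representing the broadcast of a uniformly random element of $M$, so that $r$ has the property that every pair of its three marginals is the perfectly-correlated uniform distribution (for the asymptotic statement, this representation will be compared to the real protocol's output only via distinguisher advantage, so perfect equality is not required of the protocol). Assume for contradiction a protocol built from pairwise $\tinycup$ states securely realizes $r$ against any single cheater.

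First, apply the tripartite theorem to extract simulators $s_A, s_B, s_C$ witnessing the three equalities between modified states. In each of the three equations, two of the four output wires of the common state are inherited directly from $r$ and thus carry one of the perfectly correlated pairwise marginals of~$r$. Cross-comparing the three equations forces every pair of wires in the common four-wire state to carry perfectly correlated copies of a single uniform random element; in particular one can read off that each $s_X$ must act as a deterministic duplication of its input wire, and the common state must be $(b,b,b,b)$ for $b$ uniform in $M$.

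The crucial step is to show that this common state cannot arise from the real-world "doubled attack" of the theorem's proof when the underlying resources are only pairwise $\tinycup$s. Re-interpret the doubled picture as an attack by Alice who controls both $f_A$ and $f_{B,1}$, treating the channel between them as internal. In this interpretation the honest Bob runs $f_{B,2}$ and the honest Charlie runs $f_C$; honest Bob's output depends only on what arrives on his A--B and B--C ports, both of which are fed by Alice's cheating boxes, while honest Charlie's output depends only on what Alice sends on the independent A--C channel (out of $f_A$). Alice can therefore set these two bundles of messages independently, and, using $|M|\geq 2$, she can produce an execution in which honest Bob's and honest Charlie's outputs differ with constant probability. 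This contradicts the constraint that the common state has perfectly correlated honest-output wires; the distinguisher that merely compares these two wires detects the discrepancy with advantage bounded below by a constant, which by Definition~\ref{def:secuptoindistinguishability} and the asymptotic variant (Corollary~\ref{cor:metricsimultaneoussafety}) rules out even asymptotic realizations.

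The main obstacle I expect is the last clause: arguing that Alice's cheating strategy really decorrelates the two honest outputs in \emph{every} protocol, not just the naive "forward the bit" one. The leverage comes from the fact that in $\CC^3$ Alice's local operation $f_A$ has syntactically independent output wires for the A--B and A--C channel endpoints, so an arbitrary cheating $f'_A$ is free to send whatever it wants on each, and by correctness of the protocol honest Bob's and honest Charlie's outputs are determined by exactly those wires (together with the B--C traffic, which Alice also controls via $f_{B,1}$ in the doubled picture). The remaining subtlety is that the protocol might try to abort or randomize in the presence of inconsistency; I would handle this by picking the distinguisher to measure a non-trivial statistic (e.g.\ probability that the two honest outputs lie in a prescribed diagonal subset of $M\times M$), which is $1$ in the ideal but bounded away from $1$ under Alice's decorrelating attack regardless of any abort/randomization strategy available from pairwise channels.
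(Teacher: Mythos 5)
There is a genuine gap, and it begins with your choice of ideal functionality. You model broadcast as the \emph{non-interactive} state $(b,b,b)$ with $b$ uniform on $M$ --- a perfectly correlated shared random triple with no input port for the sender. For that $r$ the three equations of the tripartite theorem are simultaneously \emph{satisfiable}: take each $s_X$ to be the copy map and the common four-wire state to be $(b,b,b,b)$, which is exactly what your ``cross-comparison'' derives. So the theorem yields no contradiction for your $r$, and you are forced to abandon it and argue instead that the real-world ``doubled'' circuit cannot denote $(b,b,b,b)$. That step does not work as stated: the common state is simply whatever state the fixed doubled circuit denotes, while your decorrelation argument is about \emph{other} attacks Alice could mount, not about that circuit; moreover, in the Alice-cheats reading of the doubled picture the B--C channel connects the two \emph{honest} boxes ($f_{B,2}$ and $f_C$), so Alice does not control that traffic as you claim. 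What remains is a protocol-independent ``some cheating Alice forces disagreement'' lemma that you yourself flag as the main obstacle and only sketch; establishing it amounts to re-proving the impossibility from scratch, which defeats the purpose of deducing the corollary from the theorem.

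The paper's proof avoids all of this by keeping the functionality \emph{interactive}: $r$ is a channel on which the broadcaster (Bob, the middle party, in the paper's labelling) has an \emph{input} port. In the common four-wire state the two middle wires are then two copies of that input port, and the environment plugs ``broadcast $0$'' into one and ``broadcast $1$'' into the other. The equation with the simulator on Alice's port forces Charlie to output $1$, the one with the simulator on Charlie's port forces Alice to output $0$, and the one with the simulator on the broadcaster's port forces Alice and Charlie to agree --- a contradiction read off directly from the three equations with no further analysis of the protocol, and since the three constraints cannot all hold simultaneously with high probability, the asymptotic claim follows as well. If you restore the input port to your $r$, the first half of your proposal turns into exactly this argument and the second half can be deleted.
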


\begin{proof} We show that a channel $r$ that enables Bob to broadcast an input bit to Alice and Charlie never satisfies the required equations for any $s_A,s_B,s_C$. Indeed, assume otherwise and let the environment
plug ``broadcast $0$'' and ``broadcast $1$'' to the two wires in the middle. The leftmost picture then says that Charlie receives $1$, the rightmost picture implies that Alice gets $0$ and the middle picture that Alice and Bob get the same output (if anything at all)---a contradiction. Indeed, one cannot satisfy all of these simultaneously with high probability, which rules out an asymptotic transformation.
\end{proof}

\section{On the choice of a model}\label{sec:models}

In this section we will discuss the choice of the category $\CC$ in which to work in. As we believe that one of the advantages of the current framework is how it allows one to separate the concern for composability from the choice of a model of computation, we do not think that choosing or building suitable models $\CC$ is central to the work at hand. Moreover, we do not believe that there is a single model suitable for all cryptographic purposes, as some assumptions that simplify the model mathematically might be justified for some tasks but not for others. Consequently, we will refer to the existing literature to explain where to find some models that are expressive enough for cryptography and result in an SMC, even when the original works are not expressed in categorical terms.

Note that in Section~\ref{sec:dhke}, we worked with $\ncomb(\Eff^3)$ equipped with the equivalence relation of computational indistinguishability. We believe that this a reasonable setting to study many protocols in classical cryptography, as many functionalities of interest can be modelled by shared (probabilistic) functions, and many protocols can be modelled as (shared) $n$-combs. Similarly, one could get an analogous setting for $k$-partite quantum cryptography by starting e.g. from efficient sequences of quantum channels before applying the $\ncomb$-construction. However, both of these categories have some limitations, of which we'll mention some:

\begin{itemize}
  \item In this setting, resources consist of shared functions. However, some cryptographic functionalities of interest fall outside of this: for instance, bit commitment is more naturally modelled as a two-round functionality.
  \item The shared resources used as building blocks are used in synchronized rounds.
  \item The number of rounds for a protocol is fixed in advance, as opposed to being decided \eg probabilistically during the protocol.
  \item The number of parties is fixed in advance, as opposed to some real-world protocols where the number of parties can change dynamically during the protocol.
\end{itemize}

Let us now explain why choosing (let alone building) a starting category is not trivial.
In general, the following desiderata pull in opposite directions
\begin{itemize}
  \item The computational model should be mathematically tractable.
  \item The computational model should be sufficiently expressive. One should be able to model both cryptographic protocols and realistic attacks in the model. In particular, any adversarial behavior simply assumed away creates room for side-channel attacks.
\end{itemize}
To elaborate on the required expressiveness, cryptographically reasonable models of computation should have most of the following features:
\begin{itemize}
  \item computation should be probabilistic (or quantum) as there's no security without randomness,
  \item in general, concurrent computation should be asynchronous: after all cryptography behaves differently in synchronous and asynchronous settings~\cite{BCG93,KMTZ13} and we cannot assume synchronicity throughout,
  \item cryptographic protocols need not have a fixed number of communication rounds, and might instead be repeated until a success condition occurs,
  \item the number of parties need not be fixed in advance, but can change during the protocol.
\end{itemize}

In particular, the requirement of asynchronous probabilistic computation causes some difficulties for modelling cryptography, as discussed in~\cite{MT13}. To paraphrase, the issue is that traditionally concurrency in asynchronous systems is modelled by nondeterminism, so that a system describes the set of all possible behaviors. Unfortunately, this does not work too well for cryptography, where one wants to bound the probability of a successful attack. Cryptographers often solve this by letting an adversary be responsible for scheduling. This is not always a reasonable assumption, and makes the resulting models inherently less compositional. However, one can still achieve categorical models that achieve most of these features, even if many existing models are not phrased categorically. Indeed, the aforementioned~\cite{MT13} builds a model of interactive, asynchronous probabilistic computation, where composition of computations is associative (resulting in a category), with the cartesian product of sets inducing a parallel composition operation (resulting in an SMC). The resulting framework is rich enough to study many cryptographic protocols such as broadcasting and secret sharing.

Another possible source of a model stems from cryptHOL~\cite{BLS20}, an approach to formally verified cryptographic protocols using higher-order-logic. While~\cite{BLS20} does not give an explicit category of probabilistic and stateful computations, we expect it to induce one despite not having verified this in detail. We base our optimism on the fact that the monadic and coalgebraic techniques used in~\cite{BLS20} are inherently categorical. Moreover cryptHOL has also been used to formalize constructive cryptography~\cite{LSB+19}, and as a consequence one would expect that the same model could be phrased in categorical terms. We leave verifying this for future work as it is outside the scope of this paper.

Additional models can be found in the literature on programming language theory, and particularly from game semantics~\cite{winskel:game}. While these models are more often expressed categorically, and even result in compact closed categories so that Theorem~\ref{thm:bipartite} applies directly, they run the risk of being too involved mathematically, at least in order to gain mainstream traction among cryptographers.

We now turn our attention to modelling quantum cryptography. Again, one can use models coming from game semantics~\cite{clairambault:gamesforquantum,clairambaultetal:gamesforquantum2} with the aforementioned caveats. Another model, this time built by and for quantum cryptographers, can be obtained from~\cite{portmann:causal}. The model is rather general, as it is intended to be also suitable for relativistic quantum protocols, and indeed it has been used for such purposes in~\cite{VPD19}. As the constructions are rather involved, we restrict ourselves to giving a high-level description. As~\cite{portmann:causal} is not expressed in categorical words, we will explain why their model can equally well be interpreted as a category.

In this work, the authors work over a countable but otherwise arbitrary partial order~$T$, and then  define for each (finite-dimensional) Hilbert space $A$ a new Hilbert-space $F(A)$ modelling a wire with $A$-messages being sent/received at times given by $t\in T$. They then define causal maps  $A\to B$ as certain families of CPTP-maps $\{F(A)\to F(B)_C\}$ where $C$ ranges over downward-closed subsets of $T$---the intuition being that for a causal map the output at time $t\in T$ depends only on the past of $t$ with some delay. The authors then define two ways of combining causal boxes:
\begin{itemize}
\item  parallel composition, which we will denote by $\otimes$,  which takes  causal boxes $A\to B$ and $C\to D$ and produces a causal box $A\otimes C\to B\otimes D$.
\item An internal wiring operation, which takes as input a causal box $A\otimes B\to B\otimes C$, and produces a causal box $A\otimes C$. The intuition is that the resulting causal box is obtained by wiring the output $B$-wire into the the input $B$-wire.
\end{itemize}

Moreover, the authors prove that these composition operations satisfy ``composition order invariance'', so that adding loops commutes with parallel composition, and the order in which loops was added does not affect the end result. While the authors phrase their constructions in terms of the algebraic theory of systems of~\cite{MMP+18}, the parallel composition operation they define is in fact a total operation, whereas in a system algebra $f\otimes f$ is never defined~\cite[Definition 3.1]{MMP+18}. This allows us to extract a category from~\cite{portmann:causal}. The objects of this category are given by (finite-dimensional) message spaces $A,B,C,\dots$, and morphisms $A\to B$ are given by $T$-causal maps $A\to B$. The composite of $f\colon A\to B$ and $g\colon B\to C$ is given internally wiring the $B$-ports together in $g\otimes f$, pictorially represented by
 \[\begin{pic}
    \node[morphism] (f) {$g \circ f$};
    \draw (f.south) to ++(0,-.65) node[right] {$A$};
    \draw (f.north) to ++(0,.65) node[right] {$C$};
  \end{pic}
  \enspace:=\enspace
  \begin{pic}
    \node[morphism] (f) at (0,0) {$f$};
    \node[morphism] (g) at (2,0) {$g$};
    \draw (f.south) to ++(0,-.65) node[right] {$A$};
    \draw (f.north) to[in=180,out=90] ++(.5,.5) to[in=90,out=0] ++(.5,-.5) to ++(0,-.61) to[in=180,out=-90] ++(.5,-.5) to[in=-90,out=0] (g.south);
    \draw (g.north) to ++(0,.65) node[right] {$D$};
  \end{pic}\]

Now, composition order invariance implies that this composition operation is associative. As identity-maps are \emph{not} causal (informally this is because they're delay-free), this results only in a semicategory---\ie a category-like structure without identities. However, we can formally add identities, resulting in a symmetric monoidal category.\footnote{ One might be tempted to guess that \emph{any} model of abstract cryptography yields a category in an analogous manner. We make the more reserved but less precise guess that this holds for those models of abstract cryptography that are ``reasonable'' or arise ``naturally''. However, we don't think there exists a simple construction of an SMC directly from any (composition-order-invariant) system algebra in the sense of~\cite{MMP+18}: for one, in a system algebra $f\otimes f$ is never defined, whereas in a monoidal category $f\otimes f$ is always defined. Rather, we believe that natural sources of system algebras are also natural sources of SMCs.} Moreover, the authors equip these causal boxes with an explicit pseudometric. This pseudometric is a cryptographically well-motivated one, as the distance between $f$ and $g$ is defined in terms of the ability of an environment to guess whether it interacts with $f$ or $g$. Consequently one can also apply the asymptotic definitions of Sections~\ref{sec:metric} and~\ref{sec:metricsecurity} in this category.

\section{Outlook}\label{sec:outlook}

We have presented a categorical framework providing a general, flexible and mathematically robust way of reasoning about composability in cryptography. Besides contributing a further approach to composable cryptography and potentially helping with cross-talk and comparisons between existing approaches~\cite{CKL+19}, we believe that the current work opens the door for several further questions.

First, due to the generality of our approach we hope that one can, besides honest and malicious participants, reason about more refined kinds of adversaries composably. Indeed, we expect that Definition~\ref{def:attack}
is general enough to capture \eg honest-but-curious adversaries\footnote{Heuristically speaking this is the case: an honest-but-curious attack on $g\circ f$ should be factorizable as one on $g$ and one on $f$, and similarly an honest-but-curious attack on $g\otimes f$ should be factorizable into ones on $g$ and~$f$ that then forward their transcripts to an attack on $\id\otimes\id$.}. It would also be interesting to see if this captures even more general attacks, \eg situations where the sets of participants and dishonest parties can change during the protocol. This might require understanding our axiomatization of attack models more structurally and perhaps generalizing it. Does this structure (or a variant thereof) already arise in category theory? While we define an attack model on a category, perhaps one could define an attack model on a (strong) monoidal functor~$F$, the current definition being recovered when $F=\id$. Another approach would be to generalize the definitions and results of Section~\ref{sec:extensions} into general enriched notions of an attack model and of security, with Section~\ref{sec:extensions} then giving the special cases enriched over $\Equ$ and $\Met$.

Second, we expect that rephrasing cryptographic questions categorically would enable more cross-talk between cryptography and other fields already using category theory as an organizing principle. For instance, many existing approaches to composable cryptography develop their own models of concurrent, asynchronous, probabilistic and interactive computations. As categorical models of such computation exist in the context of game semantics~\cite{winskel:game,clairambault:gamesforquantum,clairambaultetal:gamesforquantum2}, one is left wondering whether the models of the semanticists' could be used to study and answer cryptographic questions, or conversely if the models developed by cryptographers contain valuable insights for programming language semantics.

Besides working inside concrete models---which ultimately blends into ``just doing composable cryptography''---one could study axiomatically how properties of a category relate to cryptographic properties in it. As a specific conjecture in this direction, if one has an environment structure~\cite{coecke:environment}, \ie coherent families of maps $\tinyground_A$ for each $A$ that axiomatize the idea of deleting a system, one might be able to talk about honest-but-curious adversaries at an abstract level. Similarly, having agents purify their actions is an important tool in quantum cryptography~\cite{LC97}---can categorical accounts of purification~\cite{chiribella:purification,cunningham:purity,coecke:environment} be used to elucidate this?

Finally, we hope to get more mileage out of the tools brought in with the categorical viewpoint. For instance, can one prove further no-go results pictorially? More specifically, given the impossibility results for two and three parties, one wonders if the ``only topology matters'' approach of string diagrams can be used to derive general impossibility results for $n$ parties sharing pairwise channels. Similarly, while diagrammatic languages have been used to reason about positive cryptographic results in the stand-alone setting~\cite{kissinger2017picture,breiner:graphicaldicrypto,breiner:selftesting}, can one push such approaches further now that composable security definitions have a clear categorical meaning? Besides the graphical methods, thinking of cryptography as a resource theory suggests using resource-theoretic tools such as monotones. While monotones have already been applied in cryptography~\cite{WW08:monotones}, a full understanding of cryptographically relevant monotones is still lacking.

\bibliographystyle{alphaurl}
\bibliography{categoricalcryptolmcs}

\end{document}